\RequirePackage{snapshot}
\documentclass[acmtog,authorversion,natbib=false]{acmart}
\usepackage[utf8]{inputenc}

\usepackage{mathtools}
\usepackage{stmaryrd}
\usepackage{breqn}
\usepackage{color}
\usepackage[inline]{enumitem}
\usepackage{siunitx}

\usepackage{booktabs}
\usepackage{tabularx}
\usepackage{tabulary}
\usepackage{tabularray}

\usepackage{graphbox}
\usepackage{wrapfig}
\usepackage{pgfplots}
\pgfplotsset{compat=1.18}
\usepackage{pgfplotstable}
\usepackage{currfile}

\usetikzlibrary{external}
\tikzexternalize[prefix=figures/]

\usepackage[ruled]{algorithm2e}
\SetAlFnt{\small}
\SetAlCapFnt{\small}
\SetAlCapNameFnt{\small}
\SetAlCapHSkip{0pt}
\IncMargin{-\parindent}

\PassOptionsToPackage{breaklinks}{hyperref}
\usepackage[capitalize,nameinlink]{cleveref}

\newcommand{\R}{\mathbb{R}}
\newcommand{\Sph}{\mathbf{S}}
\newcommand{\Lie}{\mathcal{L}}
\newcommand{\dext}{\mathrm{d}}
\newcommand{\Dext}{\mathrm{D}}
\newcommand{\SO}{\mathrm{SO}}
\DeclareMathOperator{\id}{id}
\DeclareMathOperator{\Span}{span}
\DeclareMathOperator{\gr}{graph}
\DeclareMathOperator{\Area}{Area}
\DeclareMathOperator{\arcsinh}{arcsinh}
\DeclareMathOperator{\tr}{tr}
\newcommand{\Mass}{\mathbf{M}}
\newcommand{\subj}{\text{s.t.}}
\newcommand{\dist}{\mathrm{dist}}

\newcommand{\hodge}{{\star}}

\newcommand{\dsc}{\mathsf}
\newcommand{\dscm}{\mathbf}

\newcommand{\vect}[1]{\mathbf{#1}}
\newcommand{\vvf}{\vect{t}}

\DeclareMathOperator*{\argmin}{arg\,min}
\DeclareMathOperator*{\argmax}{arg\,max}

\newtheorem{definition}{Definition}
\newtheorem{proposition}{Proposition}

\setcopyright{rightsretained}
\acmJournal{TOG}
\acmYear{2024}
\acmVolume{43}
\acmNumber{4}
\acmArticle{60}
\acmMonth{7}
\acmDOI{10.1145/3658198}

\acmSubmissionID{767}

\RequirePackage[
  backend=biber,
  datamodel=acmdatamodel,
  style=acmauthoryear,
  uniquelist=false,
  maxsortnames=2,
  maxcitenames=2,
  sorting=nyt
  ]{biblatex}
  
\addbibresource{main-filtered.bib}

\begin{document}

\title{Lifting Directional Fields to Minimal Sections}

\author{David Palmer}
\email{dpalmer@seas.harvard.edu}
\orcid{0000-0002-1931-5673}
\affiliation{%
  \institution{Harvard University}
  \streetaddress{29 Oxford St 307A}
  \city{Cambridge}
  \state{Massachusetts}
  \country{USA}
  \postcode{02138}
}

\author{Albert Chern}
\email{alchern@ucsd.edu}
\orcid{0000-0002-9802-3619}
\affiliation{%
  \institution{University of California San Diego}
  \streetaddress{EBU3B, Room 4112}
  \streetaddress{9500 Gilman Dr, MC 0404}
  \city{La Jolla}
  \state{California}
  \postcode{92093-0404}
  \country{USA}
}

\author{Justin Solomon}
\email{jsolomon@mit.edu}
\orcid{0000-0002-7701-7586}
\affiliation{%
  \institution{Massachusetts Institute of Technology}
  \streetaddress{32 Vassar St, 32-D460}
  \city{Cambridge}
  \state{Massachusetts}
  \country{USA}
  \postcode{02139}}

\begin{abstract}
Directional fields, including unit vector, line, and cross fields, are essential tools in the geometry processing toolkit. The topology of directional fields is characterized by their singularities. While singularities play an important role in downstream applications such as meshing, existing methods for computing directional fields either require them to be specified in advance, ignore them altogether, or treat them as zeros of a relaxed field. While fields are ill-defined at their singularities, the graphs of directional fields with singularities are well-defined surfaces in a circle bundle. By lifting optimization of fields to optimization over their graphs, we can exploit a natural convex relaxation to a \emph{minimal section} problem over the space of currents in the bundle. This relaxation treats singularities as first-class citizens, expressing the relationship between fields and singularities as an explicit boundary condition.  As curvature frustrates finite element discretization of the bundle, we devise a hybrid spectral method for representing and optimizing minimal sections. Our method supports field optimization on both flat and curved domains and enables more precise control over singularity placement.

\end{abstract}

\begin{CCSXML}
<ccs2012>
   <concept>
       <concept_id>10010147.10010371.10010396.10010402</concept_id>
       <concept_desc>Computing methodologies~Shape analysis</concept_desc>
       <concept_significance>500</concept_significance>
       </concept>
   <concept>
       <concept_id>10010147.10010371.10010396.10010398</concept_id>
       <concept_desc>Computing methodologies~Mesh geometry models</concept_desc>
       <concept_significance>300</concept_significance>
       </concept>
   <concept>
       <concept_id>10002950.10003714.10003716.10011138.10010043</concept_id>
       <concept_desc>Mathematics of computing~Convex optimization</concept_desc>
       <concept_significance>300</concept_significance>
       </concept>
   <concept>
       <concept_id>10002950.10003714.10003727.10003729</concept_id>
       <concept_desc>Mathematics of computing~Partial differential equations</concept_desc>
       <concept_significance>300</concept_significance>
       </concept>
   <concept>
       <concept_id>10002950.10003714.10003715.10003750</concept_id>
       <concept_desc>Mathematics of computing~Discretization</concept_desc>
       <concept_significance>300</concept_significance>
       </concept>
   <concept>
       <concept_id>10002950.10003714.10003715.10003749</concept_id>
       <concept_desc>Mathematics of computing~Mesh generation</concept_desc>
       <concept_significance>300</concept_significance>
       </concept>
 </ccs2012>
\end{CCSXML}

\ccsdesc[500]{Computing methodologies~Shape analysis}
\ccsdesc[300]{Computing methodologies~Mesh geometry models}
\ccsdesc[300]{Mathematics of computing~Convex optimization}
\ccsdesc[300]{Mathematics of computing~Partial differential equations}
\ccsdesc[300]{Mathematics of computing~Discretization}
\ccsdesc[300]{Mathematics of computing~Mesh generation}

\keywords{directional field, cross field, singularity, circle bundle, lifting, convex relaxation, minimal section, minimal current, minimal surface, computational geometric measure theory}

\begin{teaserfigure}
\newcommand{\imgwidth}{0.185\textwidth}
\centering
\begin{tabulary}{\textwidth}{@{}CCCCC@{}}
\includegraphics[width=\imgwidth]{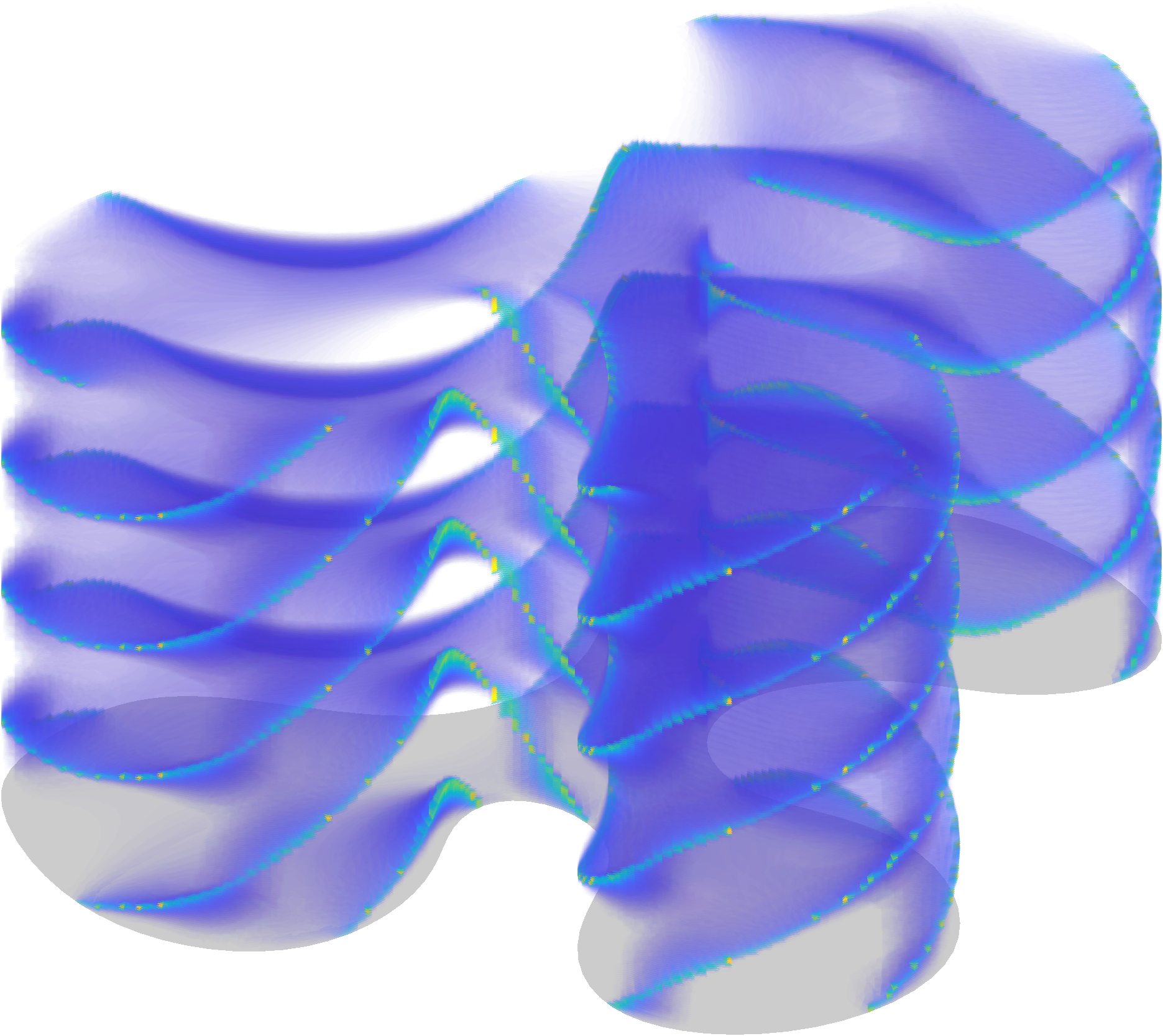}&%
\includegraphics[width=\imgwidth]{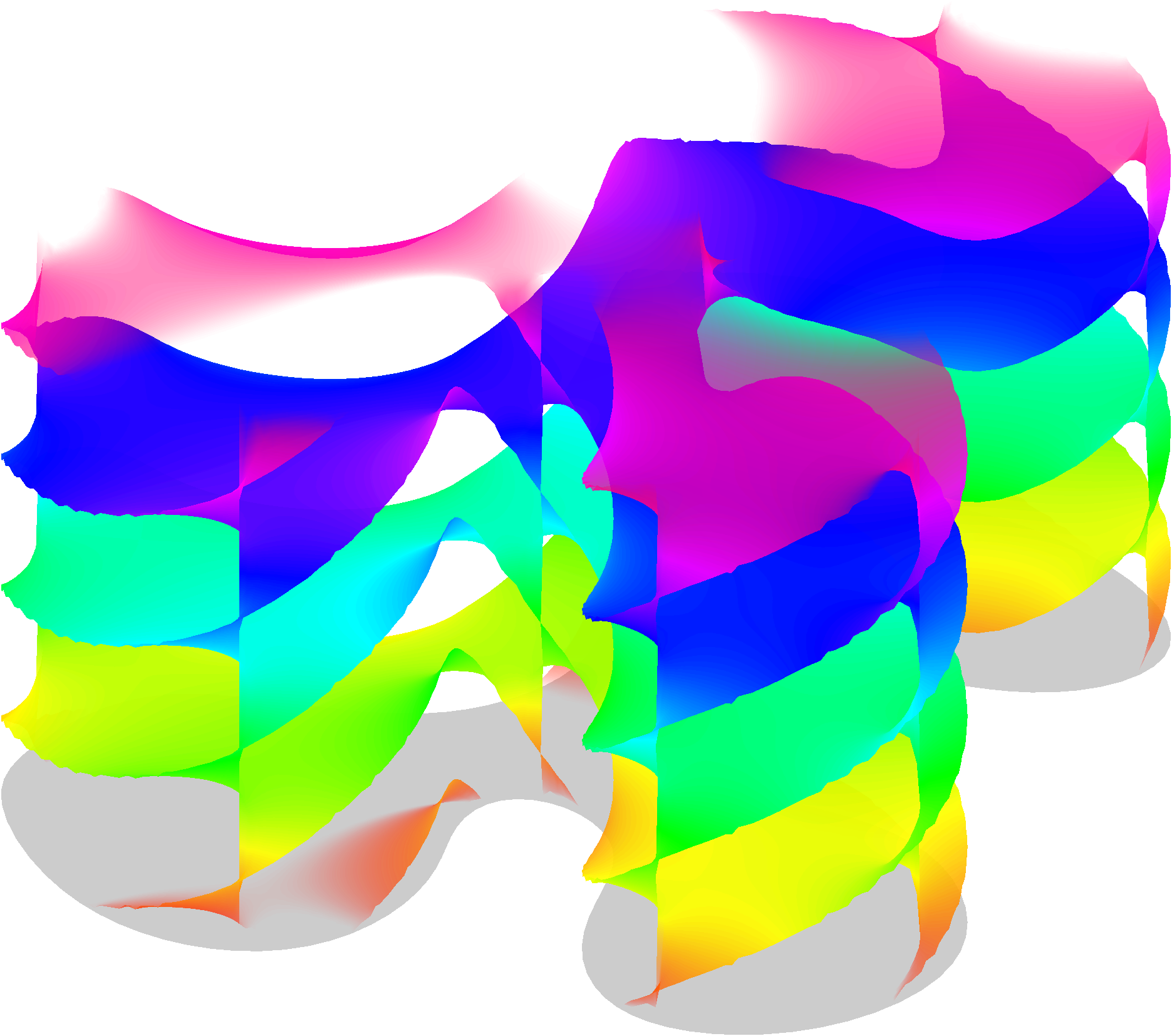}&%
\includegraphics[width=\imgwidth]{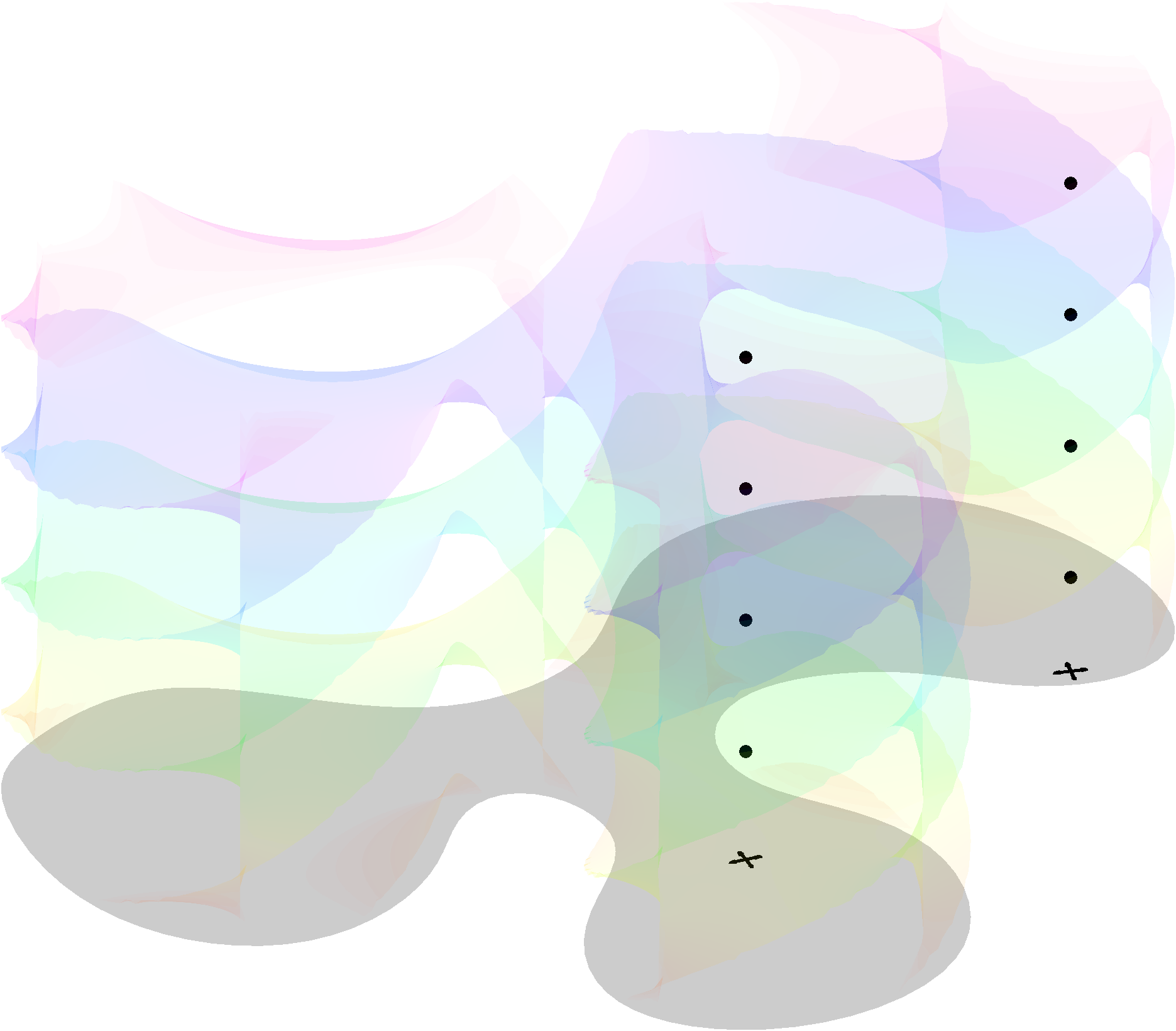}&%
\includegraphics[width=\imgwidth]{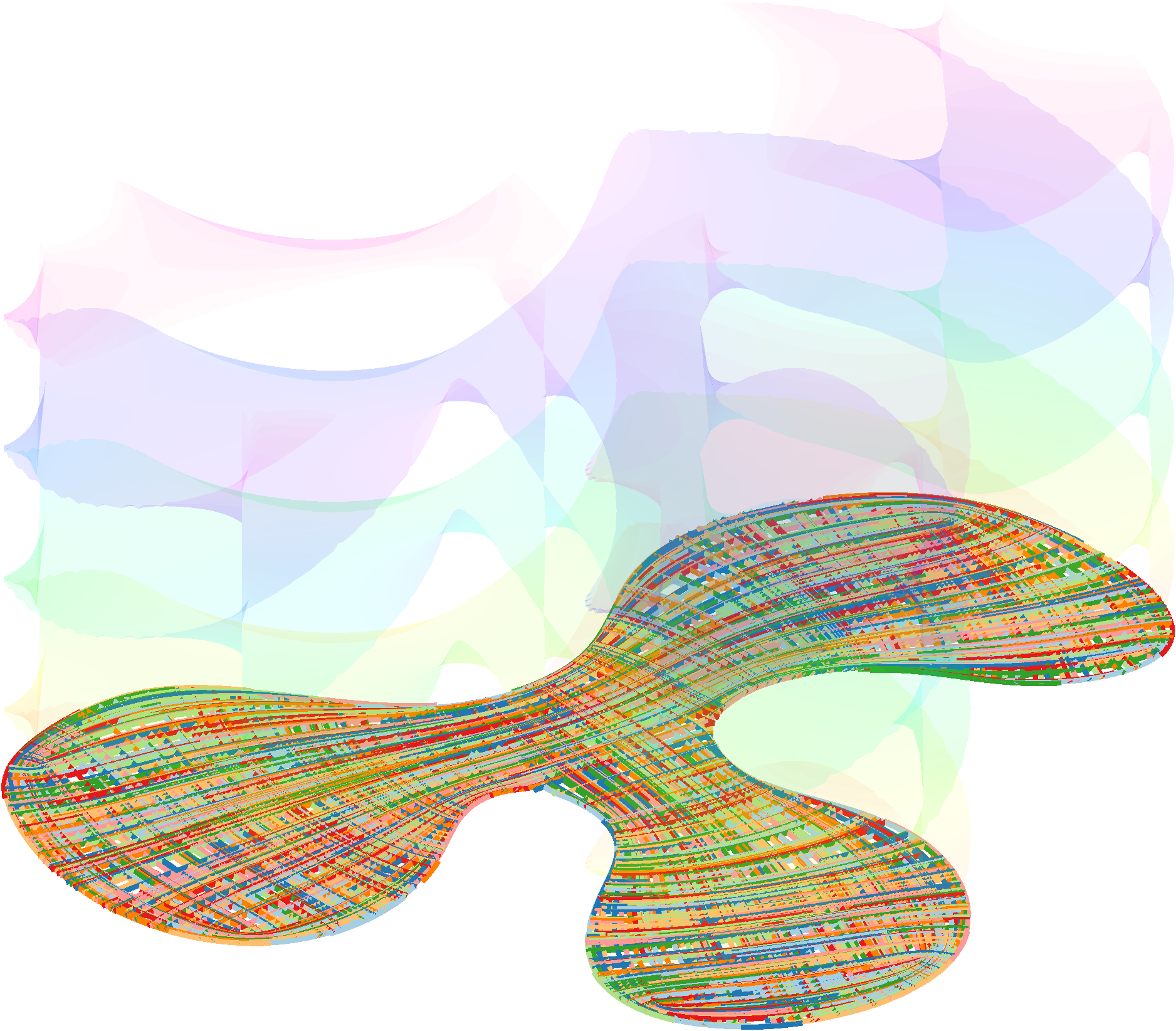}&%
\includegraphics[width=\imgwidth]{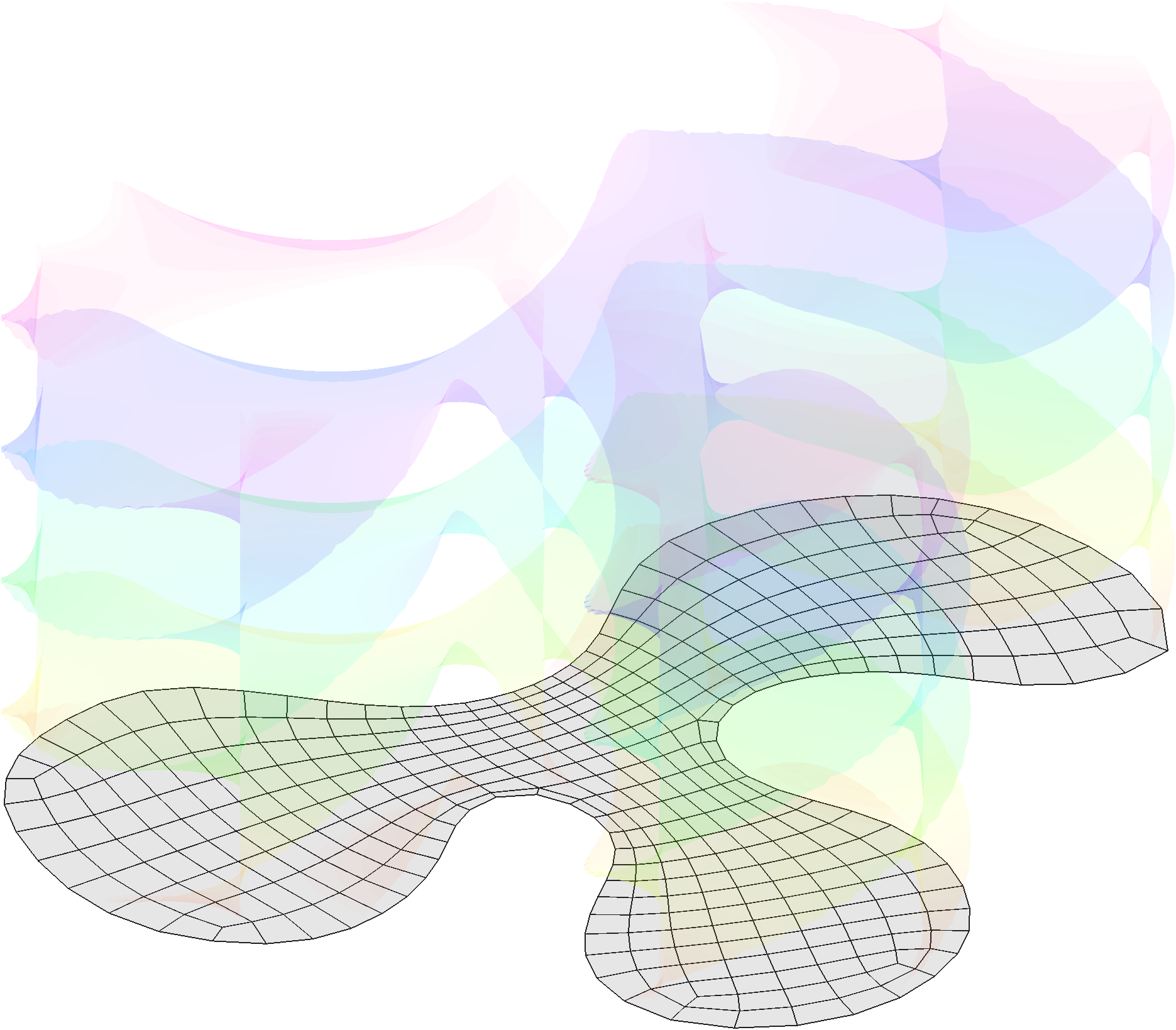}\\
{\footnotesize minimal current} & {\footnotesize extracted section} & {\footnotesize field values} & {\footnotesize integral curves} & {\footnotesize quad mesh}
\end{tabulary}
\caption{We compute directional fields by solving a convex relaxation of a minimal section problem in a circle bundle. After the field values are extracted from the computed section, they can be used for downstream applications such as quadrilateral meshing.}
\label{minsec:fig:teaser}

\end{teaserfigure}

\maketitle

\section{Introduction}
Directional fields (unit vector fields, line fields, cross fields, and their generalizations) are central to geometry processing pipelines. Unit vector fields and line fields are used in geodesic distance computations and texture synthesis. In field-based quadrilateral meshing, cross fields encode the spatially-varying orientations of quad elements and the placement of crucial singular points.

The presence of singularities is a robust feature of directional fields. Indeed, on closed surfaces, the Poincar\'e-Hopf index theorem guarantees that they will occur. In condensed matter physics and materials science, singularities are known as \emph{topological defects}, and their dynamics are crucial to understanding phenomena including plastic deformation and phase transitions.

Given the crucial role singularities play in the geometry of fields, optimizing the placement of singularities would seem a natural goal. But singularities pose a challenge for discrete variational methods. Existing approaches to field optimization largely avoid dealing with singularities, either requiring their placement to be specified in advance; modifying the space of field values so that singularities become zeros of a relaxed field; or ignoring the problem entirely, letting singularities slip through the cracks of a discretization. This is because singularities contradict a key assumption of these variational approaches: while traditional field optimization treats a field as a smooth function defined everywhere on its domain, singularities are precisely points where the field does not have a well-defined orientation.

In this work, we develop an approach to field optimization that
\begin{enumerate*}[label=(\roman*)]
	\item treats singularities as first-class citizens and explicitly represents their distribution as an optimization variable;
	\item optimizes a functional that is well-defined even in the presence of singularities; and
	\item solves a convex optimization problem to global optimality.
\end{enumerate*}
The main insight enabling our approach is to lift optimization of a field to optimization of its \emph{graph}, a surface in a fiber bundle sitting over the domain. From this point of view, singularities get lifted to boundary components of this surface. Optimization over the surface and its boundary can be treated as a \emph{minimal section} problem, related to the classical minimal surface problem. This problem admits a natural convex relaxation, replacing curves and surfaces by \emph{currents}. Over the space of currents, the minimal section problem becomes minimization of a convex norm subject to linear boundary constraints.

While this relaxation is convex and conceptually simple, making it practical requires representing currents on a circle bundle over a possibly curved surface. In general, such a bundle is not globally trivializable (expressible as a product space). Even locally, fibers cannot be ``glued together'' in a consistent way---this is the meaning of curvature. These considerations make it challenging to discretize such a bundle using finite elements. Instead, we exploit the fact that the fibers are circles and use Fourier decomposition in the vertical direction. It turns out that the relevant differential operators on the full bundle naturally decompose into covariant operators on the base surface, one for each frequency band. These operators can then be discretized with linear finite elements on the base, yielding a simple and fast \textsc{admm} algorithm whose most computationally expensive step parallelizes across frequencies.

In a particular limit under a fiber translation symmetry constraint, our optimization problem reduces to a sparse Poisson recovery problem like that posed in several previous works on computing cones for conformal mapping and quadrilateral meshing. Our work not only generalizes these methods, but also addresses the open problem of how to encourage cone-angle quantization in a convex manner.
Moreover, our explicit representation of the singularity measure makes it simple to impose constraints on singularity placement, such as restricting singularities to a subdomain, that would be difficult to enforce in previous frameworks.

Along with our convex relaxation, we propose an accompanying field extraction procedure that rounds a possibly diffuse current to a sharply-defined field. Like previous work in directional field design relying on convex relaxation, we offer no formal guarantees of exact recovery. But while previous work employs objective functionals that are ill-defined in the presence of singularities prior to relaxation, our objective is meaningful in both the unrelaxed and relaxed settings. We evaluate the quality of relaxed solutions in \Cref{minsec:sec:results} and leave to future work the problem of characterizing necessary conditions for exact recovery or bounding the rounding error.

In summary, using the tools of computational geometric measure theory, we provide a practical and efficient convex relaxation algorithm for optimization of directional fields with singularities.

\section{Related Work}

\subsection{Directional Fields and Singularities}

Directional fields are fields of collections of vectors tangent to a manifold. They include ordinary vector fields, unit vector fields, line fields, cross fields, and higher-multiplicity generalizations. They are central to many geometry processing algorithms, including to field-based quadrilateral and hexahedral meshing. \citet{vaxman_directional_2016} survey the construction and application of directional field design algorithms.

Methods for field design fall into two broad categories: connection methods, which require singularities to be placed in advance, and \emph{ab initio} methods, which place singularities automatically. Connection methods seek a \emph{trivial connection}, which is flat everywhere except at the specified singular points, around which its holonomies are specified \cite{craneTrivialConnectionsDiscrete2010,de2010trivial}. Connection methods have the advantage of only needing to solve a simple convex optimization problem. On the other hand, they leave the challenging task of optimizing singularity placement to the user. Recent work has generalized such methods to volumetric frame fields, though at the cost of convexity \cite{cormanSymmetricMovingFrames2019}.

In contrast, \emph{ab initio} field optimization methods treat directional fields as smooth functions (or sections) and represent singularities implicitly. However, unit directional fields with singularities have infinite Dirichlet energy, necessitating some kind of relaxation. Existing field optimization methods employ a relaxation in which field values are allowed to vary in their modulus. Scaling to zero at singularities renormalizes the energy at the cost of redefining the space of fields \cite{knoppel_globally_2013}. To try to preserve normalization, one can sacrifice convexity and adopt a penalty term of the Ginzburg-Landau type \cite{viertel_approach_2019}. Ginzburg-Landau--like approaches have recently been extended to three dimensions for the purpose of hexahedral meshing \cite{palmer_algebraic_2020,chemin}, where the penalty term becomes more complicated.

Thus existing methods either require singularities to be specified in advance or leave them implicit, whence they ``slip through the cracks'' of the computational mesh. In our approach, instead of relaxing the space in which our fields are valued, we relax optimization over fields to optimization over their graphs, viewed as currents. In doing so, we are able to make explicit the relationship between fields and singularities.

In explicitly modeling the distribution of singularities as a signed measure, this work bears some relation to a line of work viewing cross fields and their singularities from the perspective of conformal mapping. \citet{bunin_continuum_2008} proposes to view an infinitely fine quadrilateral mesh as a Riemannian metric whose curvature is flat except at a sparse set of \emph{cone singularities}, which coincide with the singularities of the mesh. As such, their total curvature is \emph{quantized} to multiples of $\pi/2$. In two dimensions, such a metric can be parametrized (relative to a fixed initial metric) by its log conformal factor, a scalar function related to the difference of curvatures by a Poisson equation (the Yamabe equation). Thus, the problem of computing a (continuous limit of a) quadrilateral mesh is reduced to a sparse, quantized inverse Poisson problem.

\citet{soliman_optimal_2018} treat the related problem of conformal mapping with arbitrary cone singularities (relaxing the quantization requirement) via convex optimization of a sparse inverse Poisson problem. As their work is targeted toward general conformal mapping, it does not produce quantized cones and thus cannot be used directly for quadrilateral meshing. In \Cref{minsec:subsec:symmetry}, we explain how a sparse inverse Poisson problem like that studied in \citet{soliman_optimal_2018} can be viewed as a limiting case of our minimal section problem.

In contrast to the convex method of \citet{soliman_optimal_2018}, if one wants to ensure cone quantization, one may resort to a variety of nonconvex optimization heuristics. \citet{myles_controlled-distortion_2013} use a greedy expansion procedure, progressively optimizing and rounding to re-impose quantization. \citet{farchi_integer-only_2018} select positions for quantized singularities via integer programming.

Finally, we note that computing a cross field often serves as a proxy for computing a ramified covering space encoding the topology of a mesh, as in \cite{bommes2009mixed,bommes2013integer} and the volumetric analogue \cite{nieser2011cubecover}. When our minimal current converges to a sharply defined surface, this surface serves as a concrete geometric instantiation of the desired covering space.

\subsection{Computational Geometric Measure Theory}
Geometric measure theory (\textsc{gmt}) was originally studied from a purely theoretical perspective as a tool for showing existence and regularity of minimal surfaces, among other problems. Several prior works have explored \textsc{gmt} as a source of convex relaxations in computational optimization. Most closely related to our work, \citet{wang_computing_2021} solve Plateau's minimal surface problem discretized on a regular grid by convex techniques, employing the alternating direction method of multipliers (\textsc{admm}) and discretizing differential operators via fast Fourier transforms.

The idea of lifting optimization problems over maps to optimization over graphs was a key insight in the work of \citet{mollenhoff_lifting_2019}, which proposes convex relaxations of mapping problems important in computer vision. In our work, we extend this idea to optimize over sections with singularities.
Lifting mapping problems to optimization over submanifolds of a bundle or product space has also been a paradigm in the functional maps literature \cite{rodolaFunctionalMapsRepresentation2019}. \citet{solomon_optimal_2019} interpolate between directional fields by interpolating a singularity measure \emph{horizontally} via optimal transport and smoothly interpolating vectors \emph{vertically} away from singularities. In our work, the boundary coupling between sections and their singularities unifies these horizontal and vertical modes of variation.

In medical imaging, currents have been used to model deformation and compute registration \cite{Charon:2014:FC, Vaillant:2005:SMVC, Glaunes:2004:DMD, Durrleman:2008:DDC, Durrleman:2009:SMC, Durrleman:2011:RAEVA}. In computer vision, the language of currents has been used to devise a representation for surfaces with boundary \cite{palmer_deepcurrents_2022} and to measure distances between surfaces sensitive to their orientations \cite{mollenhoff2019flat}. \citet{roweSparseStressStructures2023} take a computational \textsc{gmt} approach to topology optimization, relaxing optimization over stress-bearing structures to the space of varifolds.

\subsection{Theoretical Work on Minimal Sections}
Theoretical work has studied the regularity and topological properties of minimal sections in circle bundles over surfaces \cite{chacon_minimal_2011} and in more general fiber bundles \cite{johnson_regularity_1995,johnson_partial_2008}. This work provides strong guarantees on the regularity and topology of minimal sections, e.g., that they are continuous graphs almost everywhere. However, in these theoretical analyses the space of sections is usually constructed out of so-called integral currents. In practice, we can only optimize over the larger space of normal (bounded-mass) real currents, a convex relaxation. It would be beneficial to know if and when this convex relaxation is exact, i.e., if it has integral optima, and therefore if the regularity and topology guarantees extend to the solutions we can compute in practice. But to the authors' knowledge, the minimal section problem sits in a gap between the regime of known exact recovery and that of known counterexamples.

For the classical minimal surface problem in $\R^3$, it has been known for a long time that the relaxation to real currents is exact \cite{federer_real_1974}, i.e., that there exist integral minimizers. Recent work has illuminated this result with techniques from optimal transport \cite{brezis_plateau_2019} and shown that the integral minimizers occur precisely at extreme points of the convex set of minimizers, as would be expected from general intuition about convex relaxations.

The minimal section problem differs from the classical Plateau problem in that it features a free boundary component, resulting in coupled optimization over currents of differing dimension. A closely related problem that shares this feature is the \emph{flat norm decomposition} problem. For flat norm decomposition of integral currents, the story is mixed. Integral flat norm decompositions are known to achieve optimality for some integral $d$-currents in $\R^{d+1}$, but counterexamples are known for higher codimensions---as is true of the minimal mass problem in general codimension \cite{ibrahim_flat_2016}.

As we employ the language of currents to model topological defects in directional fields, we note that currents have also been employed as a theoretical tool in the study of elastic defects---specifically, dislocations and their interactions have been modeled using \emph{Cartesian currents} like those that arise in the minimal section problem \cite{scala_currents_2016,scala_constraint_2016,hudson_existence_2018,scala_variational_2019,scala_analytic_2020}.

\section{Intuition} \label{minsec:sec:intuition}

Recall that a \emph{directional field} assigns a collection of tangent vectors to each point on a surface. In a \emph{unit} directional field, the component vectors all have unit norm, thus falling on the unit circle in each tangent space. On a flat domain, all the tangent spaces (and accordingly their unit circles) can be identified, but on a curved surface these individual spaces form distinct fibers of the tangent bundle (resp.\ unit tangent bundle). The unit tangent bundle is a circle bundle, and line fields, cross fields, and so on also live in circle bundles, making circle bundles a natural setting for field optimization.

Formally, given a base surface $B$, a fiber bundle $\pi: E \to B$ is made up of \emph{fibers} $F \cong \pi^{-1}(x)$ in such a way that it locally looks like a product space, $\pi^{-1}(U) \cong U \times F$ over any small neighborhood $U$. A circle bundle is a fiber bundle with $F = \Sph^1$. The \emph{unit tangent bundle} $\Sph TB$ of $B$ is a circle bundle given as the locus of all unit vectors in the tangent bundle $TB$. To keep this article self-contained, we review the differential geometry of circle bundles in \Cref{minsec:sec:math-prelim}.

A smooth unit vector field on $B$ is a \emph{global section} of $\Sph TB$, and unit line fields, cross fields, etc., are sections of circle bundles that can be constructed from tensor powers of $TB$. Given a fiber bundle $E$, a section is a map $\sigma : B \to E$ that ``lifts'' each point in the base to a point in its fiber so that $\pi \circ \sigma = \id_B$. In other words, the graph of a smooth section $\Sigma = \gr \sigma$ intersects each fiber exactly once.

\begin{wrapfigure}{L}{0.3\columnwidth}
\centering
\includegraphics[width=0.29\columnwidth]{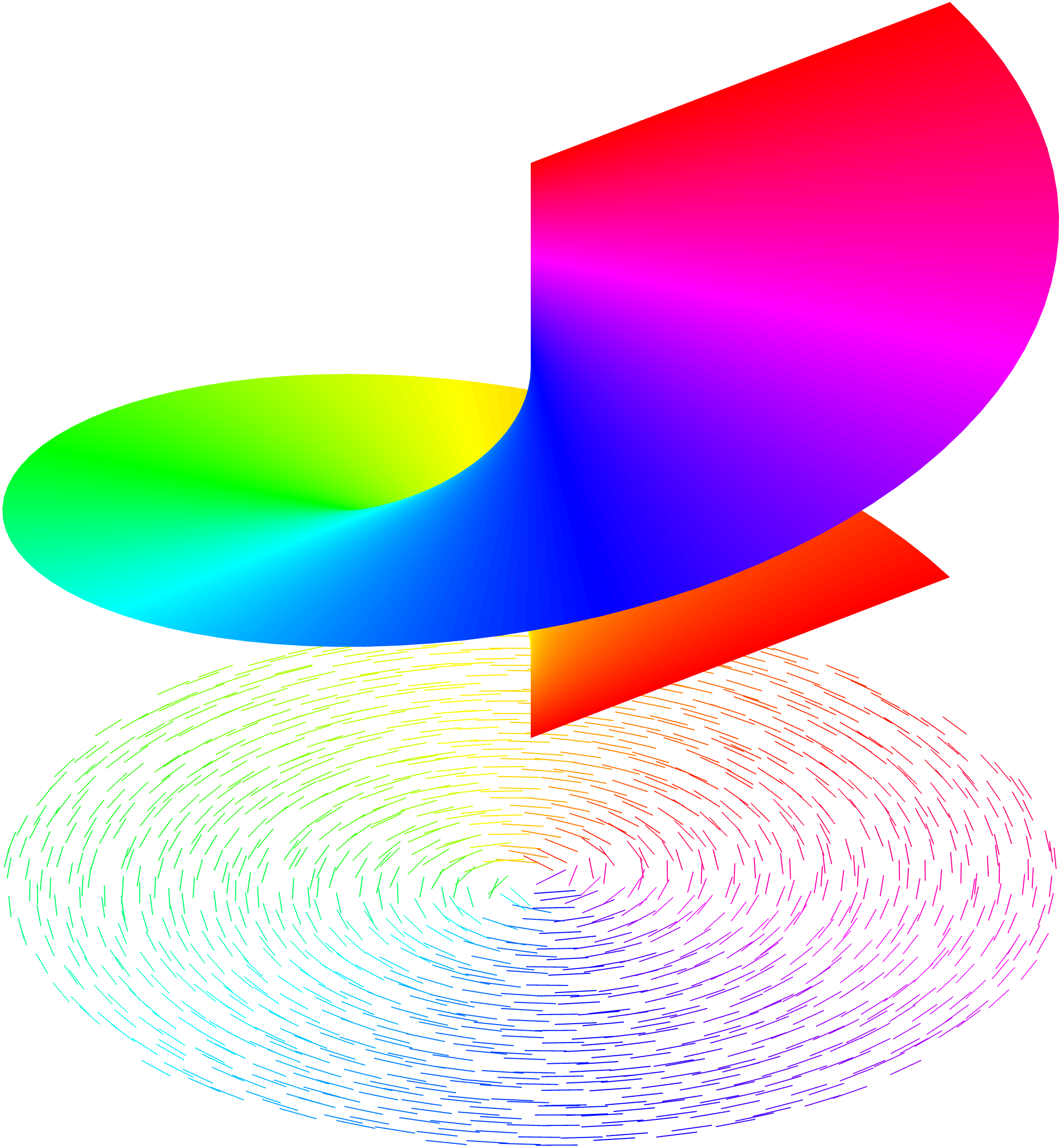}
\caption{The graph of a field around a singularity forms a helicoid.}
\label{minsec:fig:helicoid}
\end{wrapfigure}
Around a singularity, this picture breaks down. Any arbitrarily small loop encircling the singular point must lift to a loop that wraps around the fiber direction, encoding the holonomy of the singularity. As such, the section $\sigma$ cannot be extended continuously over the singular point. On the other hand, the surface $\Sigma$ \emph{can} be extended to subtend the entire circular fiber over the singular point. Thus, in the neighborhood of a singular point, $\Sigma$ is no longer strictly a graph, but rather looks like a helicoid, a periodic minimal surface with a vertical boundary component (\Cref{minsec:fig:helicoid}).

The core of our approach is to replace optimization of a field $\sigma$ with optimization of a \emph{current} or generalized surface, serving as a proxy for its graph $\Sigma$. Expressed in this language, the relationship between the field and its singularities will become a boundary condition. Working with currents will allow us to formulate an optimization problem over fields that remains well-defined in the presence of singularities. 

To begin with, suppose we had a smooth section $\sigma \in \Gamma(E)$ with no singularities. Then its \emph{graph}
\begin{equation}
	\Sigma = \gr \sigma \coloneqq \sigma(B) \subset E
\end{equation}
would be a smooth surface embedded in $E$.
The area of $\Sigma$ naturally generalizes common field optimization energies such as (covariant) Dirichlet energy. We state the result below and defer the proof to \Cref{minsec:subsec:bundle-metric}, along with relevant mathematical background.

\begin{proposition} \label{minsec:prop:area-func}
	The area of the graph $\Sigma$ can be evaluated as the following integral over the base surface:
	\begin{equation}
		\Area(\Sigma) = \int_B \sqrt{1 + r^2 |\Dext \sigma|^2} \; \dext A_B,
		\label{minsec:eq.graph-area}
	\end{equation}
	where $\Dext$ denotes covariant differentiation and $r$ is the fiber radius.
\end{proposition}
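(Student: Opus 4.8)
The plan is to use the section $\sigma$ itself as a global parametrization of its graph $\Sigma$, so that computing $\Area(\Sigma)$ reduces to integrating the volume form of the pullback metric $\sigma^*g_E$ over $B$. In local coordinates $(x^1,x^2)$ on $B$ we have $\Area(\Sigma)=\int \sqrt{\det(\sigma^*g_E)_{ij}}\,\dext x^1\,\dext x^2$, and since $\dext A_B=\sqrt{\det(g_B)_{ij}}\,\dext x^1\,\dext x^2$, it suffices to show that $\det(g_B^{-1}\,\sigma^*g_E)=1+r^2|\Dext\sigma|^2$. The first ingredient I would establish (or invoke from the bundle-geometry preliminaries) is the Kaluza--Klein form of the bundle metric: the connection on the circle bundle splits $TE=H\oplus V$ into horizontal and vertical subspaces, and the natural metric is
\begin{equation}
g_E=\pi^*g_B+r^2\,\theta\otimes\theta,
\end{equation}
where $\theta$ is the connection $1$-form measuring (angular) vertical displacement and $r$ is the fiber radius. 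Horizontal vectors inherit their length from $g_B$ through $\pi$, while the fiber circle has radius $r$.

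Next I would compute the differential $\dext\sigma:TB\to TE$ and decompose it against this splitting. In a local trivialization where $\sigma(x)=(x,\phi(x))$ with $\phi$ the fiber angle, $\dext\sigma(\partial_{x^i})$ projects under $\pi$ to $\partial_{x^i}$, so its horizontal part is exactly the horizontal lift of $\partial_{x^i}$ and contributes $(g_B)_{ij}$ to the pullback metric. The whole content of the connection is that the leftover vertical part is precisely the covariant derivative: $\theta(\dext\sigma(\partial_{x^i}))=\partial_{x^i}\phi+A_i=\Dext_i\sigma$, where $A_i$ are the connection coefficients. Consequently the pullback metric is a rank-one correction of the base metric,
\begin{equation}
\sigma^*g_E=g_B+r^2\,\Dext\sigma\otimes\Dext\sigma,
\end{equation}
viewing the covariant derivative $\Dext\sigma$ as a $1$-form on $B$ valued in the (one-dimensional) vertical bundle.

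With the pullback metric in this form, the conclusion is pure linear algebra: writing $v_i=\Dext_i\sigma$, the matrix determinant lemma for a rank-one update gives $\det(g_B+r^2vv^{\top})=\det(g_B)\,(1+r^2\,g_B^{ij}v_iv_j)=\det(g_B)\,(1+r^2|\Dext\sigma|^2)$, where $|\Dext\sigma|^2=g_B^{ij}(\Dext_i\sigma)(\Dext_j\sigma)$ is the norm measured in the base metric. Taking square roots and multiplying by $\dext A_B$ yields \Cref{minsec:eq.graph-area}. I expect the only genuinely delicate step to be the second one: correctly identifying the vertical component of $\dext\sigma$ with the covariant derivative $\Dext\sigma$, since this is exactly where the connection (and hence the curvature of the bundle) enters, and it is what makes the base-intrinsic expression $|\Dext\sigma|^2$ appear in place of a naive coordinate derivative. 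Everything before it is bookkeeping with the bundle metric, and everything after it is the rank-one determinant identity.
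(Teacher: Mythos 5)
Your proposal is correct and follows essentially the same route as the paper: parametrize the graph over the base, split $\dext\sigma(\partial_i)$ into the horizontal lift of $\partial_i$ plus a vertical remainder equal to $(\partial_i\sigma + \tau(\partial_i))\partial_\theta$, observe that the Sasaki metric makes these orthogonal so the pullback metric is $g_B + r^2\,\Dext\sigma\otimes\Dext\sigma$, and finish with the rank-one determinant identity. The step you flag as delicate is handled in the paper exactly as you describe, with $\Dext_i\sigma \coloneqq \partial_i\sigma + \tau(\partial_i)$ emerging as the definition of the covariant derivative induced by the connection.
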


The fiber radius $r$ is a free parameter controlling the behavior of the area functional as a field energy. When the fibers are very long $(r \to \infty)$, the area of the section graph
\begin{equation}
	\Area(\Sigma) \approx r \int_B |\Dext\sigma| \; \dext A_B
\end{equation}
behaves like (covariant) total variation. On the other hand, as $r \to 0$,
\begin{equation}
	\Area(\Sigma) \approx \int_B \left[1 + \frac{r^2}{2}|\Dext\sigma|^2 \right] \; \dext A_B,
	\label{minsec:eq.graph-area-reduced}
\end{equation}
equal to the (covariant) Dirichlet energy up to additive and multiplicative constants. Similarly, the area functional behaves like Dirichlet energy in regions of slow variation and like total variation where the field varies rapidly (see \Cref{minsec:fig:schematic-energy}). In \Cref{minsec:app:scaling}, we examine further the scaling behavior of the area functional in the presence of singularities.
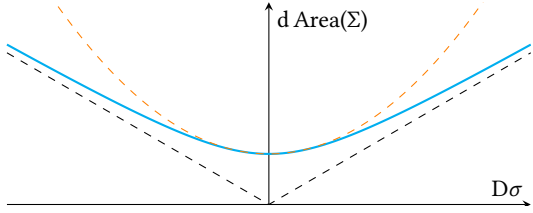
\begin{figure}
\centering
\tikzsetnextfilename{area-functional}
\begin{tikzpicture}
	\begin{axis}[
		width=\columnwidth, height=0.5\columnwidth,
        xmin=-1, xmax=1, 
        ymin=0, ymax=4,
        axis x line = center, 
        axis y line = center,
        xtick = \empty,
        ytick = \empty,
        xlabel = {$\Dext\sigma$},
        ylabel = {$\dext\Area(\Sigma)$},
        legend style = {nodes=right},
        legend pos = north east,
        clip mode = individual,
    ]
            \addplot[cyan, thick, samples=100, domain=-1:1]{sqrt(1+9*x^2)};
            \addplot[black, dashed, domain=-1:1]{abs(3*x)};
			\addplot[orange, dashed, samples=100, domain=-1:1]{1+(9/2)*x^2};
	\end{axis}
\end{tikzpicture}
\caption{The area functional acts like Dirichlet energy where $\sigma$ varies slowly and total variation where it varies rapidly, such as near singularities.}
\label{minsec:fig:schematic-energy}
\end{figure}

While Dirichlet energy is infinite in the presence of singularities, the area functional remains well-defined and finite. This suggests using area as a regularization of Dirichlet energy.
Moreover, the area functional extends naturally to a convex norm on the space of currents, known as the \emph{mass norm}. (See \Cref{minsec:subsec:gmt} for background material on currents and \textsc{gmt}.) In the next section we use this norm to formulate a convex relaxation of field optimization.

\section{Problem Formulation} \label{minsec:sec:problem}
\subsection{Minimal Sections}

We propose to compute fields with singularities by solving the following convex optimization problem, which we call the \emph{generalized minimal section} problem:
\begin{equation} \begin{aligned}
\min_{\Sigma, \Gamma} \quad &\Mass(\Sigma) + \lambda \Mass(\Gamma)\\
 \subj \quad &\partial \Sigma + \Gamma_0 + \Sph^1 \times \Gamma = 0\\
 &\pi_\# \Sigma = B \\
 &\Sigma \in \mathcal{D}_2(E) \\
 &\Gamma \in \mathcal{D}_0(B).
 \end{aligned} \tag{\textsc{gms}} \label{minsec:prob.sym}
\end{equation}
Here $\Gamma_0$ is a fixed boundary curve, viewed as a current, sitting over $\partial B$ and encoding Dirichlet boundary data. The $2$-current $\Sigma$ represents the graph of a section of the circle bundle $E$, wherever the section is defined---that is, away from its singularities, which are encoded in the measure $\Gamma$. The mass norm of $\Sigma$ acts like a field energy as detailed in \Cref{minsec:sec:intuition}, while the mass norm of $\Gamma$ serves as a regularizer, controlling the cost of singularities. The boundary constraint says that apart from the fixed Dirichlet data, the section may only have vertical boundary components sitting over the measure $\Gamma$. This is precisely the behavior of a section around a topological defect (\Cref{minsec:fig:helicoid}). Finally, the projection constraint ensures that $\Sigma$ behaves like a graph, its shadow covering the entire base manifold.

The problem \eqref{minsec:prob.sym} is a generalization of what appears in \cite{chacon_minimal_2011}, which leaves out the $\Mass(\Gamma)$ regularizer. It is also a convex relaxation: we optimize over general currents $\mathcal{D}_p$ rather than the integral currents $\mathcal{I}_p$ studied in theoretical work.

\subsection{Representation by Smooth Forms}
In this section, we take the abstract problem \eqref{minsec:prob.sym} and prepare it for discretization by re-expressing it in terms of differential forms. Rather than achieving this by taking the pre-dual, we rewrite the primal problem in terms of forms, which will allow us to take advantage of Hodge decomposition.

While the solutions we seek are singular currents supported on submanifolds of the full bundle $E$, we will represent our candidate solutions by smooth forms on $E$. On any ambient manifold $X$, there is an embedding $[\cdot] : \Omega^{n-p}(M) \to \mathcal{D}_p(X)$ given by the wedge pairing:
\begin{equation}
\langle[\alpha], \beta \rangle \coloneqq \int_X \alpha \wedge \beta.
\end{equation}
For a current $\Xi = [\xi]$, we will denote $\xi = \llbracket \Xi \rrbracket$. Where it is not confusing, we will elide the brackets and simply represent our candidate $p$-currents by $(n-p)$-forms.

With this in mind, we can translate our boundary constraint into the language of exterior calculus:
\begin{equation}
\begin{aligned}
	\langle \partial \Sigma, \alpha\rangle \coloneqq \langle \Sigma, \dext \alpha\rangle
	&= \int_E \Sigma \wedge \dext \alpha \\
	&= -\int_E \dext (\Sigma \wedge \alpha) + \int_E \dext \Sigma \wedge \alpha \\
	&= \langle\dext\Sigma, \alpha \rangle_E - \langle \Sigma, \alpha \rangle_{\partial E},
\end{aligned}
\end{equation}
where $\Sigma$ is now viewed as a $1$-form representing a $2$-current. Equating interior and boundary terms, we have
\begin{align}
	&\dext \Sigma = -\Sph^1 \times \Gamma \\
	&\Sigma \mid_{\partial E} = \Gamma_0
\end{align}
By definition,
\begin{equation}
\begin{aligned}
	\langle\Sph^1 \times \Gamma, \alpha_V \tau + \alpha_H \rangle
	&= \langle \Gamma, \langle \Sph^1, \alpha_V \tau\rangle \rangle \\
	&= \int_B \llbracket \Gamma \rrbracket \int_F \alpha_V \tau \\
	&= \int_E \pi^*\llbracket \Gamma \rrbracket \wedge(\alpha_V \tau + \alpha_H),
\end{aligned}
\end{equation}
where the last equality is by Fubini splitting and the fact that $\pi^*\llbracket \Gamma \rrbracket \wedge \alpha_H = 0$ since both are horizontal and there are no horizontal $3$-forms. Thus, writing everything in terms of forms, we have the exterior system:
\begin{alignat}{2}
	&\dext \Sigma = -\pi^*\Gamma \quad &&\in \Omega^2(E)\\
	&\Sigma \mid_{\partial E} = \Gamma_0 &&\in \Omega^1(\partial E).
\end{alignat}

The optimization problem \eqref{minsec:prob.sym} becomes:
\begin{equation}
\begin{aligned}
 \min \quad &\int_E \hodge |\Sigma|_g + \lambda \int_B \hodge |\Gamma| \\
 \subj \quad &\dext \Sigma = -\pi^* \Gamma \\
 &\Sigma \mid_{\partial E} = \Gamma_0 \\
 &\int_{\pi^{-1}(x)} \Sigma = 1 \quad \forall x \in B \\
 &\Sigma \in \Omega^1(E), \Gamma \in \Omega^2(B),
\end{aligned}
\end{equation}
with $g$ the Sasaki metric. We can simplify the constraints by para\-metr\-iz\-ing solutions using Hodge(-Morrey-Friedrichs) decomposition (\emph{cf.}\ \cite{Zhao:2019:3DH}). As $\int_{\pi^{-1}(x)} \tau = 2\pi$, we can write:
\begin{equation}
	\Sigma = \bar{\tau} + \dext f + \delta \beta_t, \label{minsec:eq:hodge-decomp}
\end{equation}
where $f \in C^\infty(E)$, $\bar{\tau} = \tau/2\pi$, and $\beta_t$ is a $2$-form tangential to the boundary. Note that this is not precisely a Hodge decomposition: though $\bar\tau$ will not in general be harmonic, it does have the correct cohomology. On a multiply-connected base surface, additional homological constraints may necessitate additional harmonic terms.

Taking the exterior derivative of \eqref{minsec:eq:hodge-decomp} and using that $\dext \tau = -\pi^*\kappa$, the pullback of the negative Gauss curvature $2$-form on the base, we obtain
\begin{equation}
	\dext \delta \beta_t = \dext \Sigma - \dext\bar{\tau} = \pi^* \left(-\Gamma + \frac{1}{2\pi}\kappa\right) \eqqcolon \pi^* (\bar\kappa - \Gamma). \label{minsec:eq:beta-constraint}
\end{equation}
Out of many possible choices for the free variable $\beta_t$, we select for convenience $\beta_t = -\pi^* (\hodge\phi)$, where
\begin{equation}
\hodge \Delta^0_B \phi = \hodge \delta \dext \phi = \dext \delta (\hodge\phi) = \Gamma - \bar\kappa
\end{equation}
and $\phi\mid_{\partial B} = 0$. This choice satisfies the constraint \eqref{minsec:eq:beta-constraint} because $\dext$ commutes with $\pi^*$, as does $\delta$ under our choice of metric on $E$ (see \Cref{minsec:prop:horiz-base}).

Using that $\delta \hodge \phi = \hodge^{-1}\dext \hodge \hodge \phi = - \hodge \dext \phi$, and replacing $\Gamma$ and $\bar\kappa$ by their Hodge duals on the base, we obtain an optimization problem of the form
\begin{equation}
\begin{aligned}
 \min \quad &\int_E \hodge |\Sigma|_g + \lambda \int_B \hodge |\Gamma| \\
 \subj \quad &\Sigma = \bar\tau + \dext f + \pi^*\hodge \dext \phi \\
 &\Delta \phi = \Gamma - \bar\kappa\\
 &\Sigma_V \ge 0\\
 &\Sigma \mid_{\partial E} = \Gamma_0 \\
 &\phi \mid_{\partial B}  = 0,
\end{aligned}
\tag{P}
\label{minsec:prob.primal}
\end{equation}
where we've added the additional positivity constraint $\Sigma_V \ge 0$ to prevent the section from ``folding over'' on itself.

\section{Optimization}
\label{minsec:sec:opt}

In this section, we introduce our \textsc{admm}-based optimization algorithm in the continuous setting. The analysis will result in a global step in which we solve a system of \textsc{pde}s, a local step that decomposes pointwise, and a dual update. The \textsc{pde}s will motivate our discretization in \Cref{minsec:sec.discrete}.

To derive the \textsc{admm} iterations, we introduce (scaled) dual variables $w$ and $z$ associated to the \textsc{pde} constraints in \eqref{minsec:prob.primal}, arriving at the following augmented Lagrangian in scaled form; we elide boundary conditions for the moment:
\begin{equation}
\begin{aligned}
	\mathcal{L}(\Sigma, \Gamma, f, \phi; w, z) &= \int_E \hodge |\Sigma|_g + \lambda \int_B \hodge |\Gamma| + \chi(\Sigma_V \ge 0) \\
	&+ \frac{\mu}{2}\|\Sigma - \bar\tau - \dext f - \pi^* \hodge\dext\phi + w\|^2 - \frac{\mu}{2}\|w\|^2 \\
&+ \frac{\nu}{2}\| \Gamma - \bar\kappa - \Delta \phi + z\|^2 - \frac{\nu}{2}\|z\|^2.
\end{aligned}
\label{minsec:eq:lagrangian}
\end{equation}
Here $\chi$ denotes the convex indicator function taking the value $0$ if the constraint is satisfied and $+\infty$ otherwise.

Then \textsc{admm} consists of the following steps, repeated until convergence \cite{boyd2011distributed}:
\begin{itemize}
	\item Global step: $f, \phi \gets \argmin_{f, \phi} \mathcal{L}$ 
	\item Local step: $\Sigma, \Gamma \gets \argmin_{\Sigma, \Gamma} \mathcal{L}$ 
	\item Dual update: take a gradient ascent step for the dual variables
	\begin{equation}
		\begin{aligned}
			w &\gets w + \Sigma - \bar\tau - \dext f - \pi^* \hodge \dext \phi \\
			z &\gets z + \Gamma + \bar\kappa - \Delta \phi
		\end{aligned}
	\end{equation}
\end{itemize}
In what follows, we examine the global and local steps in more detail.

\paragraph{Global step.} Grouping $f$ and $\phi$ together, the subproblem reads
\begin{equation}
\begin{aligned}
 \min_{f, \phi} \quad &\frac{\mu}{2}\|\Sigma - \bar{\tau} - \dext f - \pi^* \hodge\dext \phi + w\|^2 + \frac{\nu}{2}\| \Gamma - \bar{\kappa} - \Delta \phi + z\|^2 \\
 \subj \quad &\bar\tau  + \dext f + \pi^*\hodge\dext\phi \mid_{\partial E} = \Gamma_0 \\
 & \phi \mid_{\partial B} = 0.
\end{aligned}
\label{minsec:prob.fphi}
\end{equation}
Taking first variations with respect to $f$ and $\phi$ compactly contained within the interior of $E$, and using the orthogonality of Hodge decomposition, we obtain the following Euler-Lagrange equations:
\begin{align}
	&\Delta^0_E f = \delta_E^1 (w + \Sigma - \bar\tau) \label{minsec:eq.f-pde}  \\
	&(\mu\ell \Delta + \nu\Delta^2)\phi = \mu\ell \dext P_\tau (w + \Sigma - \bar\tau) + \nu \Delta (\Gamma + z - \bar\kappa) \label{minsec:eq.phi-pde} \\
	&\bar\tau  + \dext f + \pi^*\hodge\dext\phi \mid_{\partial E} = \Gamma_0 \label{minsec:eq.bdry1} \\
 & \phi \mid_{\partial B} = 0. \label{minsec:eq.bdry2}
\end{align}
Here we have used \Cref{minsec:prop:horiz-base} and the results of \Cref{minsec:subsec:forms-metric} to express the adjoint of $\pi^*$ in terms of the projection $P_\tau$ (\emph{cf.} \Cref{minsec:subsec:bundle-decomp}).
This is a system of linear elliptic partial differential equations in which $f$ and $\phi$ are coupled only through the boundary condition \eqref{minsec:eq.bdry1}. The $\phi$ equation \eqref{minsec:eq.phi-pde} is a fourth-order biharmonic equation like those studied in \cite{stein_natural_2018,stein2019mixed}, and it only involves quantities on the base $B$, making it suitable for discretization via ordinary finite elements. In contrast, \eqref{minsec:eq.f-pde} is a Poisson equation on the total space $E$. The curvature of the bundle $E$ frustrates the construction of finite elements; instead, we exploit the homogeneous fiber bundle structure in a hybrid spectral discretization in \Cref{minsec:subsec:bundle-fourier,minsec:subsec:global-step-discrete}.

\paragraph{Local step.} In the local step, we solve for $\Sigma$ and $\Gamma$ given the values of $f$ and $\phi$. These problems decouple as follows:
\begin{align}
	&\begin{alignedat}{2}&\argmin_{\Sigma}& &\int_E \left[\frac{\mu}{2}|\Sigma - \bar{\tau} - \dext f - \pi^* \hodge\dext \phi + w|^2 + |\Sigma| \right]\; (\hodge 1) \\
	&\subj \quad &&\Sigma_V \ge 0
	\end{alignedat}\\ \label{minsec:eq.local-step}
	&\argmin_\Gamma \quad \int_B \left[\frac{\nu}{2}|\Gamma - \bar{\kappa} - \Delta \phi + z|^2 + \lambda|\Gamma| \right]\; (\hodge 1).
\end{align}
Moreover, these subproblems depend only on the pointwise values of $\Sigma$ and $\Gamma$, not on their spatial derivatives. Thus, when discretizing, we decompose them into independent pointwise problems (whence the terminology \emph{local step}).

\section{Discretization}
\label{minsec:sec.discrete}
The optimization problem \eqref{minsec:prob.primal} poses a challenge for discretization because evaluating mass norms requires pointwise values of the forms $\Sigma$ and $\Gamma$, while the \textsc{pde} constraints are more naturally discretized in their weak form using finite elements. As \textsc{admm} decouples the \textsc{pde} from the local proximal step, we employ a compromise where the $f$ and $\phi$ fields computed in the global step are discretized with finite elements, while the local proximal steps for $\Gamma$ and $\Sigma$ are evaluated pointwise at edges and triangle corners, respectively. This placement of degrees of freedom attempts to minimize the amount of averaging between \textsc{admm} steps and thus promote convergence to sharply defined currents.

\subsection{Operator Decomposition}
\label{minsec:subsec:bundle-fourier}
\begin{figure}
\centering
\newcommand{\imgwidth}{0.23\columnwidth}
\begin{tabulary}{\columnwidth}{@{}CCCC@{}}
\includegraphics[width=\imgwidth]{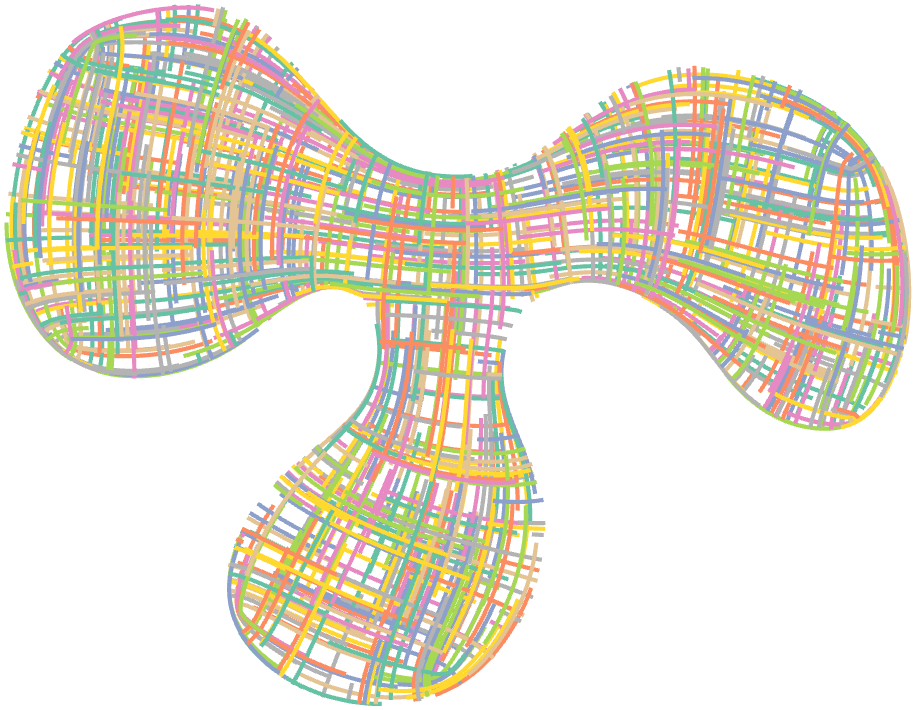} &
\includegraphics[width=\imgwidth]{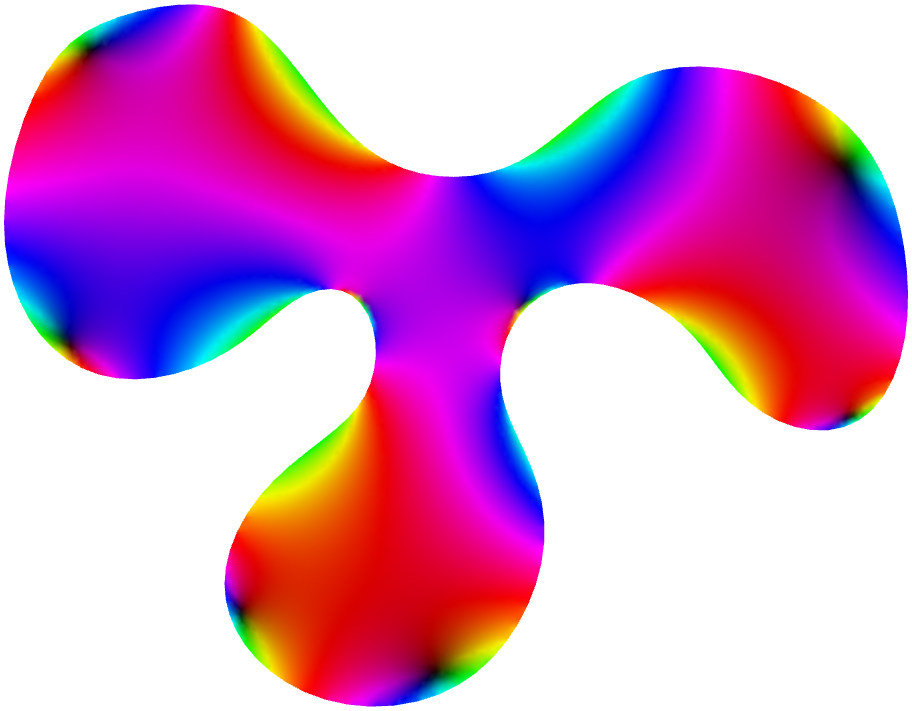} &
\includegraphics[width=\imgwidth]{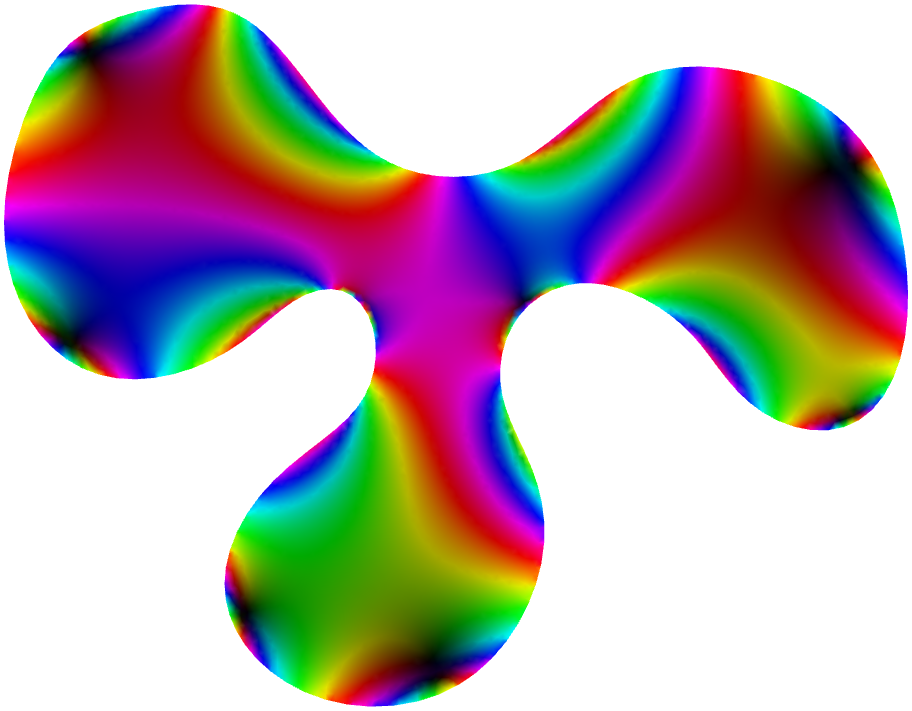} &
\includegraphics[width=\imgwidth]{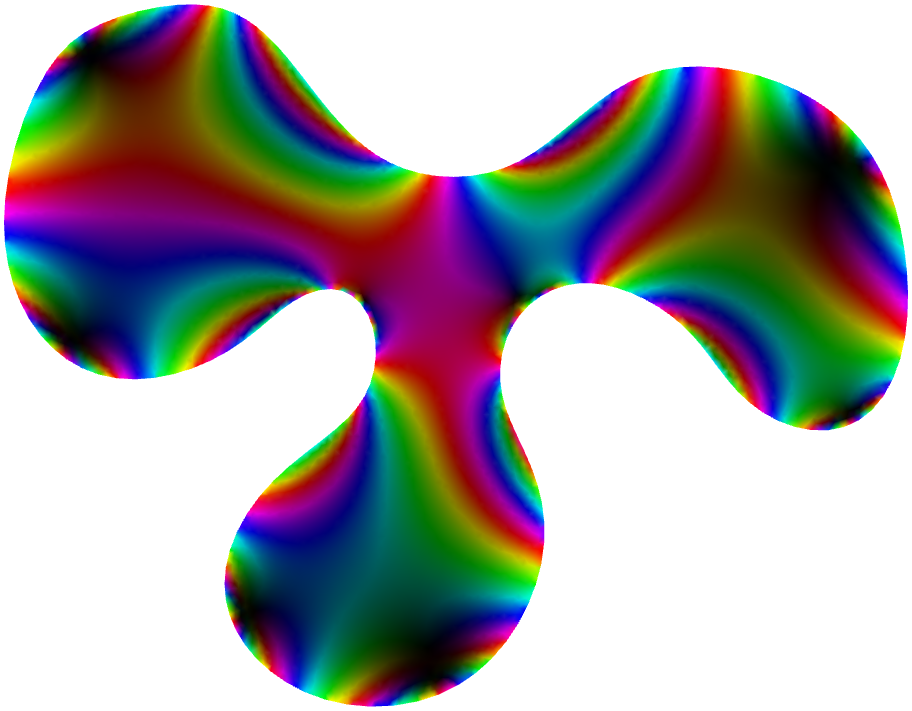} \\
& $k = 1$ & $k = 2$ & $k = 3$
\end{tabulary}
\caption{Vertical Fourier components of a minimal section corresponding to the cross field on the left drop to zero in the neighborhood of its singularities.}
\label{minsec:fig:schematic-fourier}
\end{figure}
A further challenge occurs in discretizing the differential operators on the bundle $E$. A typical discretization might rely on a mesh of the bundle. A natural choice would be to use prismatic elements---triangles of the base mesh extruded along the fiber direction. However, the cross-sections of such a mesh at fixed positions along the fiber direction would amount to nonsingular global sections of the bundle, an impossibility when $E$ is the unit tangent bundle of a closed surface of genus $g \ne 1$. Even in a local region, such a mesh cannot be constructed so that its vertical edges and horizontal triangles are orthogonal under the Sasaki metric. Indeed, the triangular faces of such a mesh would need to form horizontal sections, which can only exist, even locally, when the curvature is zero.

To address these problems, we take advantage of the homogeneous structure of the circle bundle, which allows us to represent forms and currents on the bundle by their vertical Fourier components (see \Cref{minsec:fig:schematic-fourier}). It turns out that in the Fourier basis, various exterior calculus operators on $E$ become block-diagonal, with blocks consisting of corresponding \emph{covariant} operators on the base. We then discretize these covariant operators using exact parallel transport on the base, avoiding meshing the bundle entirely.

As in \Cref{minsec:sec:intuition}, let $\pi: E_U \to U$ be a local trivialization. Choose coordinates $\{\mathbf{x}, \theta\}$ on $E_U$ such that $\vvf = \partial_\theta$, and $\theta \in [0, 2\pi)$ with the ends identified. 
Let $\xi$ be a form with the decomposition $\xi = \tau \wedge \alpha + \beta \in \Omega^p(E_U)$ (see \eqref{minsec:eq:form-decomp}). We can further decompose $\alpha$ and $\beta$ by taking their Fourier components in the vertical direction:
\begin{equation} \xi = \sum_{k} e^{ik\theta} [\tau \wedge \alpha_k + \beta_k], \label{minsec:eq.bundle-fourier} \end{equation}
where $\alpha_k$ and $\beta_k$ are horizontal homogeneous $(p-1)$- and $p$-forms, respectively.

Define the covariant exterior derivative at degree $p$ and frequency $k$ as follows:
\begin{equation} \Dext^p_k \coloneqq \dext^p + ik(\dext \theta - \tau)\wedge. \end{equation}
Note that $\dext \theta - \tau$ is horizontal homogeneous and thus corresponds to a form on the base. So $\Dext^p_k$ can be viewed as a linear differential operator on complex-valued forms on the base.
Then the exterior derivative on the total space decomposes as follows (see \Cref{minsec:app.op-decomp} for details):
\begin{equation} \dext^p \sum_k e^{ik\theta} \xi_k = \sum_k e^{ik\theta} \dext^p_k \xi_k, \end{equation}
where
\begin{equation} \begin{pmatrix} \Pi_H \\ \iota_{\vvf} \end{pmatrix} \dext^p_k 
\coloneqq \begin{pmatrix}
\Dext^p_k & \dext \tau \wedge \\
ik & -\Dext^{p-1}_k
\end{pmatrix} \begin{pmatrix} \Pi_H \\ \iota_{\vvf} \end{pmatrix}
\end{equation}
Similarly, the codifferential decomposes into blocks of the form
\begin{equation}
\delta_k^p = \begin{pmatrix}
	(-1)^p \hodge^{-1} \Dext_k \hodge & -r^{-2}ik \\
	r^2 \hodge^{-1}(\dext\tau\wedge)\hodge & (-1)^p \hodge^{-1}\Dext_k\hodge
\end{pmatrix},
\end{equation}
where the basis (projection) matrix is understood.
Finally, the Hodge Laplacian $\Delta^p_E$ breaks into blocks
\begin{equation}
\Delta_k^p = \begin{psmallmatrix}
	\Delta_{\Dext_k}^p + r^{-2} k^2 + r^2(\kappa\wedge)\hodge^{-1}(\kappa\wedge)\hodge & (-1)^p[\hodge^{-1}\Dext_k\hodge,(\kappa\wedge)] \\
	r^2[\Dext_k, \hodge^{-1}(\kappa\wedge)\hodge] & \Delta^{p-1}_{\Dext_k} + r^{-2}k^2 + r^2\hodge^{-1}(\kappa\wedge)\hodge(\kappa\wedge)
\end{psmallmatrix},
\end{equation}
where $[\cdot, \cdot]$ denotes the commutator of operators, and we recall that $\dext \tau = -\pi^* \kappa$.

In particular, scalar functions ($p=0$) have no vertical component. So only the top-left block is nonzero, and it simplifies to
\begin{equation}
	\Delta^0_k = \Delta^0_{\Dext_k} + r^{-2}k^2, \label{minsec:eq.scalar-laplacian}
\end{equation} 
that is, a connection Laplacian plus a vertical Laplacian at each frequency $k$.

\subsection{Global Step} \label{minsec:subsec:global-step-discrete}
To compute the global step of \textsc{admm}, we use the convenient splitting of frequencies described in \Cref{minsec:subsec:bundle-fourier}. This is the only step in which we work in vertical Fourier components.
\subsubsection{Exact part}
We now need to discretize each of the variables in our augmented Lagrangian so that the system of \textsc{pde} \eqref{minsec:eq.f-pde}--\eqref{minsec:eq.bdry2} is simple to solve numerically. \eqref{minsec:eq.f-pde} is the only equation involving fields on the full bundle, so we focus on it first. We discretize $f$ by first decomposing it into its frequency components along the fiber direction and then representing each component in a basis of ``covariant'' linear finite elements with complex coefficients.

On a triangle $T$, let $\psi_a$ denote the linear hat function taking the value $1$ at vertex $a$ and $0$ at the other vertices. Let $f^{(k)}_a$ be the complex value of $f^{(k)}$ at vertex $a$. Let $\rho_{a\to T}$ be the Levi-Civita parallel transport operator from the frame at vertex $a$ to the frame on triangle $T$, represented as a unit complex number (see \ref{minsec:app:op-pt} for details). The corresponding parallel transport operator at frequency $k$ is then given by $\rho_{a\to T}^{-k}$. Note the negative exponent as Fourier components are parallel transported with an opposite orientation to vectors on the base. On triangle $T$ and expressed with respect to the frame on $T$, the $k$th component of $f$ is given by
\begin{equation}
	f^{(k)}(x) = \sum_{a\sim T} \psi_a(x)\rho_{a\to T}^{-k} f_a^{(k)}. \label{minsec:eq.f-basis}
\end{equation}

Having chosen our degrees of freedom, we discretize the variational subproblem \eqref{minsec:prob.fphi} as follows.
For frequencies $k \ne 0$, there is no $\phi$ component, and we only have to solve for $f^{(k)}$. The $L^2$ norm on $E$ decomposes into one term per frequency, of the form
\begin{equation}
	\mathcal{L}(f^{(k)}) = \|\dext_k^0 f^{(k)} - \alpha^{(k)}\|^2 = \|\Dext_k^0 f^{(k)} - \alpha^{(k)}_H\|^2 + \|ik f^{(k)} - \alpha^{(k)}_V\|^2,
\end{equation}
where $\alpha^{(k)}$ is the $k$th frequency component of the $1$-form $\Sigma - \bar\tau + w$. Expanding, we have
\begin{equation}
\begin{aligned}
\mathcal{L}(f^{(k)}) = 2\pi r \sum_T \int_T  \Bigl( &|\Dext_k^0 f^{(k)}(x) - \alpha^{(k)}_H(x)|^2 \\
	&{}+ r^{-2}|ik f^{(k)}(x) - \alpha^{(k)}_V(x)|^2 \Bigr)\, \dext A .
\end{aligned}
\end{equation}
Each triangle is Levi-Civita-flat, so $\Dext_k$ reduces to the ordinary differential once $f$ is expressed in the frame of $T$. Taking $\sharp$s and letting $\begin{psmallmatrix}\mathbf{h} \\ v\end{psmallmatrix} = \alpha^\sharp = \begin{psmallmatrix}\alpha_H^\sharp \\ r^{-2} \alpha_V \end{psmallmatrix}$, we have
\begin{equation}
\begin{aligned}
	\mathcal{L}(f^{(k)}) &= 2\pi r \sum_T \Re \int_T \sum_{ab\sim T}  \Bigl[\\
	 &\quad \bar{\rho}^k_{a\to T}\bar{f}^{(k)}_a\rho^{-k}_{b \to T}f^{(k)}_b (\nabla\psi_a \cdot \nabla \psi_b + r^{-2} k^2 \psi_a \psi_b) \\
	&\quad {}- 2\bar\rho^{-k}_{a \to T}\bar{f}^{(k)}_a (\nabla \psi_a \cdot \mathbf{h}^{(k)}_{T,b} \psi_b - r^{-2} ik\psi_a \psi_b v^{(k)}_{T,b}) \Bigr]\, \dext A,
\end{aligned}
\end{equation}
where $\mathbf{h}_{T,b}$ and $v_{T,b}$ denote the values of the respective variables at the corner of triangle $T$ meeting vertex $b$.

Let $\dscm{P}_k$ be the covariant corner-vertex incidence matrix at frequency $k$. That is, for a triangle $T$ whose $i$th corner is incident on a vertex $a$, $\dscm{P}_k$ has an entry corresponding to the parallel transport operator from the frame at $a$ to the frame on $T$:
\begin{equation}
	(\dscm{P}_k)_{T_i,a} = \rho^{-k}_{a\to T}.
\end{equation}
Let $\dscm{G}$ be the triangle-wise gradient operator whose $2\times 1$ matrix blocks are
\begin{equation}
	(\dscm{G})_{T_j} = \nabla \psi_{T_j},
\end{equation}
where $\psi_{T_j} = \psi_a \mid_T$ is the linear restriction of the finite element associated to vertex $a$ incident on corner $j$ of triangle $T$. Finally, let $\dscm{A}$ be the triangle area matrix, $\dscm{U}$ be the face-corner incidence matrix, and $\dscm{M}$ be the corner-wise FEM mass matrix, whose entries are
\begin{equation}
	(\dscm{M})_{T_i,T_j} = \int_T \psi_{T_i} \psi_{T_j} \, \dext A.
\end{equation}

We can now rewrite the Lagrangian in terms of the vector of discrete degrees of freedom $\dsc{f}_k = (f^{(k)}_a)$,
\begin{dmath}
	\mathcal{L}(\dsc{f}_k) = 2\pi r \Re [\dsc{f}_k^\dag \dscm{P}_k^\dag \dscm{G}^\dag \dscm{A} (\dscm{G} \dscm{P}_k \dsc{f}_k - 2\dscm{U}\dsc{h}_k) + r^{-2} (k^2 \dsc{f}^\dag \dscm{P}_k^\dag \dscm{M} \dscm{P}_k \dsc{f} - 2ik \dsc{f}^\dag \dscm{P}_k^\dag \dscm{M} \dsc{v}_k) ]
\end{dmath}
Taking the first variation with respect to $\dsc{f}_k$, we obtain the following matrix equation, which serves as a discrete version of \eqref{minsec:eq.f-pde}:
\begin{equation}
(\dscm{P}_k^\dag \dscm{G}^\dag \dscm{A} \dscm{G} \dscm{P}_k + r^{-2} k^2 \dscm{P}_k^\dag \dscm{M} \dscm{P}_k) \dsc{f}_k = \dscm{P}_k^\dag \dscm{G}^\dag \dscm{A} \dscm{U} \dsc{h}_k + r^{-2} ik \dscm{P}_k^\dag \dscm{M} \dsc{v}_k. \label{minsec:eq-fpde-discrete}
\end{equation}
The system matrix on the left is a discrete version of the Laplacian block at frequency $k$ (\emph{cf.} \eqref{minsec:eq.scalar-laplacian}),
\begin{equation}
		\dscm{L}_k \coloneqq \dscm{P}_k^\dag \dscm{G}^\dag \dscm{A} \dscm{G} \dscm{P}_k + r^{-2} k^2 \dscm{P}_k^\dag \dscm{M} \dscm{P}_k. \label{minsec:eq:laplacian-freq-k-discr}
\end{equation}

\subsubsection{Coexact part; frequency zero}
The coexact part $\pi^*\hodge\dext \phi$ is horizontal homogeneous, making it orthogonal to $\dext f^{(k)}$ for $|k| > 0$ by orthogonality of the Fourier decomposition. In contrast, $f^{(0)}$ and $\phi$ are coupled via the boundary condition \eqref{minsec:eq.bdry1} and must be computed jointly via a Schur complement procedure.

The Lagrangian for the components at frequency zero is of the following form:
\begin{equation}
\begin{aligned}
	\mathcal{L}(f^{(0)}, \phi, \beta) &= \frac{\mu \ell}{2} \|\hodge \dext \phi + \dext f^{(0)} - \alpha_H^{(0)}\|^2 + \frac{\nu}{2}\|\Delta \phi - g\|^2 \\
	&\qquad{} + \int_{\partial B} \beta (\dext f^{(0)} + \hodge \dext \phi - g_0) \\
	&= \frac{\mu\ell}{2} \| J \nabla \phi + \nabla f^{(0)} - \mathbf{h}^{(0)}||^2 + \frac{\nu}{2}\|\Delta \phi - g\|^2 \\
	&\qquad{} + \int_{\partial B} \beta (\dext f^{(0)} + \hodge \dext \phi - g_0).
\end{aligned}
	\label{minsec:eq:freq0-lagrangian-continuous}
\end{equation}
Here $g = \Gamma - \bar\kappa + z$, $g_0 = \Gamma_0 - \bar\tau$, and $\beta \in \Omega^0(\partial B)$ is a Lagrange multiplier.
We have used that $(\hodge \dext \phi)^\sharp = J \nabla \phi$, where $J$ is rotation counterclockwise by $\pi/2$ in the tangent space of $B$. Note the factor of $\ell$ in the first term, arising from partial integration over the fibers.

We discretize $\phi$ in a basis of non-conforming or \emph{Crouzeix-Raviart} (C-R) elements. This is important for two reasons. First, as detailed in \cite{de_goes_vector_2016}, using C-R elements ensures that the discrete Hodge decomposition has the right number of degrees of freedom; if both the exact and coexact terms were represented with conforming elements, the Hodge decomposition would fail to span the space of piecewise-constant vector fields. Second, the discrete curl operator expressed in the non-conforming basis satisfies a discrete Stokes' theorem: the boundary integral of a vector field corresponds to the total discrete curl in the interior. This is important to ensure the sum of singularity indices agrees exactly with the total curvature of the boundary. The conforming finite elements lack an analogous discrete Stokes property.

The boundary condition \eqref{minsec:eq.bdry2} is enforced by explicitly setting the value of $\phi$ to zero at boundary edge midpoints.
Each interior edge $e$ gets a Crouzeix-Raviart basis element $\phi_e$, taking the value $1$ on $e$ and $-1$ on the pair of vertices opposite $e$. Let $\dscm{\hat{G}}$ be the C-R gradient matrix, and let $\dscm{J}$ be the matrix representing rotation counterclockwise by $\pi/2$ in the tangent space of each triangle. Let $\dscm{\hat{M}}$ be the C-R mass matrix. Let $\dscm{\hat{L}} = \dscm{\hat{G}^\dag A \hat{G}} = \dscm{\hat{G}}^\dag \dscm{J}^\dag \dscm{A} \dscm{J} \dscm{\hat{G}}$ be the symmetric C-R Laplacian. Then \eqref{minsec:eq:freq0-lagrangian-continuous} becomes
\begin{equation}
\begin{aligned}
	\mathcal{L}(\dsc{f}_0, \phi, \beta)
	&= \frac{\mu\ell}{2} (\dscm{J}\dscm{\hat{G}} \phi + \dscm{G}\dscm{P}_0\dsc{f}_0 - \dsc{h}_0)^\dag \dscm{A} (\dscm{J}\dscm{\hat{G}} \phi + \dscm{G}\dscm{P}_0 \dsc{f}_0 - \dsc{h}_0) \\
	&\qquad{} + \frac{\nu}{2} (\dscm{\hat{M}}^{-1} \dscm{\hat{L}} \phi - \dsc{g})^\dag \dscm{\hat{M}} (\dscm{\hat{M}}^{-1} \dscm{\hat{L}} \phi - \dsc{g}) \\
	&\qquad{} + \beta^\dag [\dscm{B} (\dscm{G} \dscm{P}_0 \dsc{f}_0 + \dscm{J}\dscm{\hat{G}} \phi) - \dsc{g}_0] \\
	&= \frac{\mu\ell}{2} \dsc{f}_0^\dag \dscm{L}_0 \dsc{f}_0  + \frac{1}{2}\phi^\dag (\mu\ell \dscm{\hat{L}} + \nu \dscm{\hat{L}\hat{M}}^{-1}\dscm{\hat{L}}) \phi \\
	&\qquad{} - \mu\ell \dsc{h}_0^\dag \dscm{A} \dscm{J} \dscm{\hat{G}} \phi - \mu\ell \dsc{h}_0^\dag \dscm{A G} \dscm{P}_0 \dsc{f}_0 - \nu \dsc{g}^\dag \dscm{\hat{L}} \phi  \\
	&\qquad{} + \beta^\dag [\dscm{B} (\dscm{G} \dscm{P}_0 \dsc{f}_0 + \dscm{J}\dscm{\hat{G}} \phi) - \dsc{g}_0].
\end{aligned}
\end{equation}
The resulting \textsc{kkt} system is
\begin{equation}
	\begin{psmallmatrix}
		\mu\ell \dscm{L}_0 & 0 & \dscm{P}_0^\dag \dscm{G}^\dag \dscm{B}^\dag \\
		0 & \mu\ell \dscm{\hat{L}} + \nu \dscm{\hat{L}} \dscm{\hat{M}}^{-1} \dscm{\hat{L}} & \dscm{\hat{G}}^\dag \dscm{J}^\dag \dscm{B}^\dag \\
		\dscm{B}\dscm{G}\dscm{P}_0 & \dscm{B}\dscm{J}\dscm{\hat{G}} & 0
	\end{psmallmatrix}
	\begin{psmallmatrix}
		\dsc{f}_0 \\ \phi \\ \beta
	\end{psmallmatrix}
	=
	\begin{psmallmatrix}
		\mu\ell \dscm{P}_0^\dag\dscm{G}^\dag \dscm{A} \dsc{h}_0 \\
		\mu\ell \dscm{\hat{G}}^\dag \dscm{J}^\dag \dscm{A} \dsc{h}_0 + \nu \dscm{\hat{L}} \dsc{g} \\
		\dsc{g}_0
	\end{psmallmatrix}.
\label{minsec:eq:pde-freq0-discrete}
\end{equation}
We solve this system by first solving the Schur complement system for $\beta$, and then solving for $\dsc{f}_0$ and $\phi$ independently. All matrices involved are positive semidefinite and can be Cholesky factored in advance. The Schur complement matrix is typically dense, but small enough (of order the number of vertices in $\partial B$) that it can be factored efficiently.

\subsubsection{Boundary Conditions and Constants}

\paragraph{Curvature.} As $\bar\kappa$ is coupled to $\phi$ and $\Gamma$ through a Poisson equation, we discretize all three in the Crouzeix-Raviart basis. To transport the Gauss curvature $\kappa$ to edge centers, the integrated Gauss curvature at each vertex is first computed as the angle defect. Then it is converted to pointwise curvature by dividing by vertex area, and the pointwise curvature on each edge is defined to be the mean of the curvatures at its endpoints. This is similar to the approach in \cite{ben-chenDiscreteKillingVector2010}; note, however, that we take an unweighted mean to avoid issues with negative cotangent weights on non-Delaunay meshes:
\begin{equation}
	\kappa_v = \frac{1}{A_v}\sum_{T \sim v} (2\pi - \theta_{T_v}) \qquad \kappa_{vw} = \frac{1}{2}(\kappa_v + \kappa_w).
\end{equation}

\paragraph{Vertical form.} The vertical form $\bar\tau$ only appears in the expressions $\alpha = \Sigma - \bar\tau + w$ and $g_0 = \Gamma_0 - \bar\tau$---that is, on the right hand sides of the linear systems \eqref{minsec:eq-fpde-discrete} and \eqref{minsec:eq:pde-freq0-discrete}. Thus, we represent $\bar\tau$ in the spatial domain, and transform the combined expressions $\alpha$ and $g_0$ into vertical Fourier components at the beginning of the global step. $\bar\tau$ is particularly simple, having a constant value of $(\bar\tau_H = 0, \bar\tau_V = 1/(2\pi))$ at each point in the bundle.

\paragraph{$f^{(k)}$ Boundary Conditions.} For frequencies $|k| > 0$, the boundary condition \eqref{minsec:eq.bdry1} becomes
\begin{equation}
\begin{pmatrix}
\Dext^p_k \\ ik
\end{pmatrix}
f^{(k)} = \dext f^{(k)} = (\Gamma_0 - \bar\tau)^{(k)} = (\Gamma_0)^{(k)},
\end{equation}
which can be solved explicitly for the value of $f^{(k)}$ in terms of the Fourier components of $\Gamma_0$:
\begin{equation}
	f^{(k)} = \frac{(\Gamma_0)^{(k)}_V}{ik}. \label{minsec:eq.bdry1-explicit}
\end{equation}
$\Gamma_0$ is a curve ($1$-current) sitting over the boundary $\partial B$. Its corresponding $1$-form, when restricted to a single fiber, should have a vertical component whose integral over a segment of the fiber is exactly the number of times that segment intersects $\Gamma_0$ viewed as a curve. That is, $(\Gamma_0)_V \mid_{\pi^{-1}(x)}$ is a Dirac delta measure $\delta_{\gamma_0(x)}$, where $\gamma_0(x)$ represents the boundary value at $x$. The Fourier series of a Dirac delta is of the form
\begin{equation}
	\delta_{\gamma_0}(\theta) = \sum_k e^{ik(\theta - \gamma_0)}.
\end{equation}
Na\"ively truncating this series yields an approximation with ringing artifacts that can dip into the negative. Instead, inspired by \cite{bandeira_non-unique_2020}, we adopt the \emph{Fej\'er kernel} approximant,
\begin{equation}
	\delta_{\gamma_0}(\theta) \approx \sum_{|k| \le K} \left(1 - \frac{|k|}{K}\right) e^{ik(\theta - \gamma_0)}, \label{minsec:eq:fejer}
\end{equation}
which is nonnegative. This allows us to impose the convex constraint $\Sigma_V \ge 0$ in the local step of \textsc{admm}.

To impose the boundary condition \eqref{minsec:eq.bdry1-explicit} discretely, the entries $(\dsc{f}_k)_{\dsc{B}}$ at boundary vertices are fixed, corresponding rows of $\dscm{L}_k$ are removed, and $(\dscm{L}_k)_{\dsc{IB}} (\dsc{f}_k)_{\dsc{B}}$ is subtracted from the right-hand side of \eqref{minsec:eq-fpde-discrete} before solving for the interior components $(\dsc{f}_k)_{\dsc{I}}$.

\subsection{Local subproblem}
In the local step, we solve for $\Sigma$ and $\Gamma$ given the values of $f$ and $\phi$ computed in the previous step. The objective function in the subproblem \eqref{minsec:eq.local-step} decomposes into pointwise components. As such, we conduct the local step of \textsc{admm} in the spatial domain; $f$ and $\phi$ are transformed back into the spatial domain prior to this step.
To match $\phi$, we discretize $\Gamma$ with values on edge centers. For $\Sigma$, we assign one value per triangle corner per increment along the fiber direction. The choice of corners rather than vertices is designed to minimize numerical averaging artifacts and thus encourage convergence of the current to a sharply defined surface.

As such, we solve for the value of $\Sigma$ at each combination of triangle corner and vertical increment separately. This amounts to evaluating a proximal operator for the unsquared $L_2$ norm of a vector, also known as a \emph{shrinkage} operator.
At a corner $j$, we solve
\begin{equation}
	\begin{alignedat}{2}&\argmin_{\Sigma_j}& &\frac{\mu}{2}|\Sigma_j - \hat\Sigma_j|_g^2 + |\Sigma_j|_g \\
	&\subj \quad &&(\Sigma_j)_V \ge 0,
	\end{alignedat}
\end{equation}
where $\hat\Sigma_j$ is the value of $\bar{\tau} + \dext f + \pi^* \hodge\dext \phi - w$ interpolated to corner $j$. The solution is given by the explicit formula
\begin{equation}
	\Sigma_j = \left(1 - \frac{1}{\mu|\bar{\Sigma}_j|_g}\right)^+ \bar{\Sigma}_j, \qquad \bar{\Sigma}_j = \begin{pmatrix}
		(\hat\Sigma_j)_H \\
		((\hat\Sigma_j)_V)^+
	\end{pmatrix},
\end{equation}
where $y^+ \coloneqq \max\{0, y\}$. Similarly, at each interior edge center $e$, we compute
\begin{equation}
	\Gamma_e = \left(1 - \frac{\lambda}{\nu |\hat\Gamma_e|}\right)^+ \hat\Gamma_e,
\end{equation}
where $\hat\Gamma = \dscm{\hat{M}}^{-1} \dscm{\hat{L}} \phi + \bar{\kappa} - \dsc{z}$.

\subsection{Convergence Criterion}
Because our augmented Lagrangian \eqref{minsec:eq:lagrangian} has two least-squares augmentations associated to two penalty parameters $\mu$ and $\nu$, we compute two pairs of primal and dual residuals to determine convergence. At iteration $t$, these have the form
\begin{equation}
	\begin{alignedat}{4}
		&R_{P}^\mu &&\coloneqq \| \Sigma^{(t)} - \hat\Sigma^{(t)} \|_g \qquad &&R_{P}^\nu &&\coloneqq \|\Gamma^{(t)} - \hat\Gamma^{(t)}\| \\
		&R_{D}^\mu &&\coloneqq \| \Sigma^{(t-1)} - \Sigma^{(t)} \|_g \qquad &&R_{D}^\nu &&\coloneqq \|\Gamma^{(t-1)} - \Gamma^{(t)}\|.
	\end{alignedat}
\end{equation}
We deem convergence to have occurred when all four residuals fall below a threshold $\epsilon$, equal to $\num{5e-4}$ in our experiments. We also use the residuals to adaptively update the penalty parameters $\mu$ and $\nu$, using a standard adaptive scheme \cite[eq.\ (3.13)]{boyd2011distributed}. Our complete algorithm is shown in \Cref{minsec:alg:admm}.

\begin{algorithm}
	\caption{\textsc{admm}-based algorithm for computing minimal sections.} \label{minsec:alg:admm}
	\DontPrintSemicolon
	\SetKwInOut{Input}{input}
	\SetKwInOut{Result}{result}
	\Input{boundary values $\Gamma_0$, curvature $\kappa$}
	\Result{$\Sigma$, $\Gamma$, $\dsc{f}$}
	\BlankLine
	\Repeat{$\min\left\{R^\mu_{P}, R^\mu_{D}, R^\nu_{P}, R^\nu_{D}\right\} < \epsilon$}{
		\Begin(\tcc*[h]{global step}){
			solve \eqref{minsec:eq-fpde-discrete} for $\dsc{f}_k$\;
			solve \eqref{minsec:eq:pde-freq0-discrete} for $\dsc{f}_0$ and $\phi$\;
			$(\hat\Sigma_k)_H \gets \dscm{G} \dscm{P}_k \dsc{f}_k$\;
			$(\hat\Sigma_k)_V \gets ik \dscm{P}_k \dsc{f}_k$\;
			$(\hat\Sigma_0)_H \gets \bar{\tau} + \dscm{G}\dscm{P}_0 \dsc{f}_0 + \dscm{J} \dscm{\hat{G}} \phi$\;
			$\hat\Gamma \gets \dscm{\hat{M}}^{-1} \dscm{\hat{L}} \phi + \bar{\kappa} - \dsc{z}$\;
		}
		\Begin(\tcc*[h]{local step}){
			\For{triangle corner $j$}{
				$\bar{\Sigma}_j \gets \begin{psmallmatrix}
					(\hat\Sigma_j)_H \\
					((\hat\Sigma_j)_V)^+
				\end{psmallmatrix}, \qquad \Sigma_j \gets \left(1 - \frac{1}{\mu|\bar{\Sigma}_j|_g}\right)^+ \bar\Sigma_j$\;
			}
			\For{interior edge $e$}{
				$\Gamma_e \gets \left(1 - \frac{\lambda}{\nu |\hat\Gamma_e|}\right)^+ \hat\Gamma_e$\;
			}
		}
		\Begin(\tcc*[h]{dual update}){
			$w \gets w + \Sigma - \bar\tau - \dext f - \pi^* \hodge \dext \phi$\;
			$z \gets z + \Gamma + \bar\kappa - \Delta \phi$\;
		}
	}
\end{algorithm}

\subsection{Extracting the Field} \label{minsec:subsec:extraction}
If $\Sigma$ were a perfectly concentrated current, $\Sigma_V$ would be a delta measure on each fiber, and $\dsc{f}$ would be a sawtooth wave over each fiber. Thus, using the Fej\'er approximant \eqref{minsec:eq:fejer}, we would have
\begin{equation}
	\dsc{f}_{k,v} = \frac{1}{ik}\left( 1 - \frac{|k|}{K} \right) e^{-ik\sigma_v},
\end{equation}
where $\sigma_v$ is the angle of the field at vertex $v$. Thus, we can extract the field values from the Fourier coefficients of $f$:
\begin{equation}
	\dsc{z}_v = e^{i\sigma_v} = -i \frac{\dsc{f}_{-1,v}}{|\dsc{f}_{-1,v}|}.
\end{equation}

\section{Results} \label{minsec:sec:results}
We visualize the sections we compute in several different ways. For a flat domain $B$, the bundle $E$ is simply a Cartesian product $E \times \Sph^1$. In this setting, we can embed the whole bundle and directly plot computed currents as periodic surfaces in 3D. To render a current $\Sigma$, we plot the pointwise norm $|\Sigma|_g$ and use the same norm for alpha blending (\Cref{minsec:fig:validation}, left).

Alternatively, we can plot $|\Sigma|_g$ in polar coordinates in the tangent space of each triangle of the base. Such a plot resembles a plot of the extracted directional field, but it also visualizes the degree to which the mass of the current is concentrated or spread out over each fiber of $E$. To make this concentration visible, we normalize the maximum value on each fiber $\max_{\pi^{-1}(x)}|\Sigma_x|_g$ to $1$ for this visualization (\Cref{minsec:fig:validation}, middle).

Finally, we can also extract the directional field $z$ and plot either sampled values (\Cref{minsec:fig:validation}, right) or integral curves (\Cref{minsec:fig:degree}).

\paragraph{Validating \textsc{admm}.} In \Cref{minsec:fig:validation}, we validate our \textsc{admm}-based algorithm against a standard second-order interior-point solver, \textsc{mosek} \cite{mosek_opt}. The problem \eqref{minsec:prob.primal} is discretized with our operator decomposition and passed to \textsc{mosek} via the \textsc{cvx} modeling framework \cite{cvx,gb08}. The solutions are visually identical.
\begin{figure}
\newcommand{\imgheight}{0.30\columnwidth}
\centering
\includegraphics[height=\imgheight]{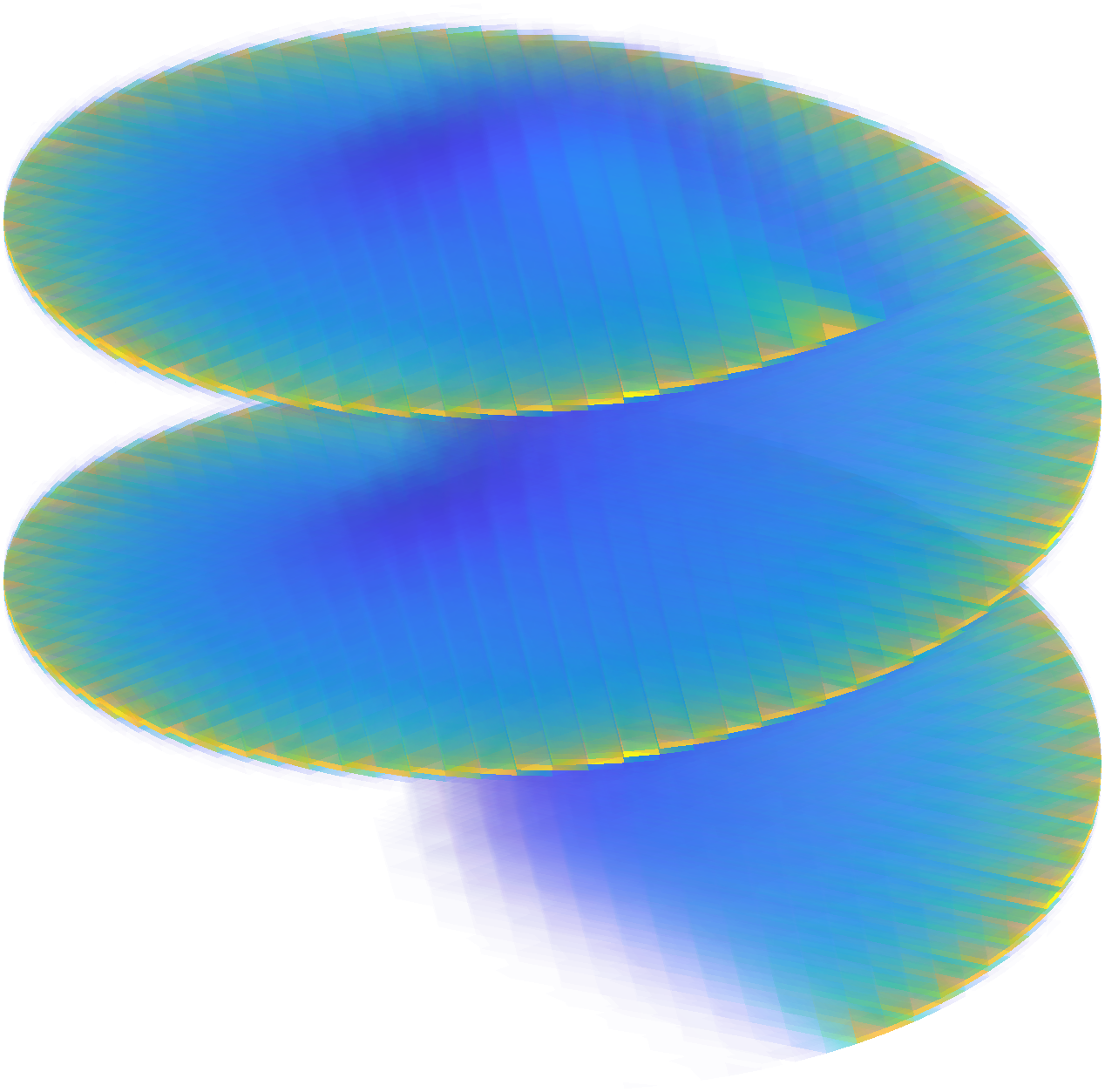}\hfill%
\includegraphics[height=\imgheight]{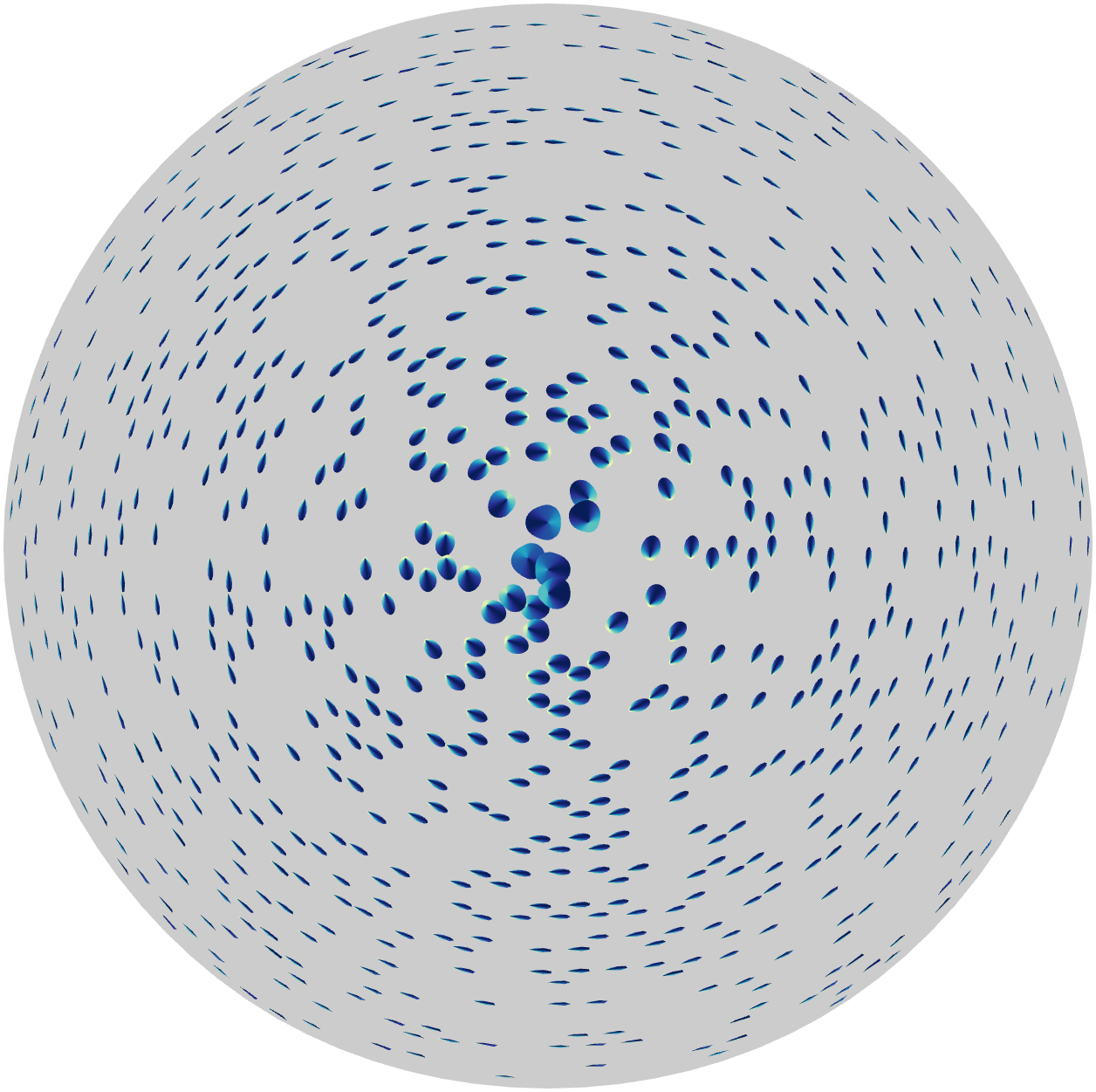}\hfill%
\includegraphics[height=\imgheight]{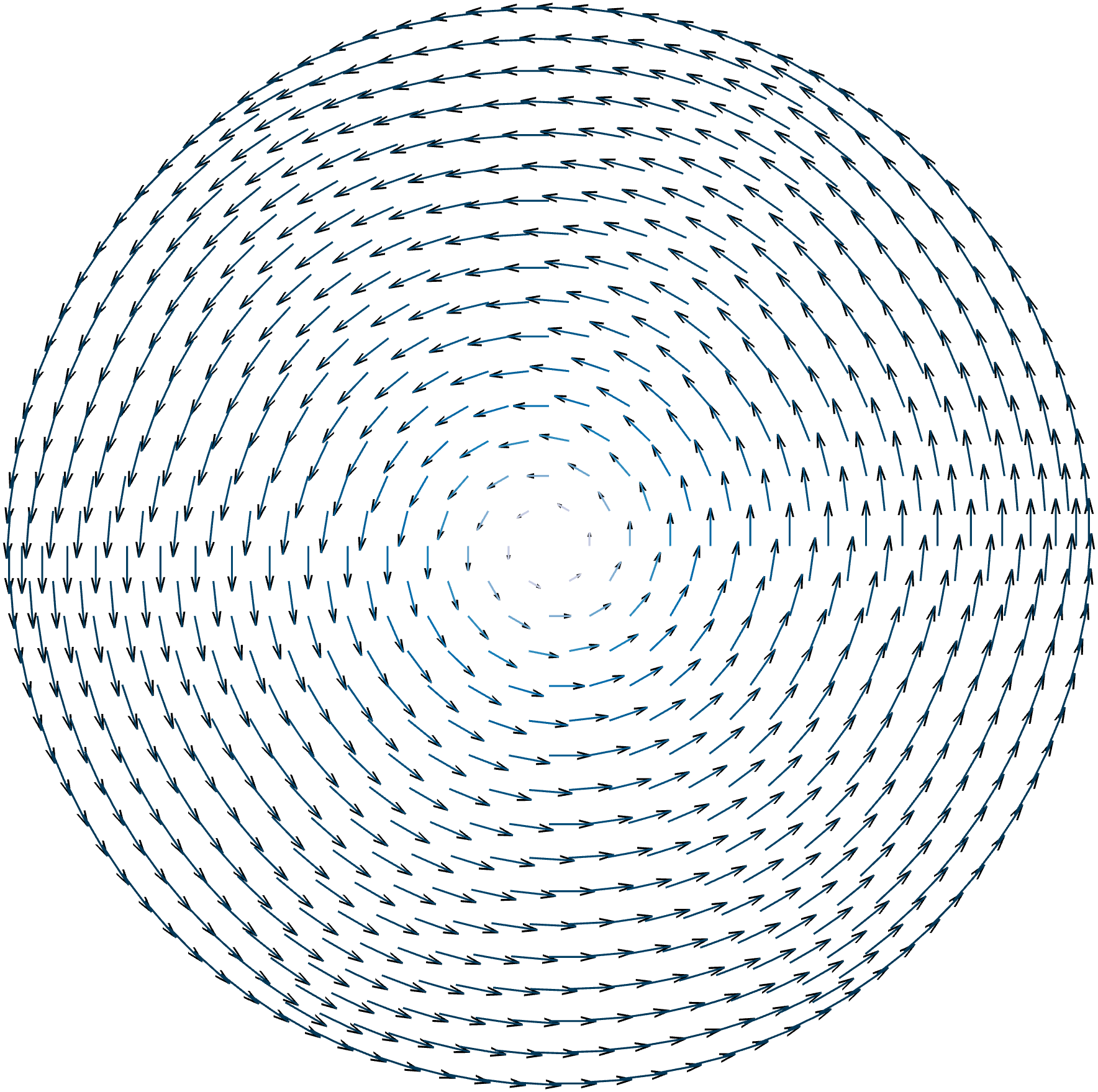}\\
\includegraphics[height=\imgheight]{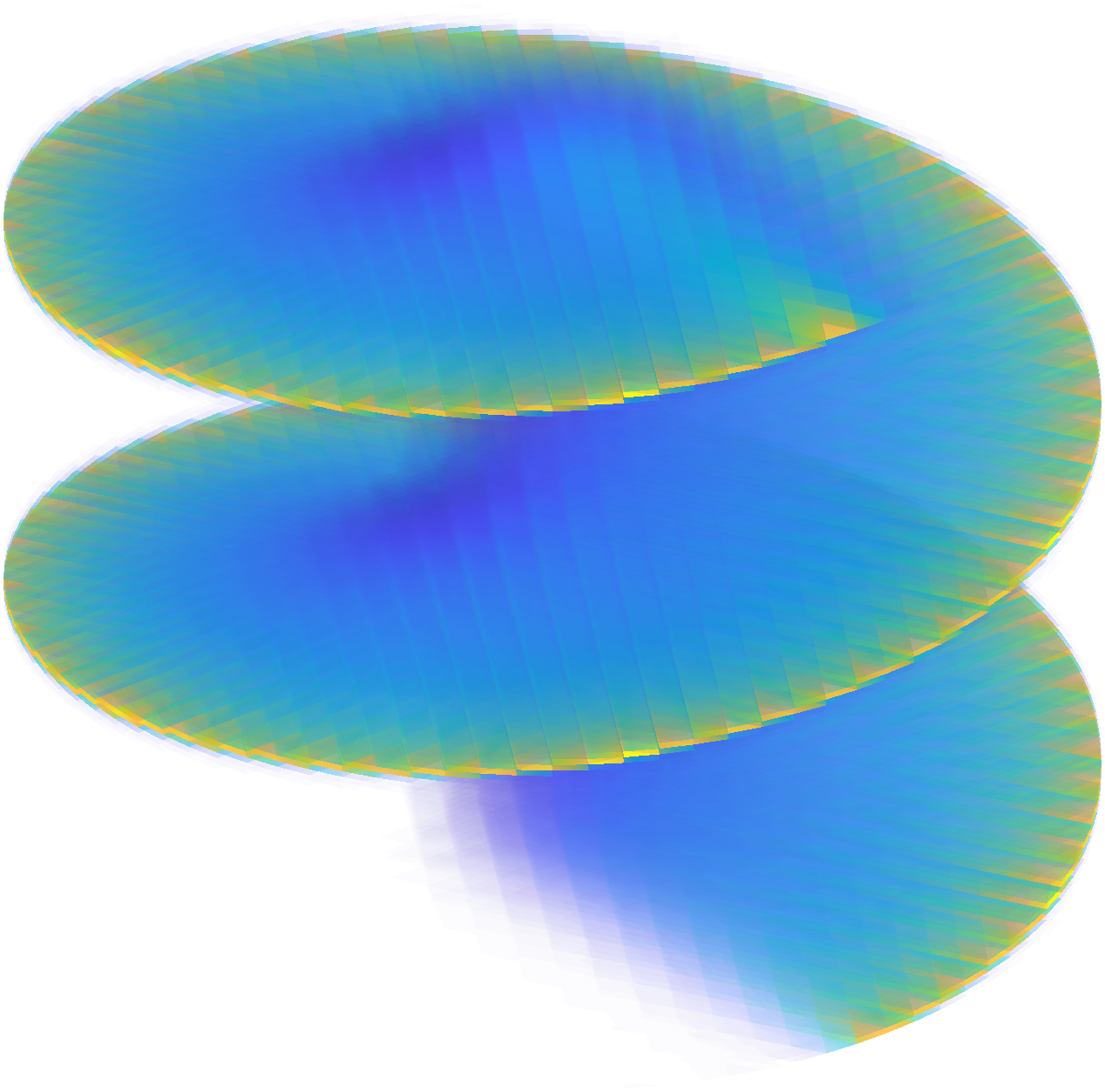}\hfill%
\includegraphics[height=\imgheight]{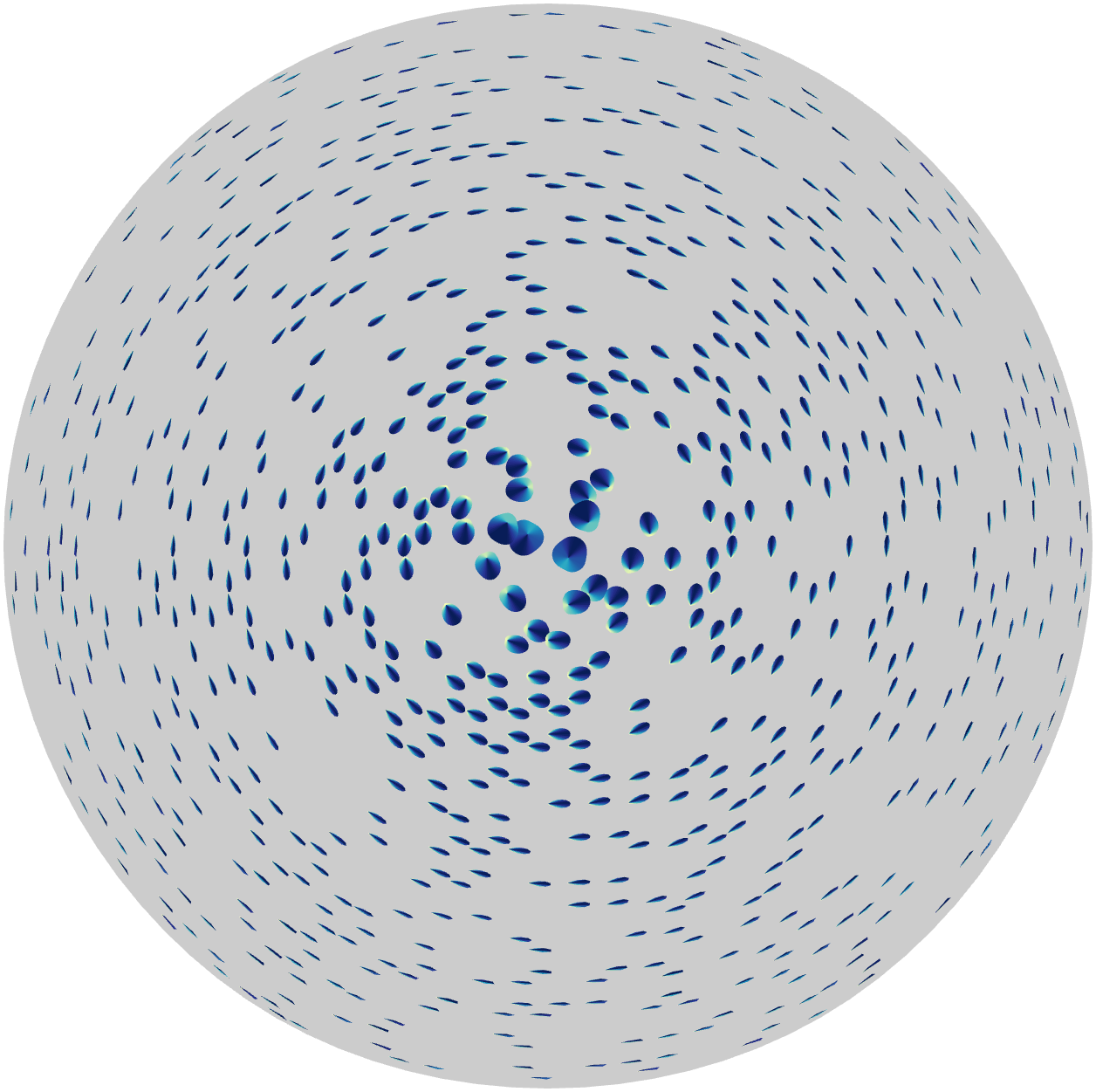}\hfill%
\includegraphics[height=\imgheight]{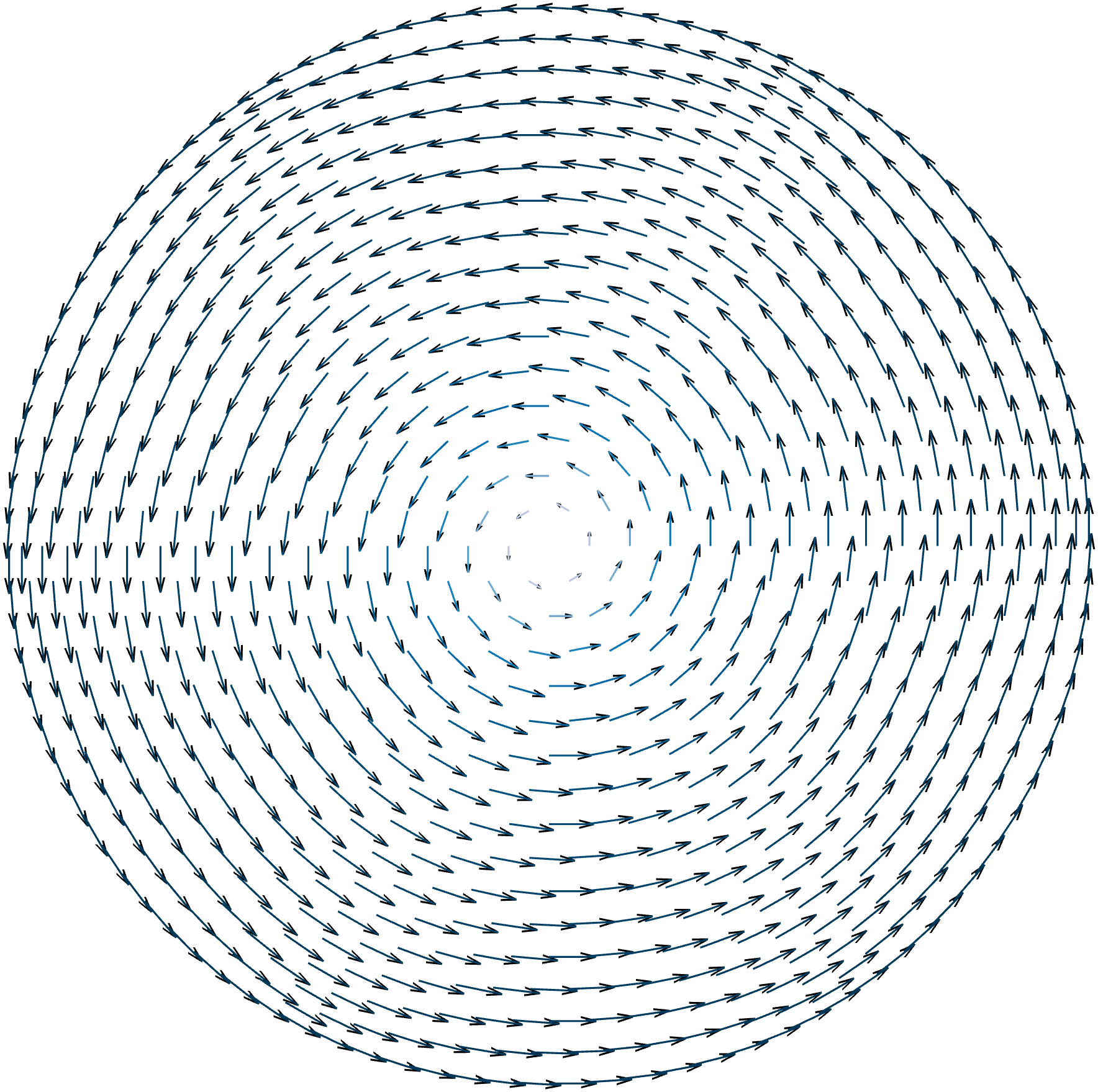}\\
\caption{A minimal section computed by \textsc{admm} (bottom) agrees with that computed by \textsc{mosek} (top) on a small example.}
\label{minsec:fig:validation}
\end{figure}

In \Cref{minsec:fig:convergence}, we examine the convergence of \Cref{minsec:alg:admm} on a simple domain. The logarithmic plot displays the hallmark linear convergence of \textsc{admm}. Meanwhile, the current converges quickly to a near-optimal solution encoding the correct topology, as shown in the volume plots.
\begin{figure}
\centering
\pgfplotstableread[col sep=comma]{figures/convergence/disk/admm_data.csv}{\admmData}
\tikzsetnextfilename{convergence-plot}
\begin{tikzpicture}
\begin{loglogaxis}[
        mark size = 0.5pt,
        width = \columnwidth,
        height = 0.6\columnwidth,
        enlarge x limits = 0,
        grid = none,
        xlabel = {Iteration $j$},
        ylabel = {Residual},
        xlabel near ticks,
        ylabel near ticks,
        legend pos = north east,
        legend cell align = left,
        legend style = {font=\footnotesize, row sep=0.1pt},
        every tick label/.append style = {font=\tiny}]
    \addplot+ [mark = none, thick] table [x = iter, y = pRes1] {\admmData};
    \addlegendentry{primal residual $R_P^\mu$};
    
    \addplot+ [mark = none, thick] table [x = iter, y = dRes1] {\admmData};
    \addlegendentry{dual residual $R_D^\mu$};
    
    \addplot [black, dashed] table [x = iter, y expr={10/\thisrowno{0}}] {\admmData}
    	node [above right, pos=0.3] {$\propto j^{-1}$};
\end{loglogaxis}
\end{tikzpicture}\\%
\newcommand{\imgwidth}{0.1\columnwidth}%
\begin{tabulary}{0.95\columnwidth}{@{}CCCCCCCC@{}}
	\includegraphics[width=\imgwidth]{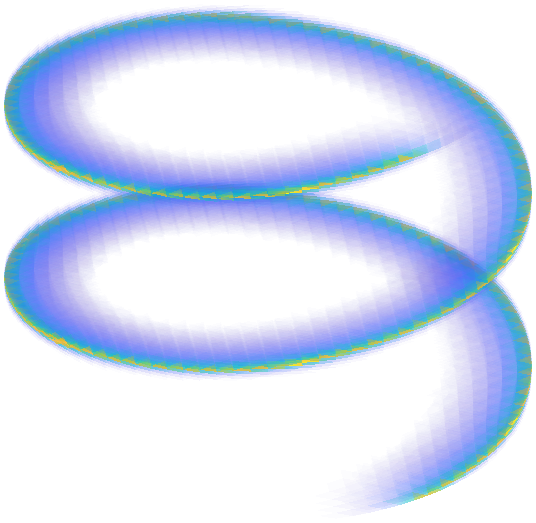} &
	\includegraphics[width=\imgwidth]{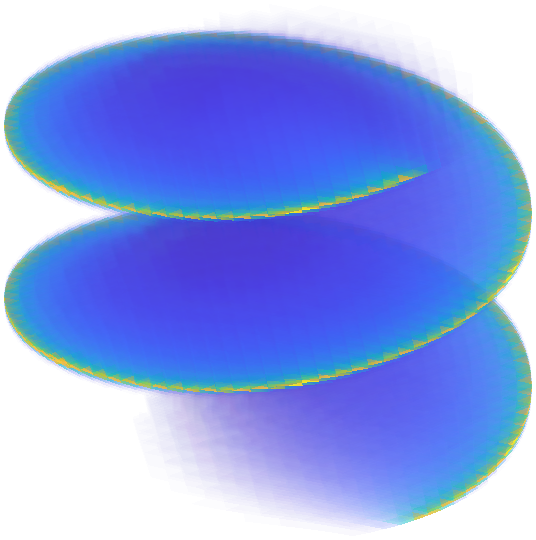} &
	\includegraphics[width=\imgwidth]{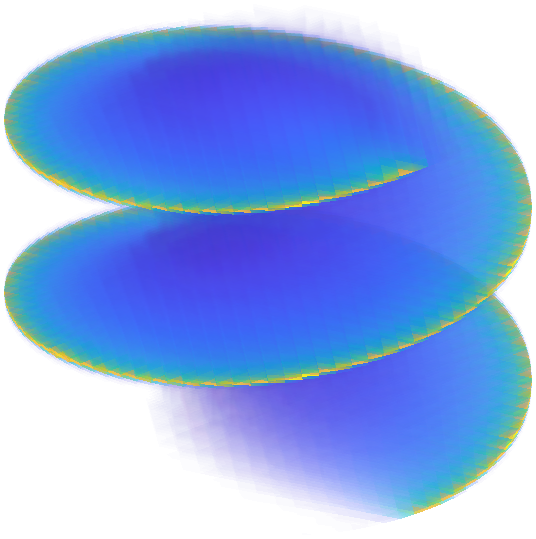} &
	\includegraphics[width=\imgwidth]{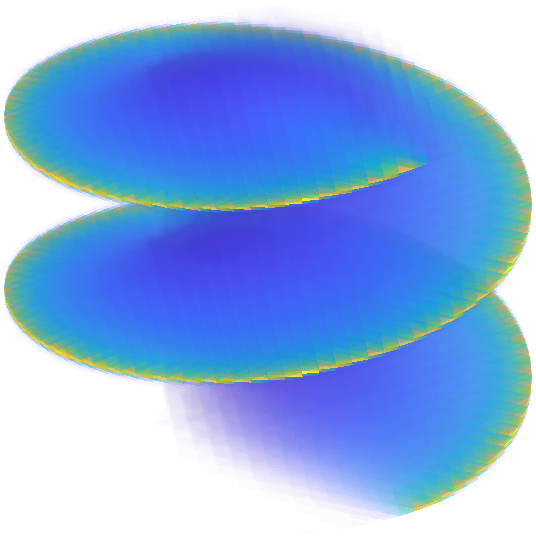} &
	\includegraphics[width=\imgwidth]{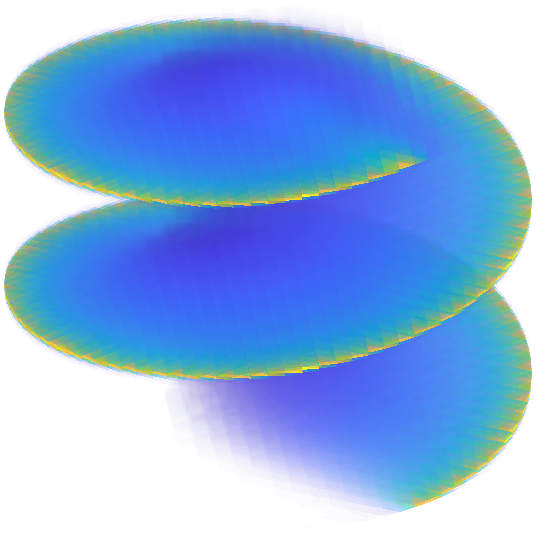} &
	\includegraphics[width=\imgwidth]{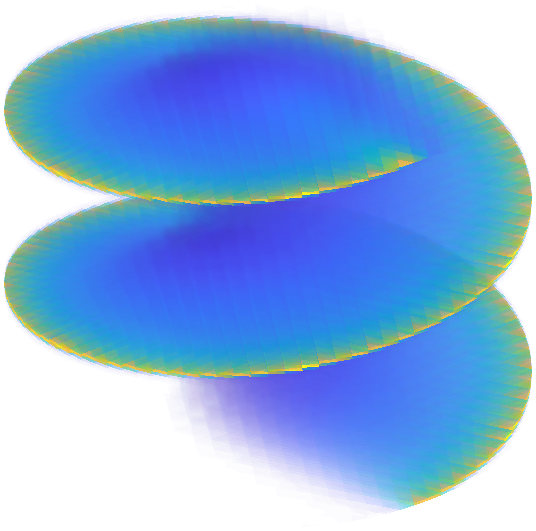} &
	\includegraphics[width=\imgwidth]{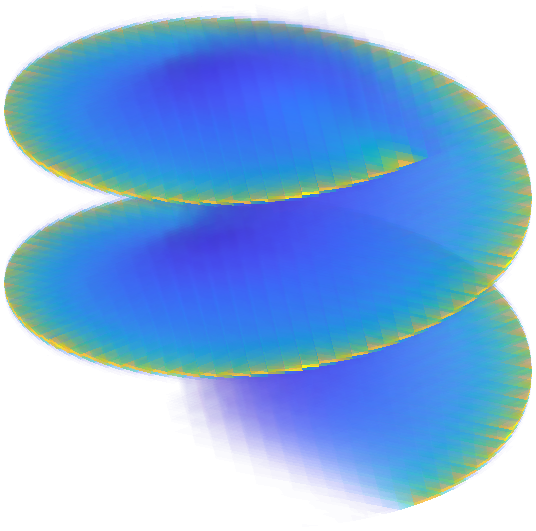} &
	\includegraphics[width=\imgwidth]{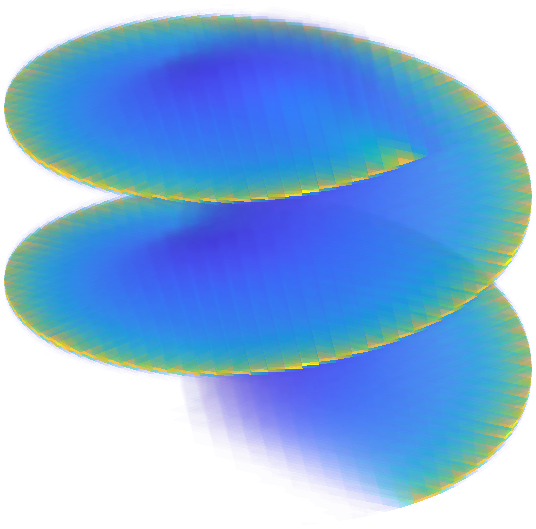} \\
	{$j=1$} & {$11$} & {$21$} & {$31$} & {$41$} & {$51$} & {$61$} & {$71$}
\end{tabulary}
\caption{\textsc{admm} quickly reaches a solution, and residuals converge better than linearly.}
\label{minsec:fig:convergence}    
\end{figure}

\paragraph{Degree.} In addition to unit vector fields, we can also compute directional fields of higher degree: line fields, cross fields, etc. This amounts to working with a circle bundle of higher degree, i.e., a tensor power of the unit tangent bundle. To construct a bundle of degree $d$ requires a few simple changes:
\begin{enumerate*}[label=(\roman*)]
	\item all parallel transport operators $\rho$ must be replaced by their $d$th powers;
	\item the curvature $\kappa$ is multiplied by $d$; and
	\item the boundary angle values $\gamma_0$ get multiplied by $d$ (modulo $2\pi$).
\end{enumerate*} \Cref{minsec:fig:degree} visualizes minimal sections of increasing degree on the same base surface.
\begin{figure}
\newcommand{\imgwidth}{0.2\columnwidth}
\centering
\begin{tabulary}{0.95\columnwidth}{@{}CCCC@{}}
\includegraphics[width=\imgwidth]{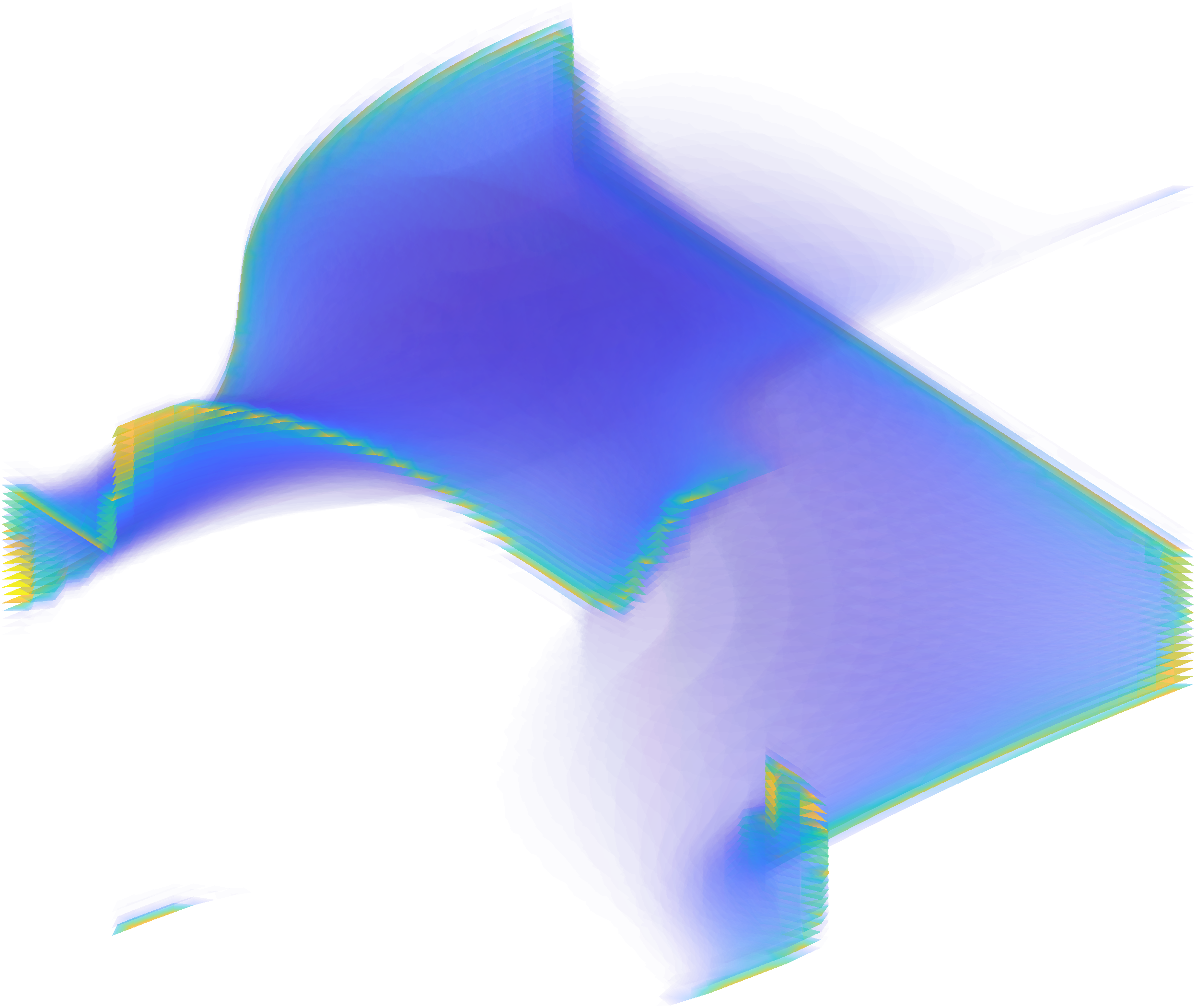} &%
\includegraphics[width=\imgwidth]{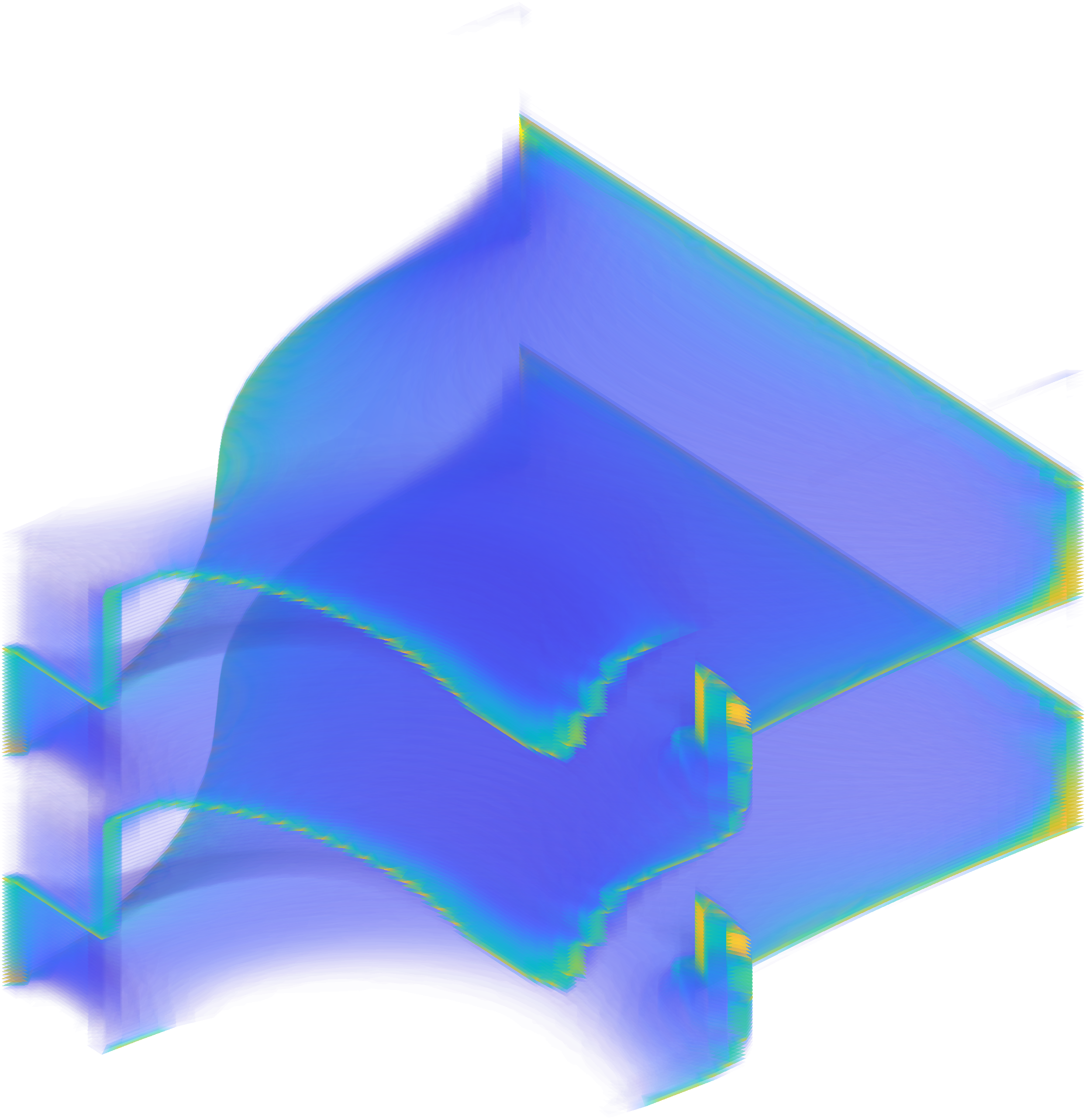} &%
\includegraphics[width=\imgwidth]{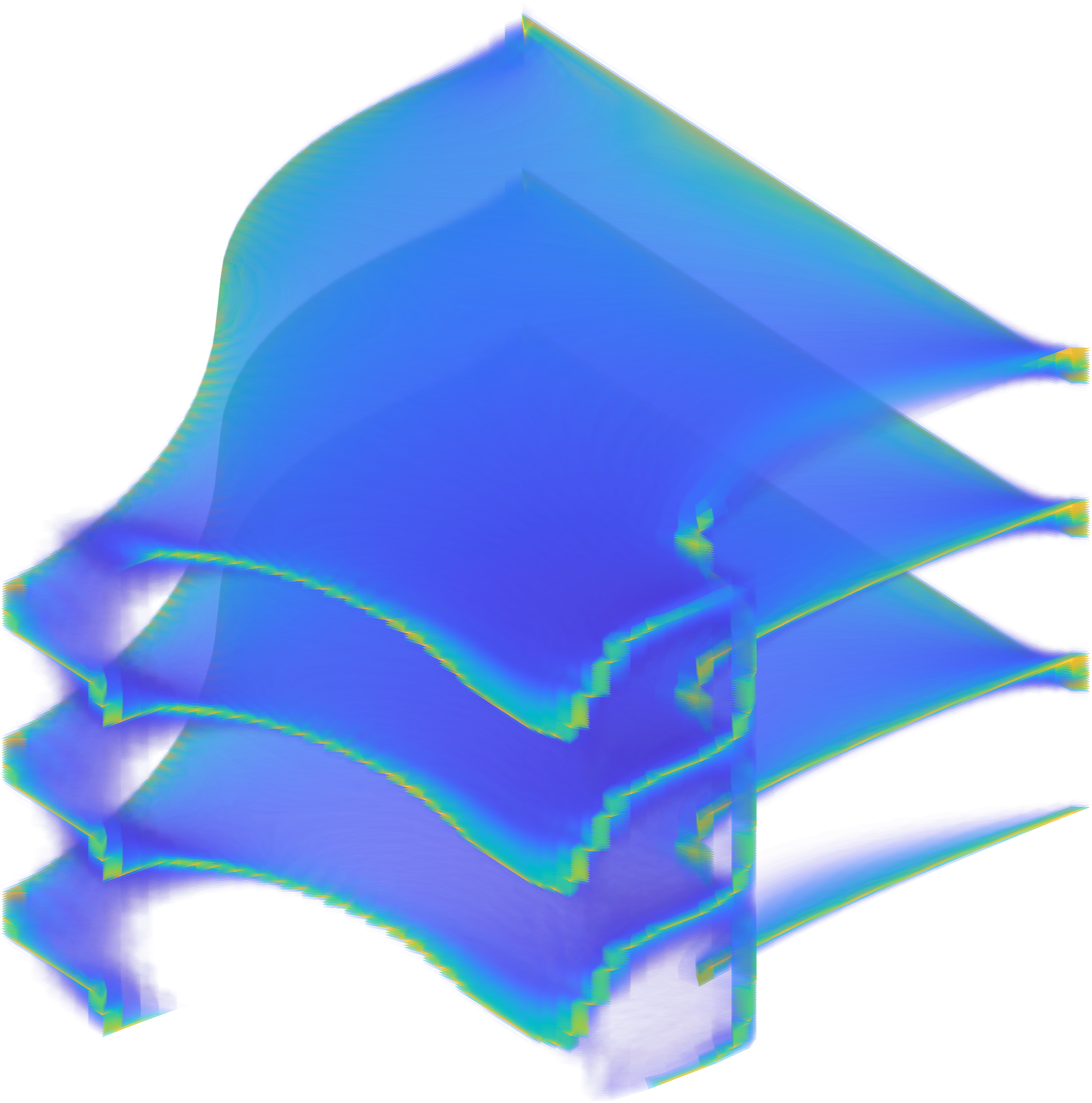} &%
\includegraphics[width=\imgwidth]{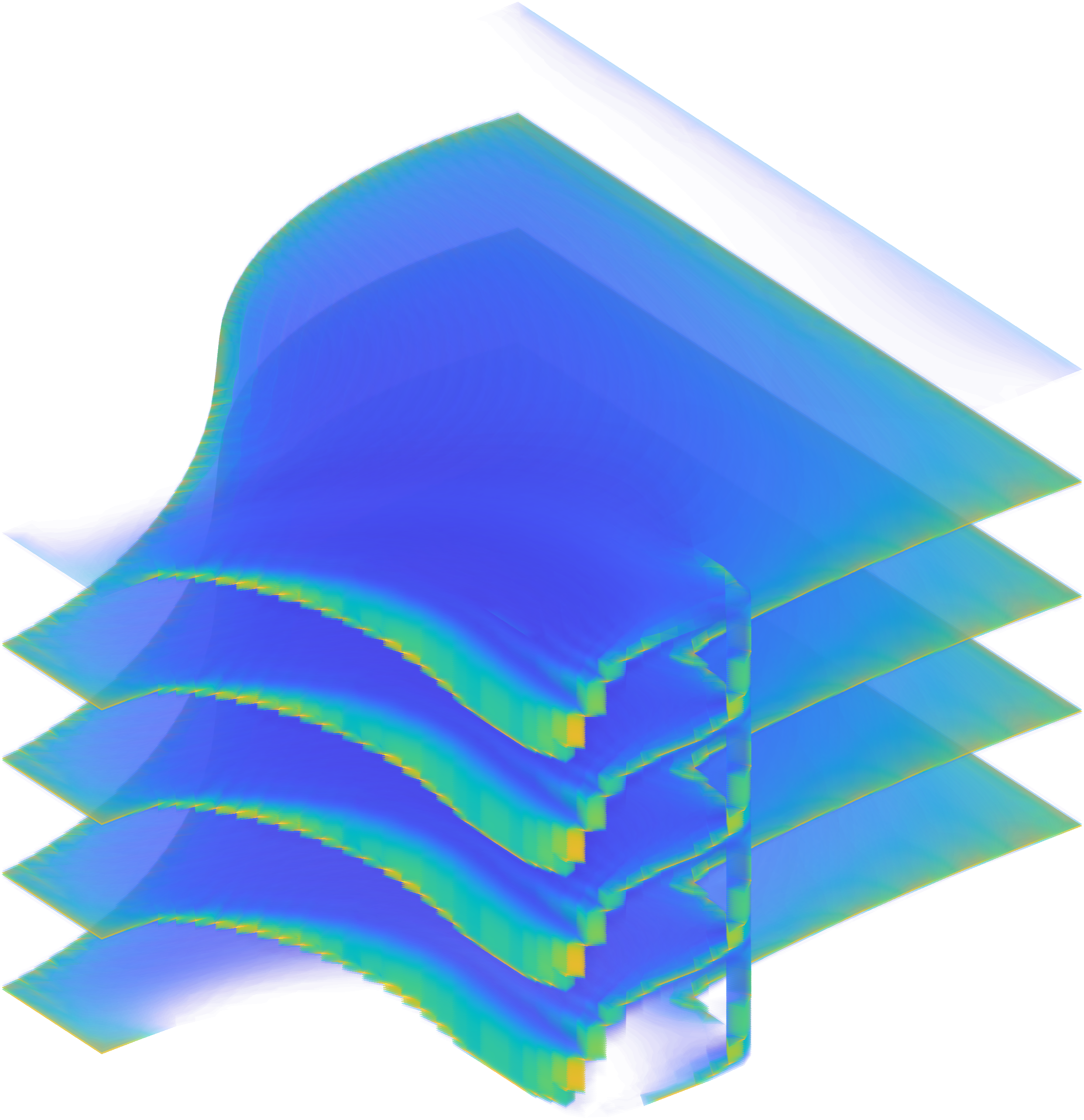}\\%
\includegraphics[width=\imgwidth]{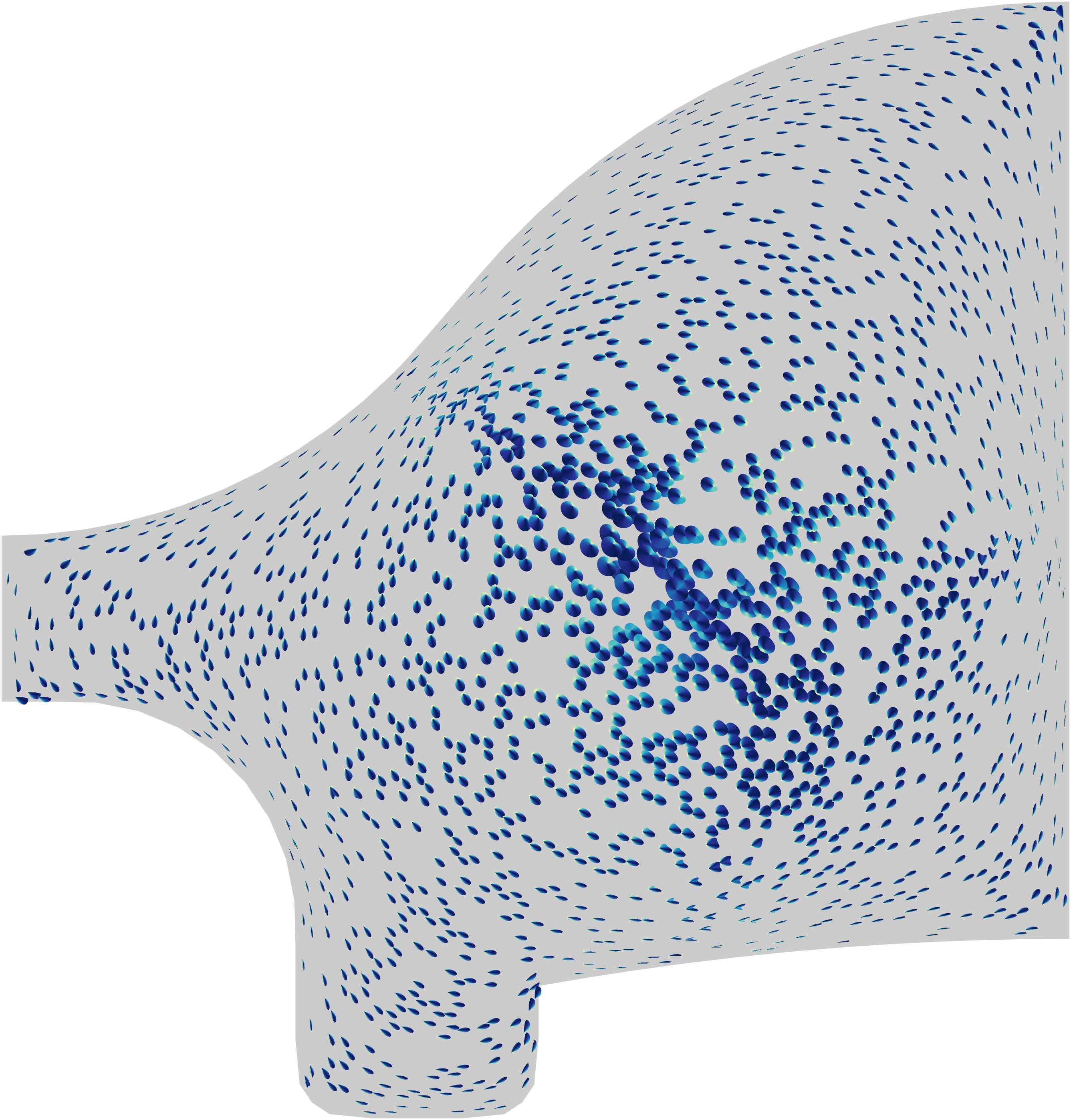} &%
\includegraphics[width=\imgwidth]{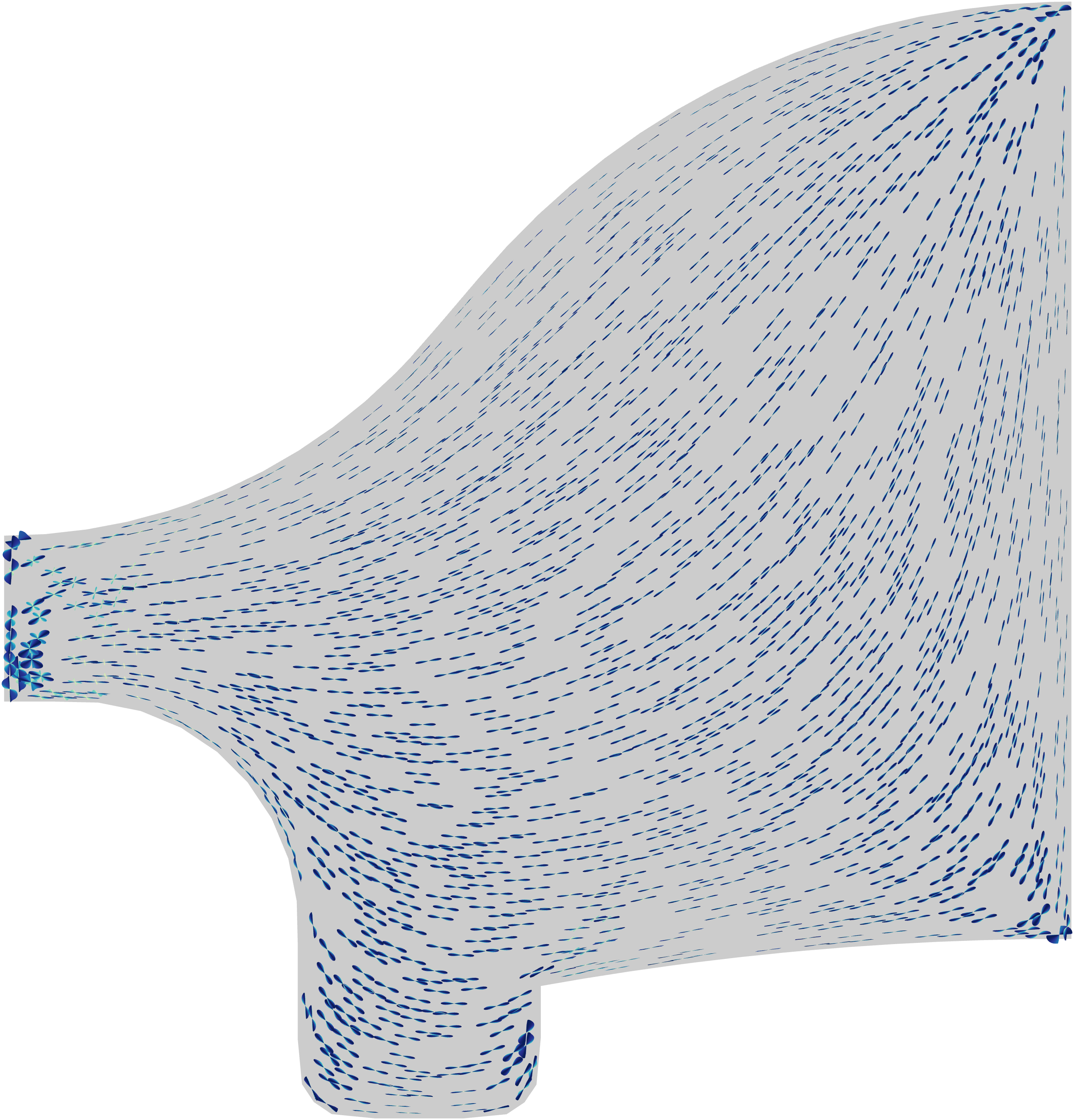} &%
\includegraphics[width=\imgwidth]{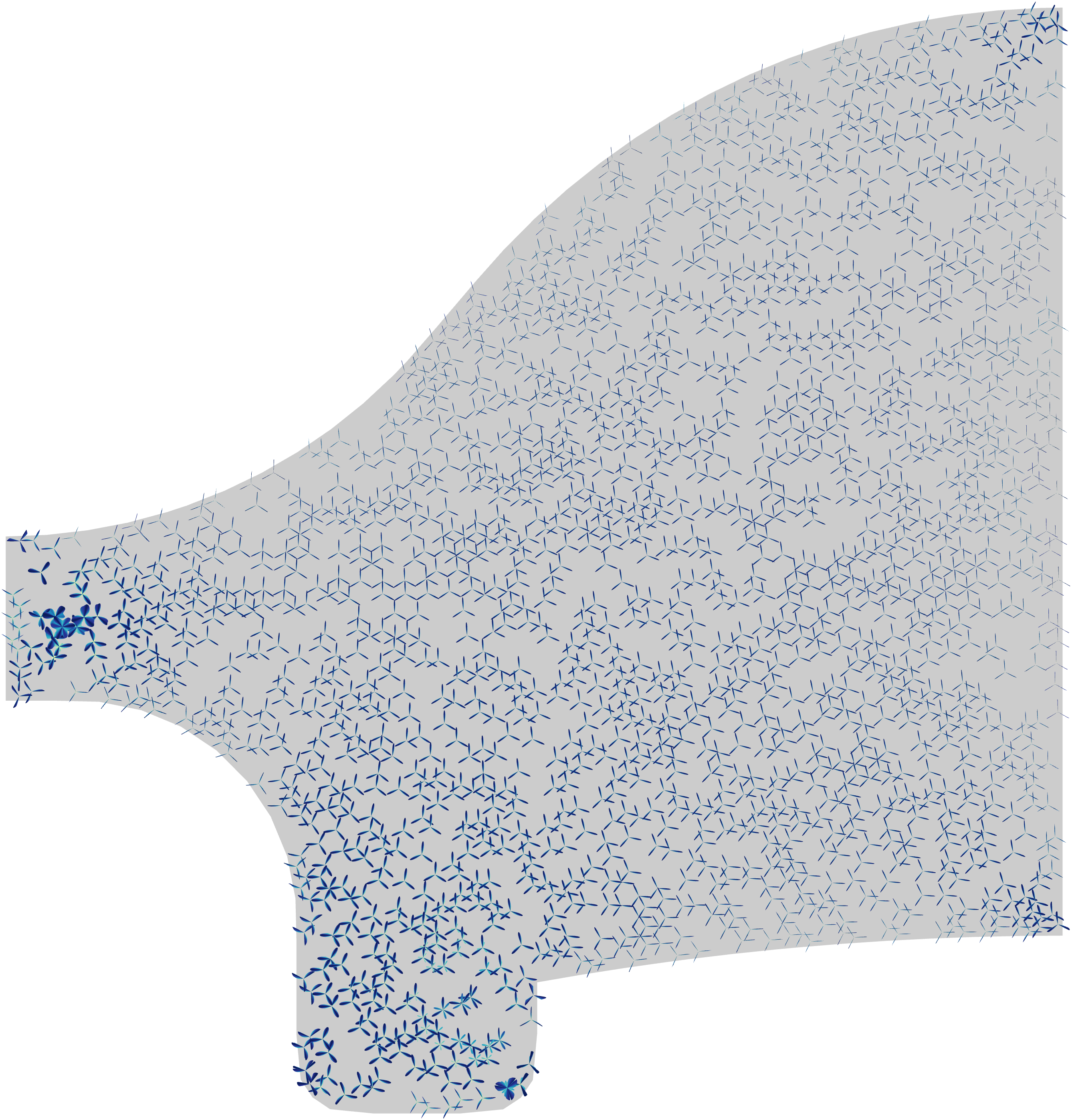} &%
\includegraphics[width=\imgwidth]{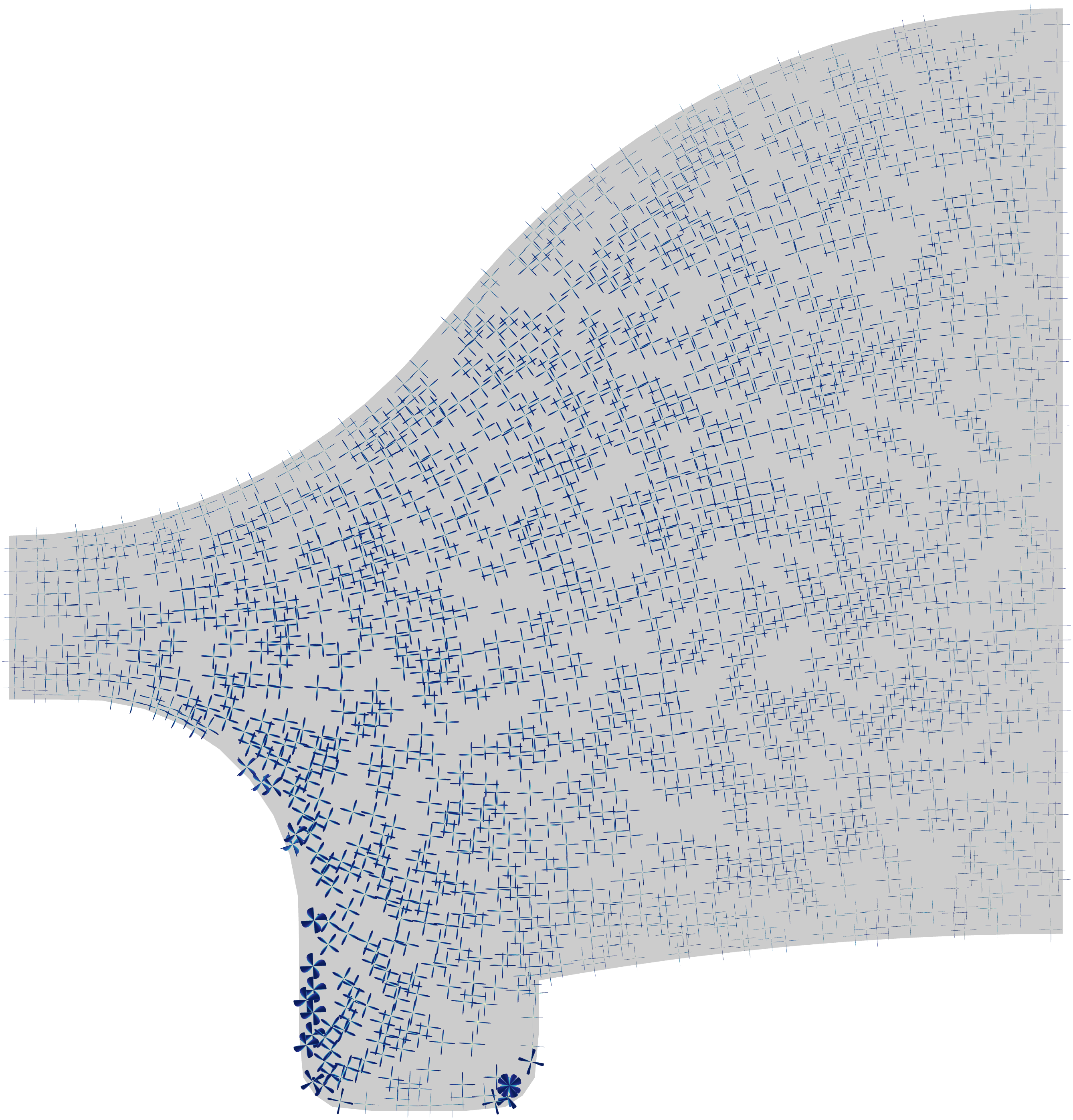}\\%
\includegraphics[width=\imgwidth]{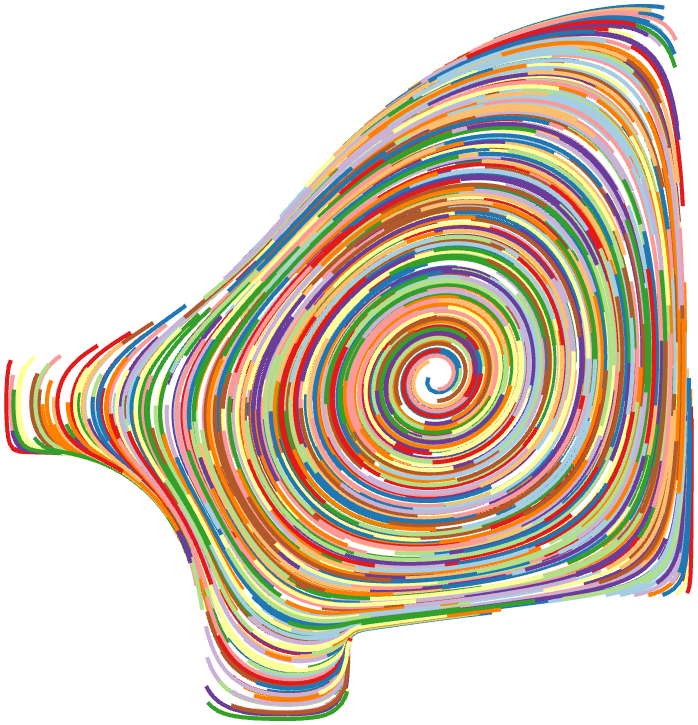} &%
\includegraphics[width=\imgwidth]{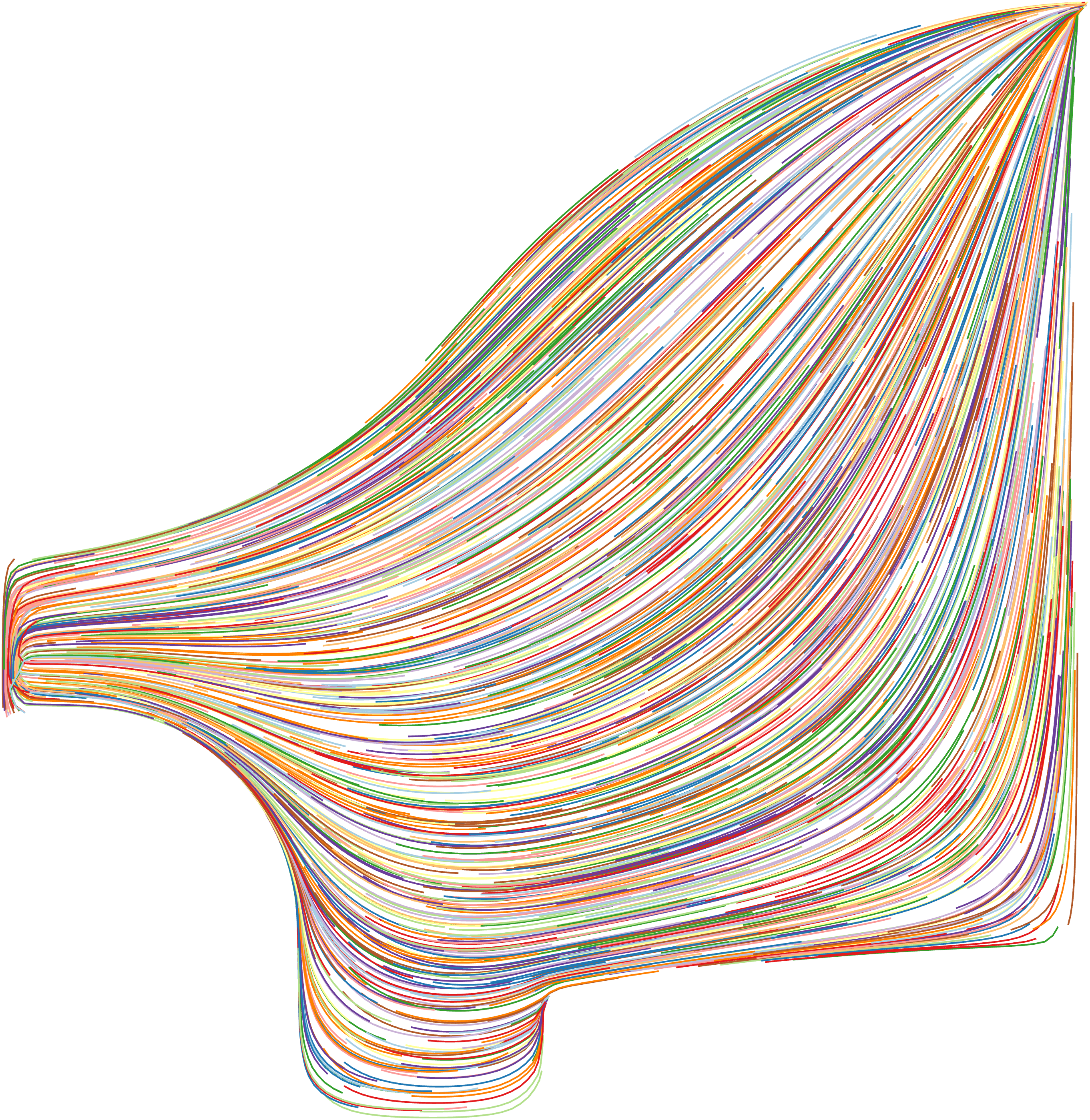} &%
\includegraphics[width=\imgwidth]{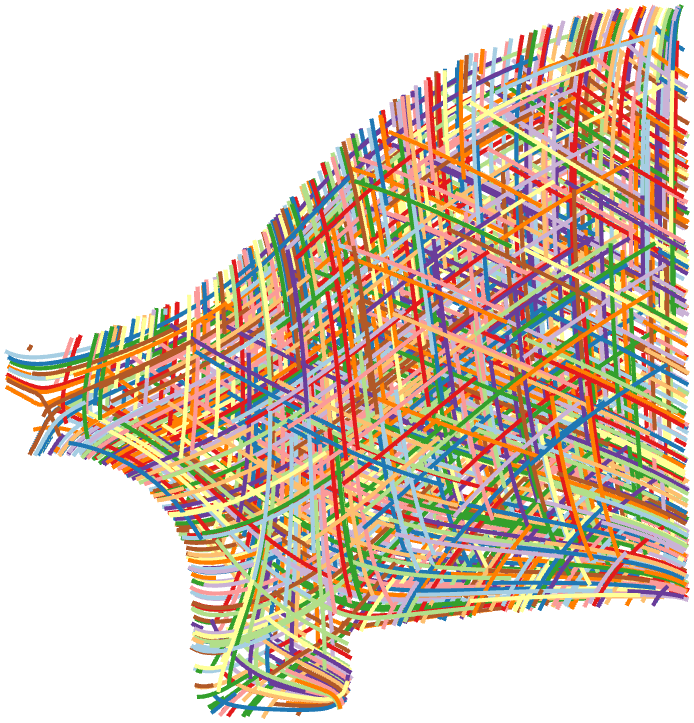} &%
\includegraphics[width=\imgwidth]{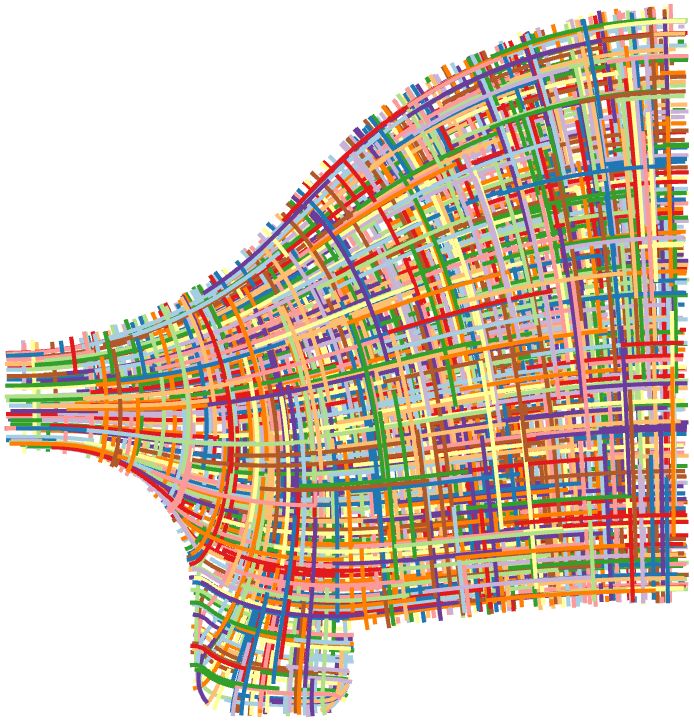}\\%
$d=1$ & $d=2$ & $d=3$ & $d=4$
\end{tabulary}
\caption{Minimal sections on bundles of increasing degree on the same base domain.}
\label{minsec:fig:degree}
\end{figure}

\paragraph{Parameters.} There are three tunable parameters in our method: \begin{enumerate*}[label=(\roman*)] \item the number of discrete segments $N$ of the fiber, which also sets the number of Fourier frequency components $K$; \item the regularization parameter $\lambda$, which controls the importance of sparsity of the singularities; and \item the fiber radius $r$, which controls the smoothness of solutions. \end{enumerate*} We find that using $N = 64$ (frequencies up to $K = \pm 31$) is suitable for most examples.

As shown in \Cref{minsec:fig:lambda}, increasing $\lambda$ penalizes singularities more relative to field energy, yielding a cross field with more curvature.
\begin{figure}
	\newcommand{\imgwidth}{0.45\columnwidth}
	\centering
	\begin{tabulary}{0.95\columnwidth}{@{}CC@{}}
		\includegraphics[width=\imgwidth]{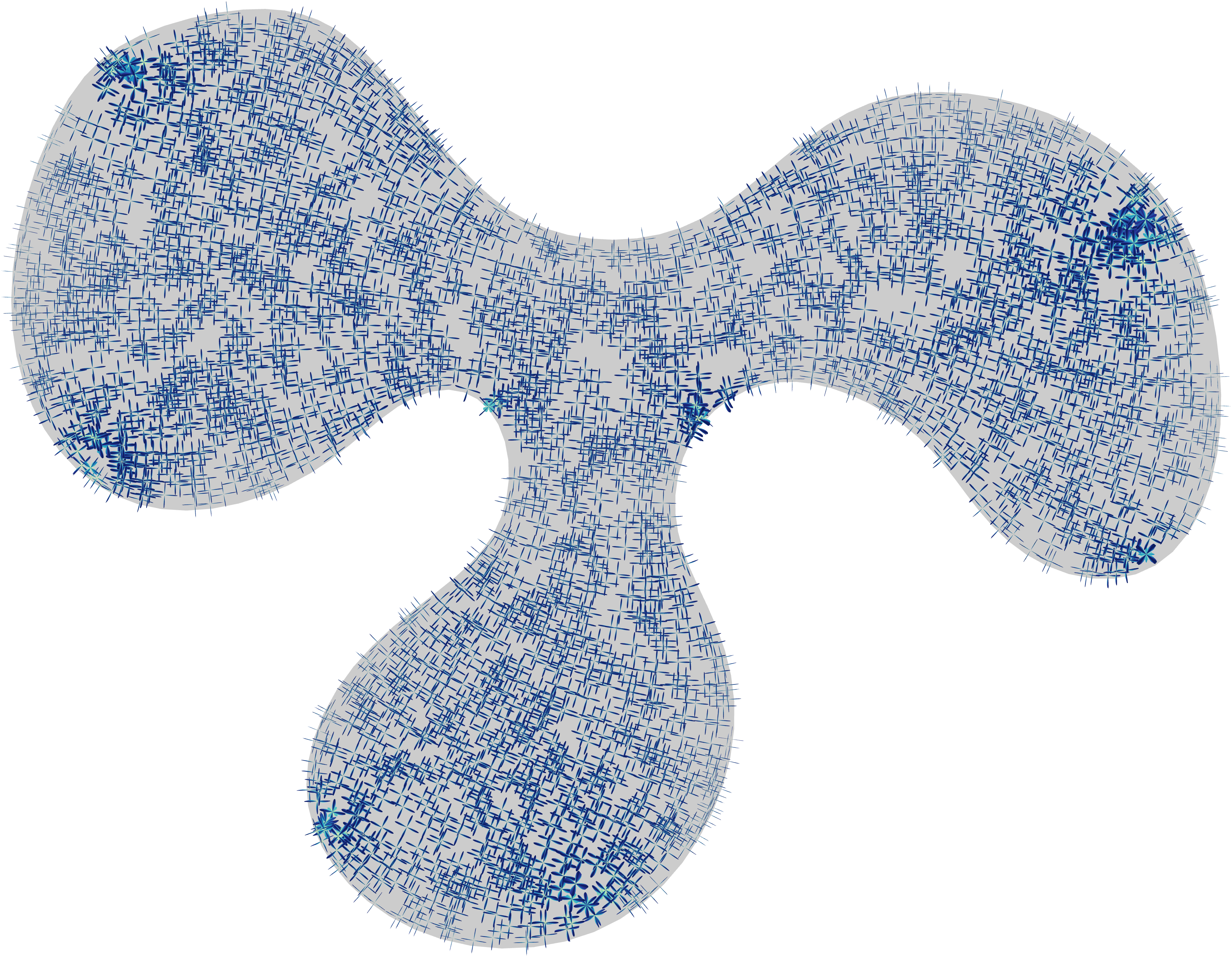} &
		\includegraphics[width=\imgwidth]{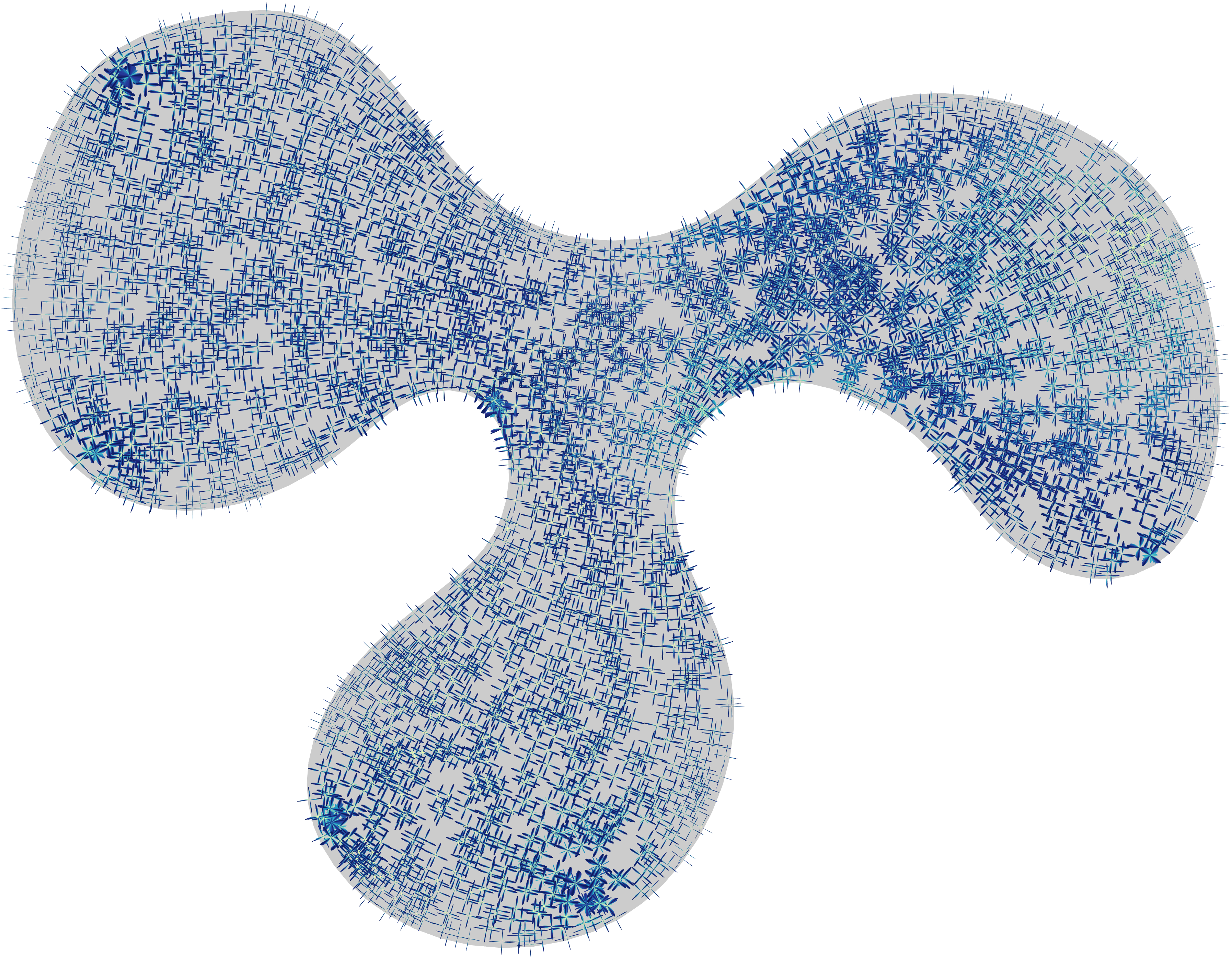} \\
		\includegraphics[width=\imgwidth]{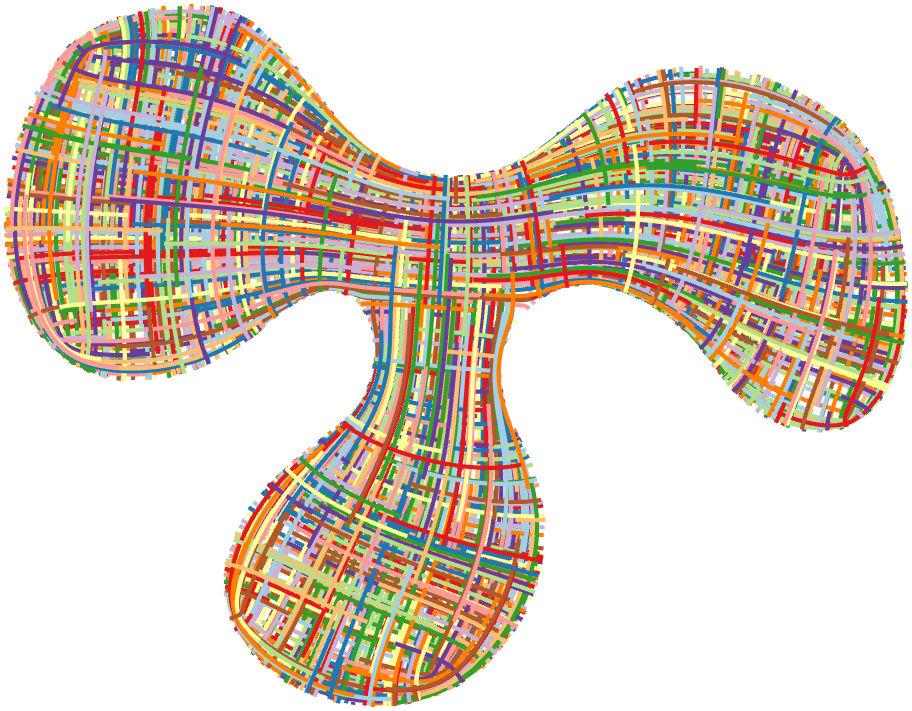} &
		\includegraphics[width=\imgwidth]{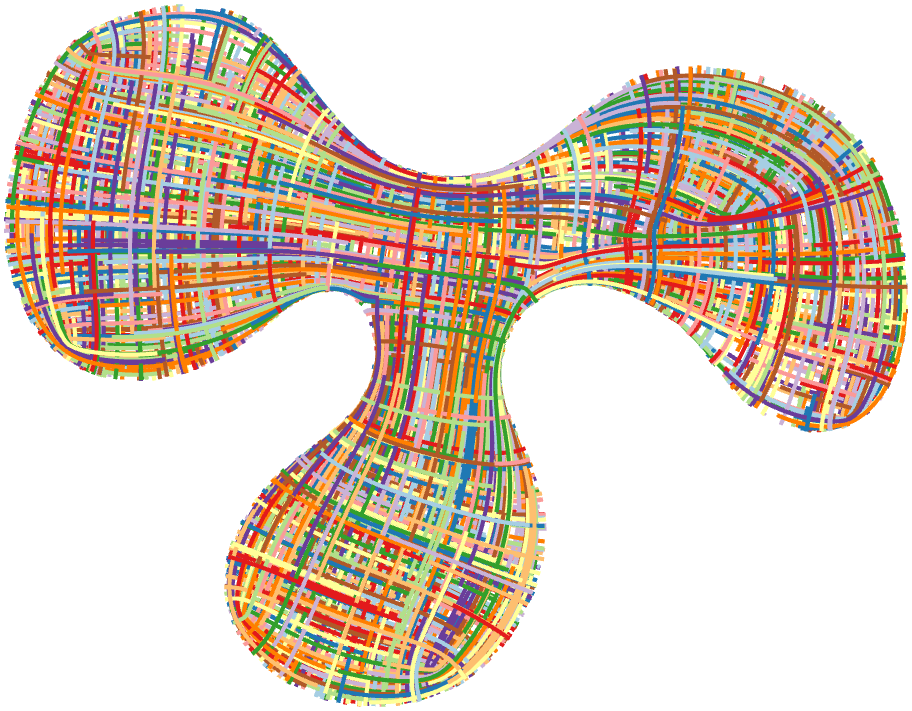} \\
		$\lambda = 2\pi$ & $\lambda = 10$
	\end{tabulary}
	\caption{With base area $100$ and $r = 1$, changing $\lambda$ trades off the position and number of singularities against overall smoothness of the field.}
	\label{minsec:fig:lambda}
\end{figure}
Meanwhile, the fiber radius $r$ controls whether the mass norm behaves more like Dirichlet energy or like total variation. It is defined relative to the area of the base, as uniformly scaling the bundle should only uniformly scale areas of sections. As shown in \Cref{minsec:fig:fiber-length}, changing the value of $r$ changes the character of the solution, from nearly piecewise constant to more smoothly varying.
\begin{figure}
	\newcommand{\imgwidth}{0.2\columnwidth}
	\centering
	\begin{tabulary}{0.95\columnwidth}{@{}CC|CC@{}}
	\includegraphics[width=\imgwidth]{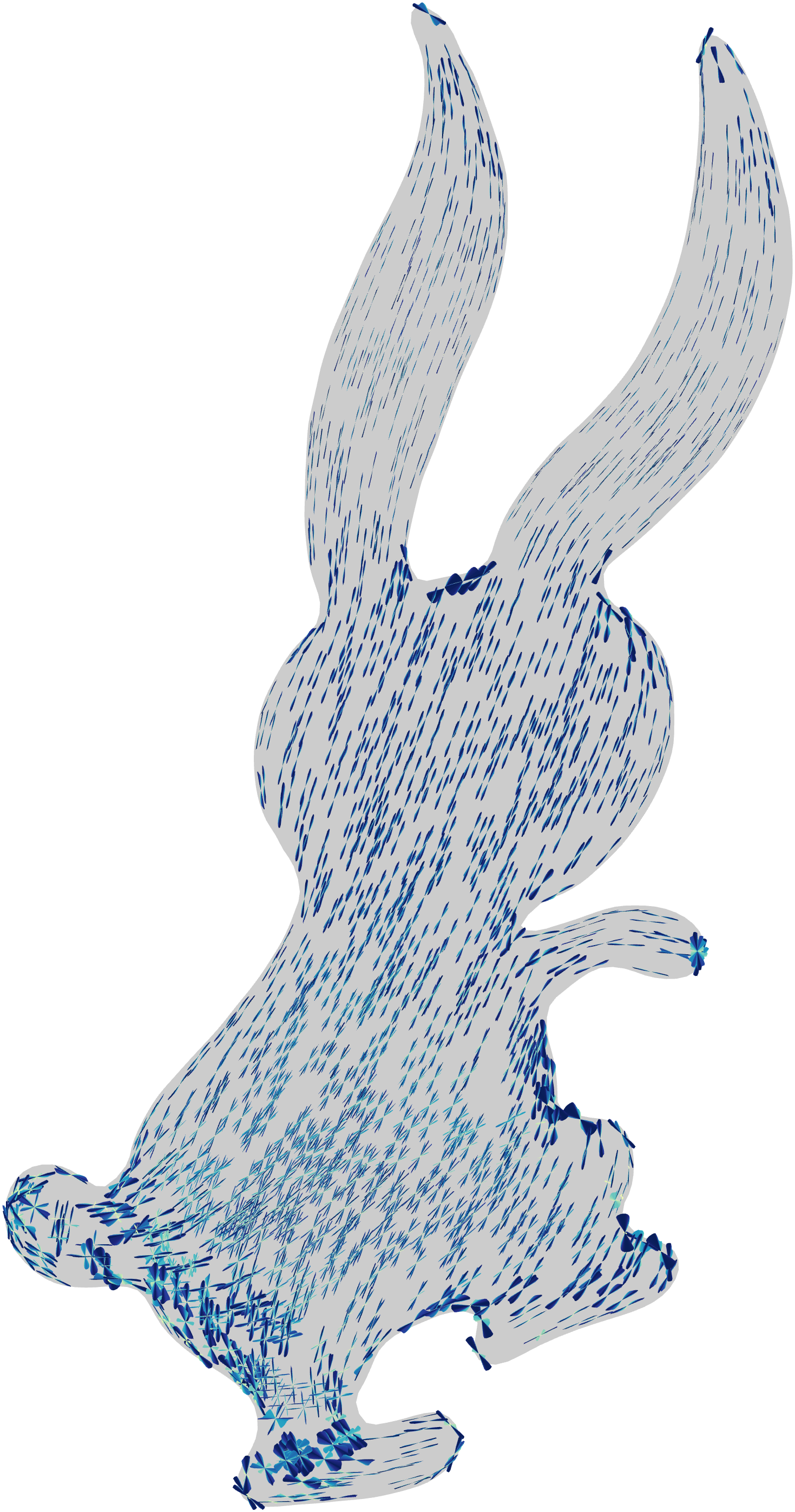} &
	\includegraphics[width=\imgwidth]{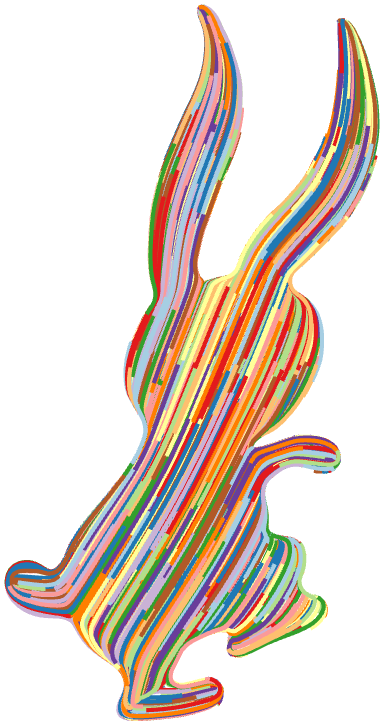} &
	\includegraphics[width=\imgwidth]{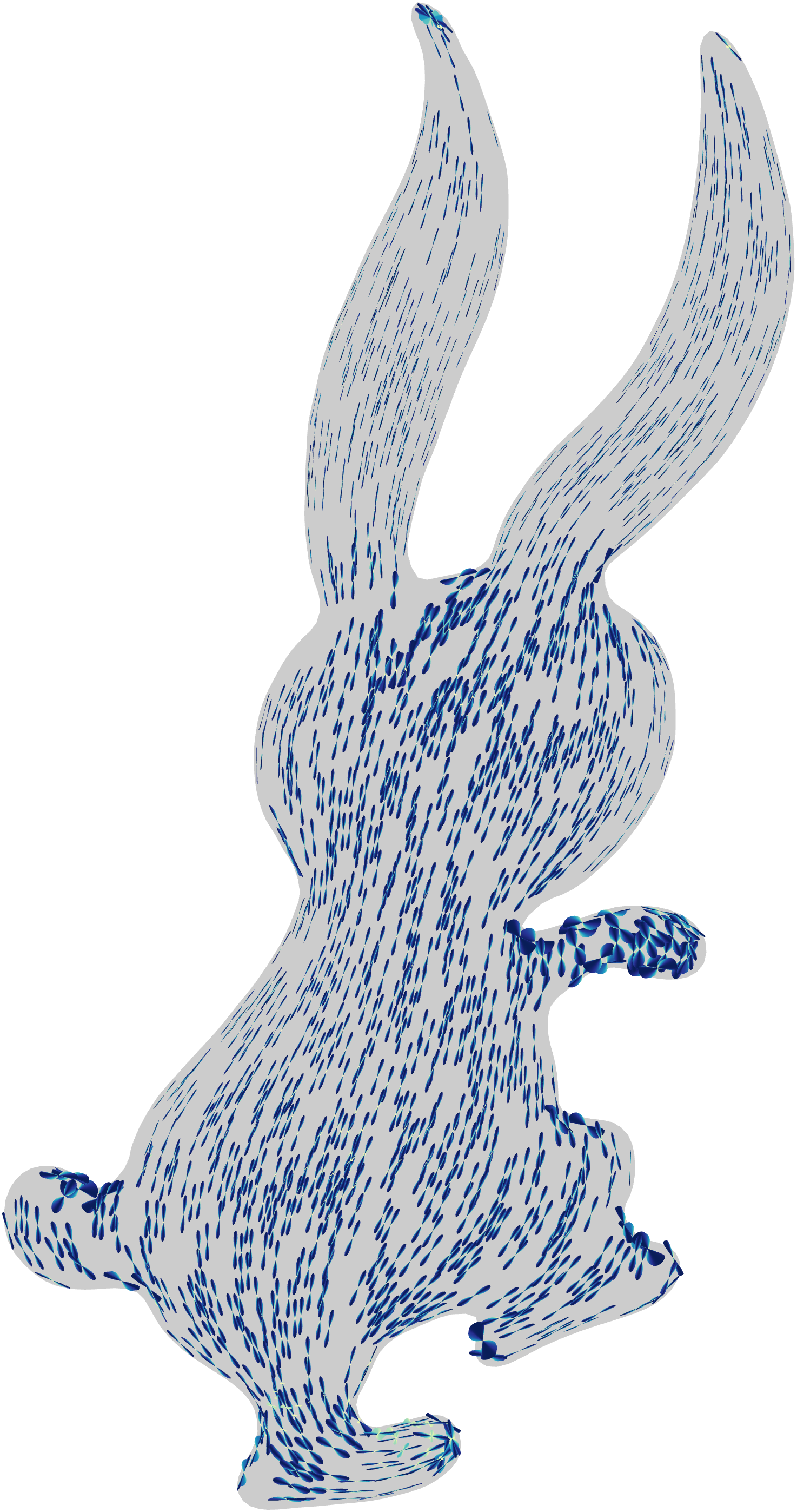} &
	\includegraphics[width=\imgwidth]{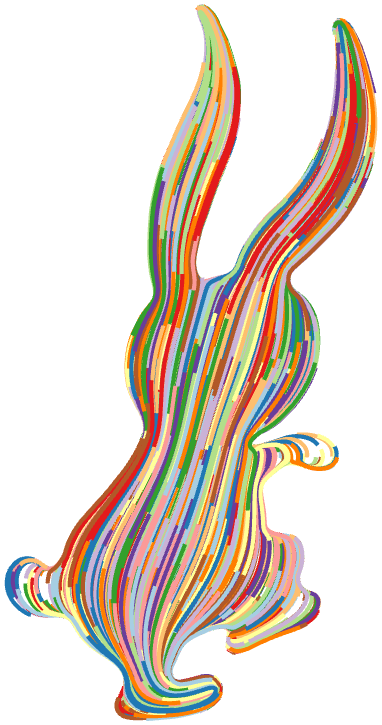} \\
	\multicolumn{2}{c}{$r = 1$} & \multicolumn{2}{c}{$r = 0.1$}
	\end{tabulary}
	\caption{With longer fibers, the field energy acts more like total variation, yielding a line field that is nearly constant in the interior with few areas of sharp curvature near the boundary (left). Reducing the fiber length leads to a field that is smoother overall (right). Both fields are computed on a base surface of area $1$, with $\lambda = 0.1$.}
	\label{minsec:fig:fiber-length}
\end{figure}
In \Cref{minsec:fig:relaxation}, we see that $r$ also controls how concentrated the current is toward a sharp surface. For a disk of fixed area $1$ and $\lambda = 1$, we plot the computed current for three values of the fiber radius. It appears much more concentrated when the fiber is long than when it is short. While the vertical concentration may be an artifact of displaying all three plots with a fixed visual fiber length, the core radius of the singularity also scales inversely with $r$.
\begin{figure}
\centering
\newcommand{\imgwidth}{0.28\columnwidth}
\begin{tabulary}{0.97\columnwidth}{@{}CCC@{}}
\includegraphics[width=\imgwidth]{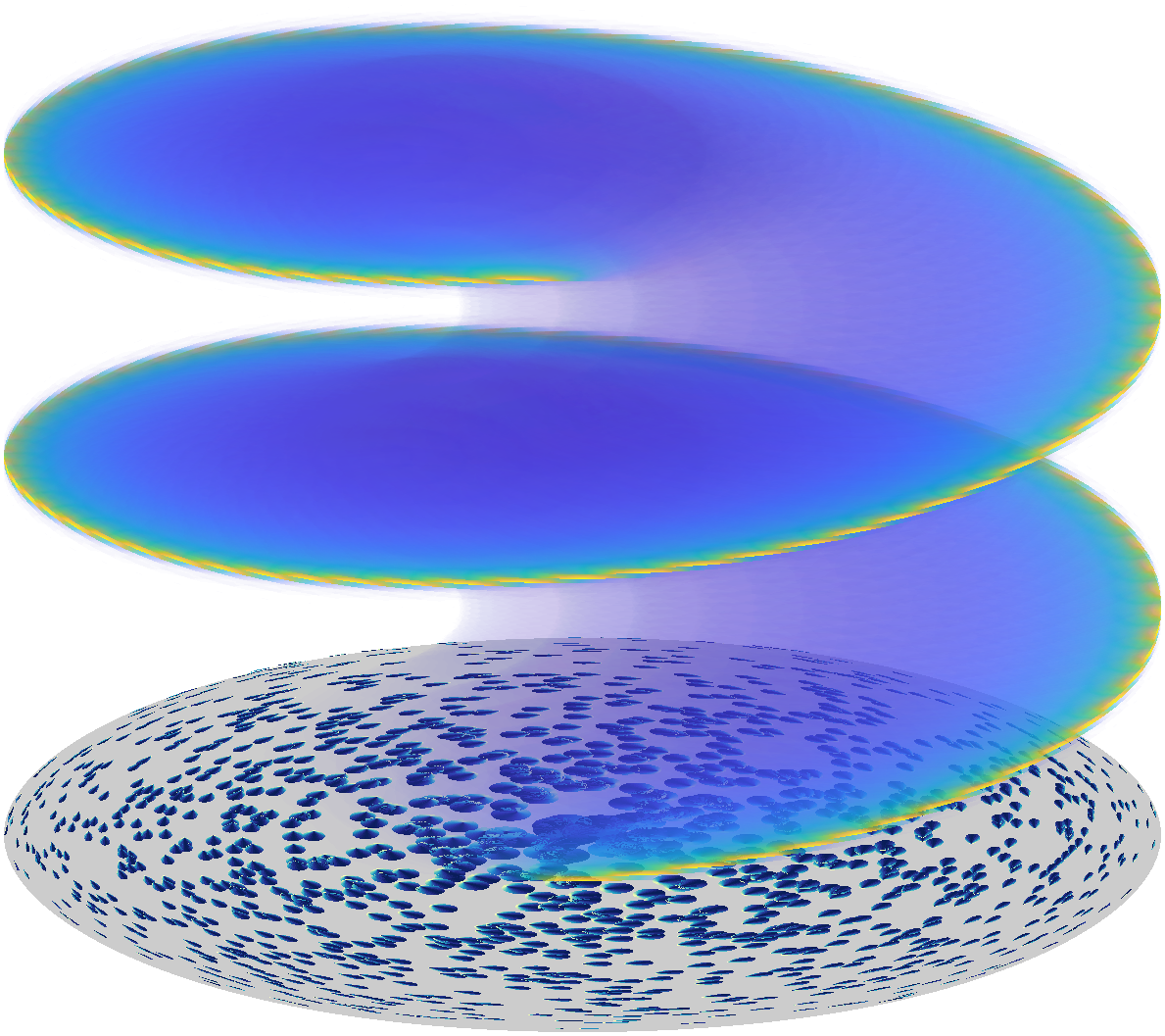} &
\includegraphics[width=\imgwidth]{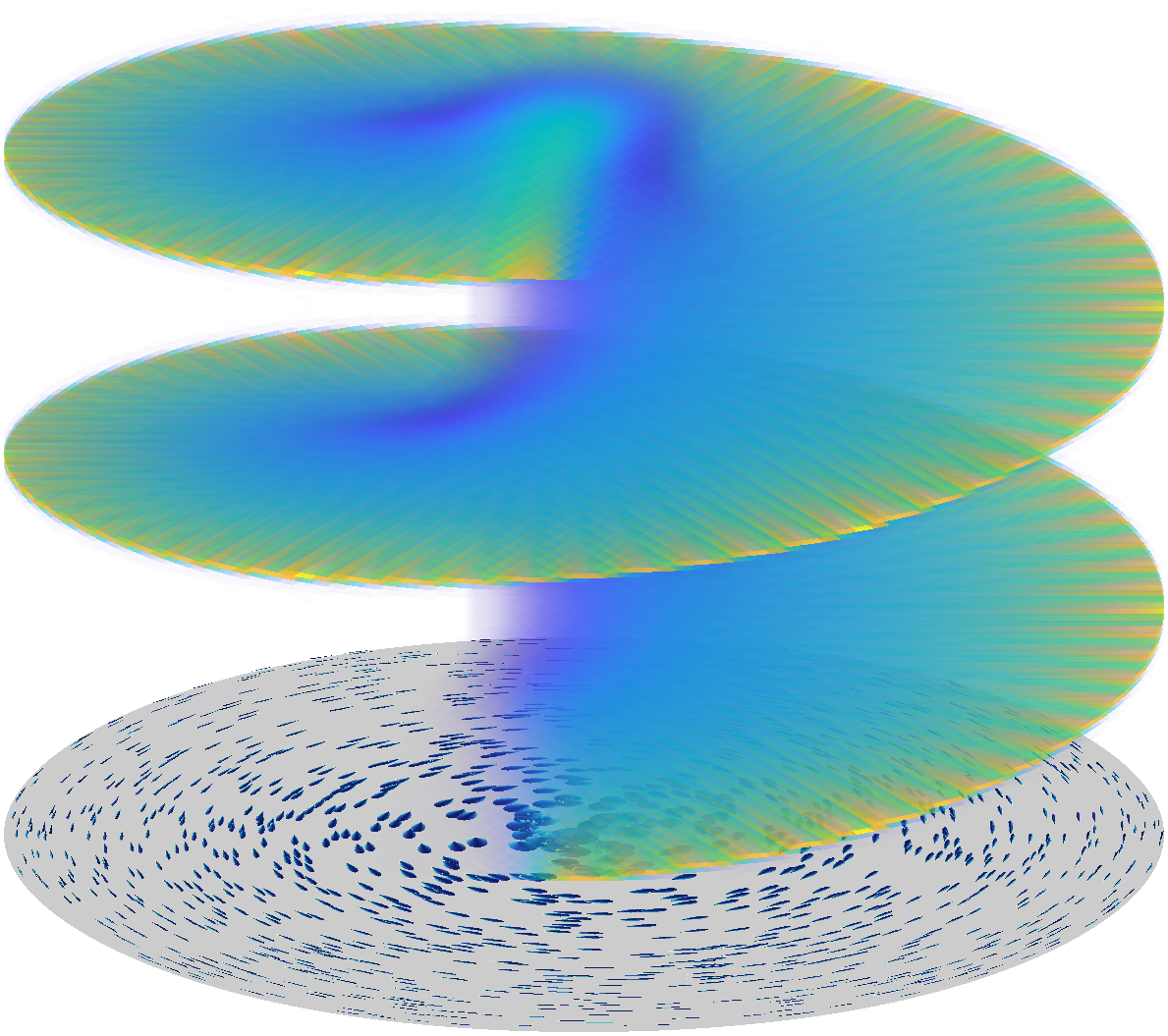} &
\includegraphics[width=\imgwidth]{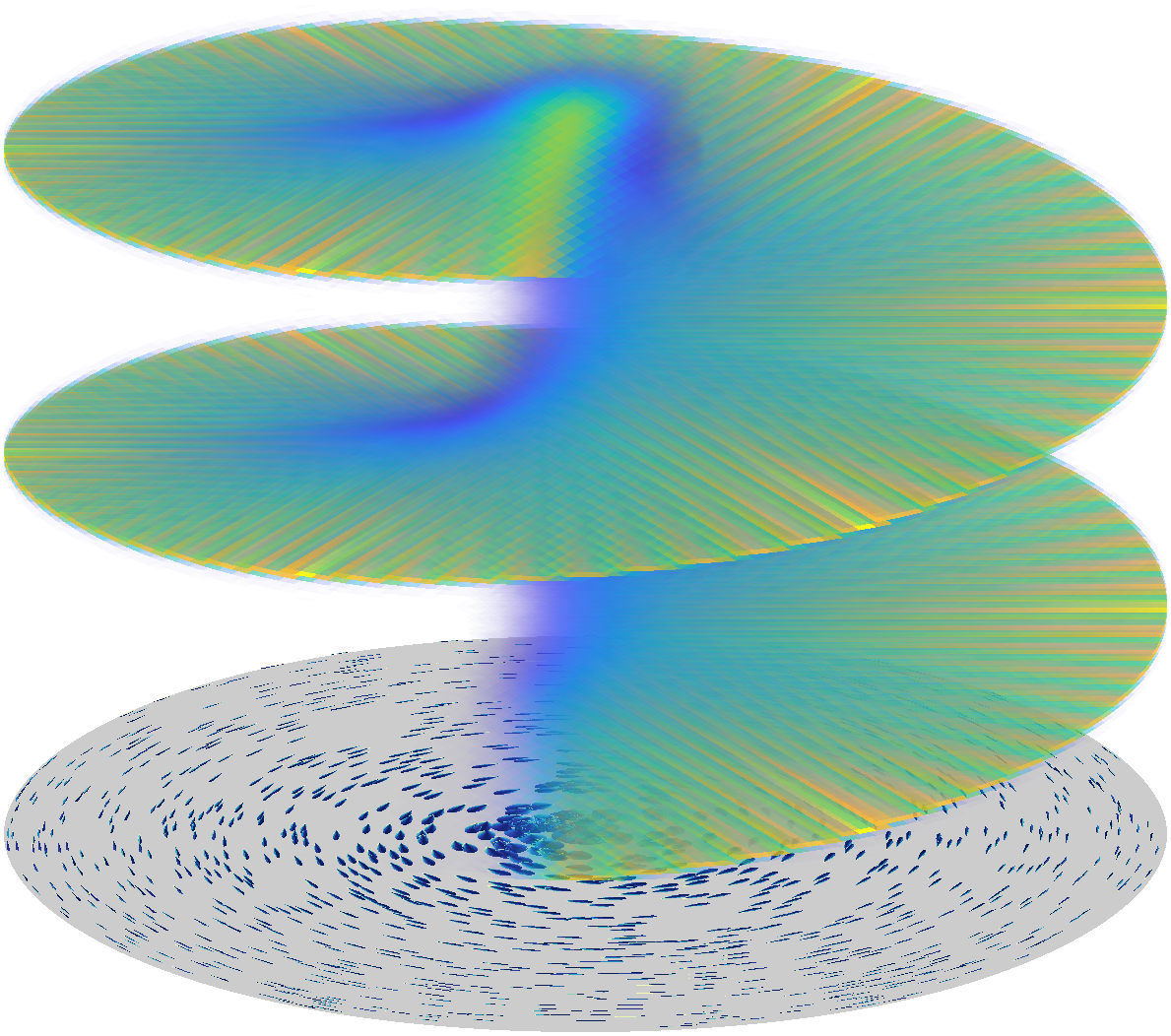} \\
$r = 0.1$ & $r = 1$ & $r = 10$	
\end{tabulary}
\caption{Increasing fiber length $r$ results in more-concentrated singularities.}
\label{minsec:fig:relaxation}
\end{figure}

\paragraph{Curved base.} \Cref{minsec:fig:curved} shows cross fields computed as minimal sections on curved patches with boundary. The fields change continuously as the curvature of the base surface varies, except in degenerate cases such as the rotationally-symmetric disk (top left).
\begin{figure}
\centering
\newcommand{\imgwidth}{0.3\columnwidth}
\begin{tabulary}{0.97\columnwidth}{@{}CCC@{}}
\includegraphics[width=\imgwidth]{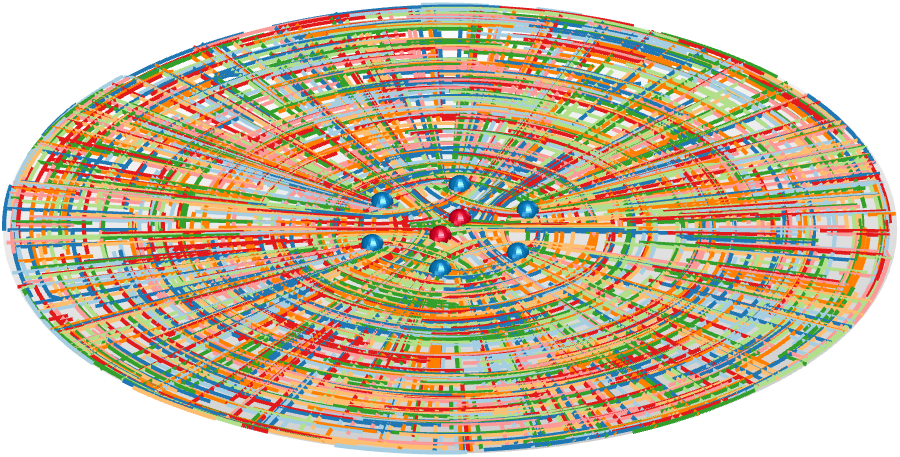} &
\includegraphics[width=\imgwidth]{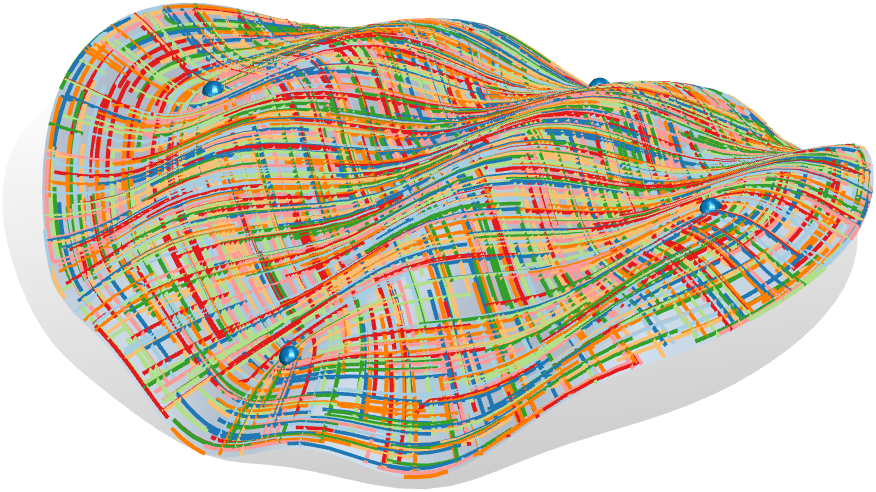} &
\includegraphics[width=\imgwidth]{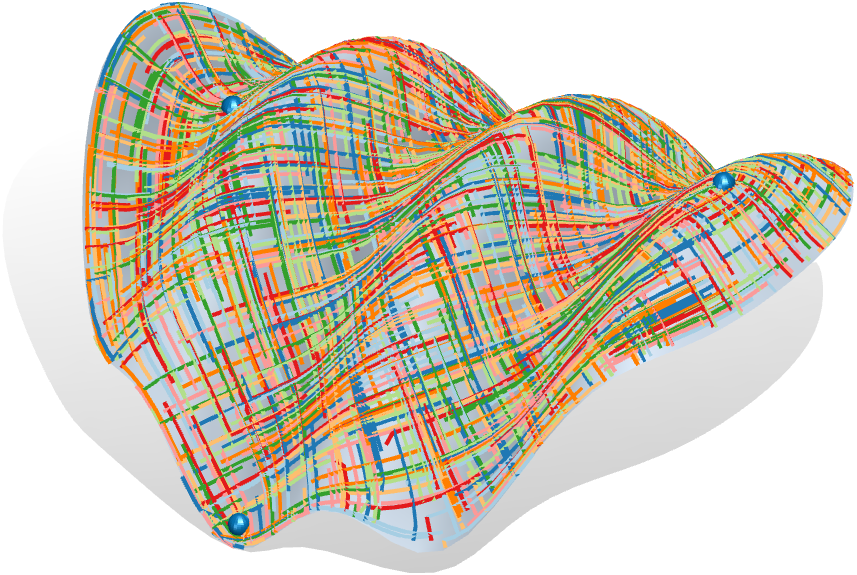} \\
\includegraphics[width=\imgwidth]{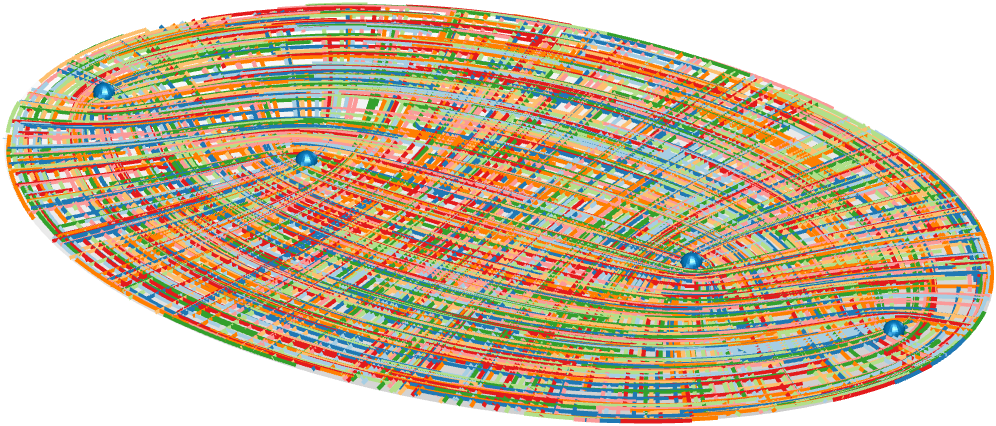} &
\includegraphics[width=\imgwidth]{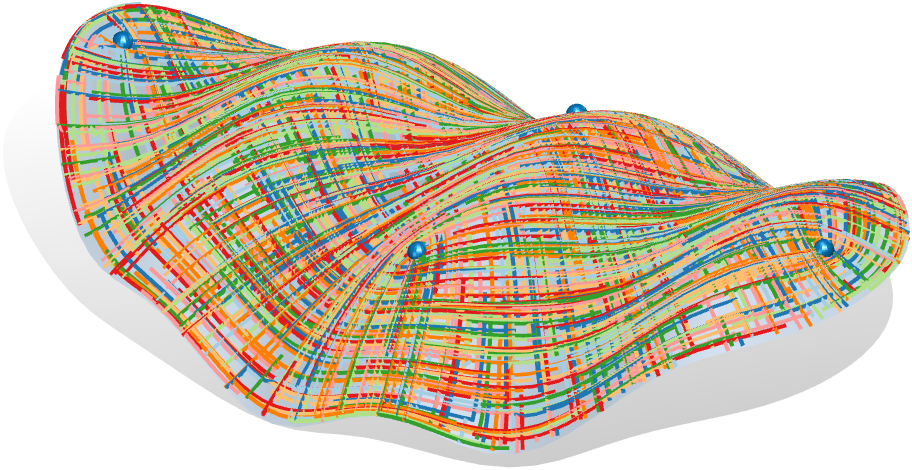} &
\includegraphics[width=\imgwidth]{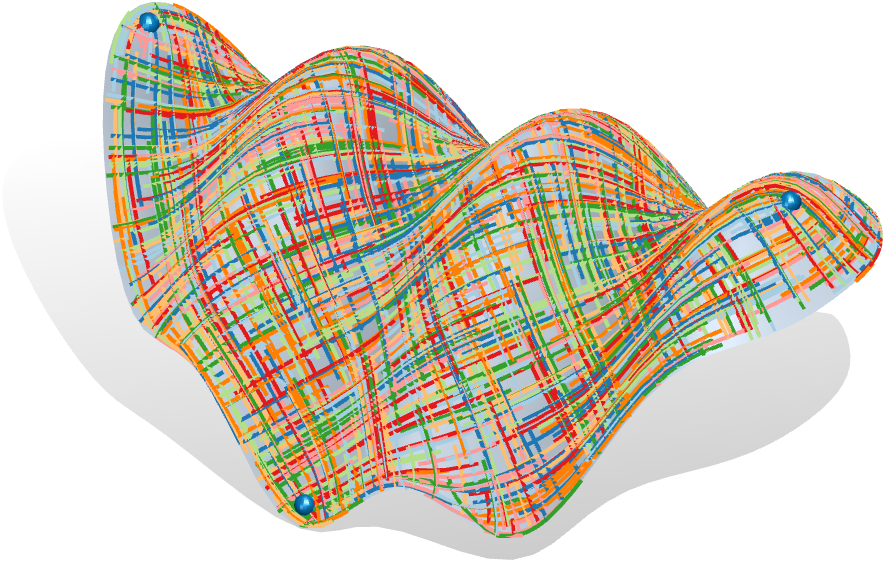}
\end{tabulary}
\caption{Minimal section relaxation yields cross fields aligned to the curvature variation of the base surface. Adding curvature variation breaks the rotational symmetry of the disk. For the ellipsoid, there is no such symmetry to begin with, so the singularities behave continuously as curvature is added to the base surface.}
\label{minsec:fig:curved}
\end{figure}

\paragraph{Masking Singularities.} Our approach makes it simple to offer user control of singularity placement. A singularity mask may be enforced by simply zeroing out the values of $\Gamma$ at masked vertices in the local step of \textsc{admm}. Alternatively, a ``soft mask'' may be implemented by promoting the regularization parameter $\lambda$ to a spatially-varying scalar field. As shown in \Cref{minsec:fig:hand-mask}, in both cases the resulting field is reconfigured to accommodate the restriction on its singularities.

\begin{figure}
\centering
\newcommand{\imgwidth}{0.3\columnwidth}
\begin{tabulary}{\columnwidth}{@{}CCC@{}}
Unmasked & Mask & Masked \\
\includegraphics[width=\imgwidth,align=c]{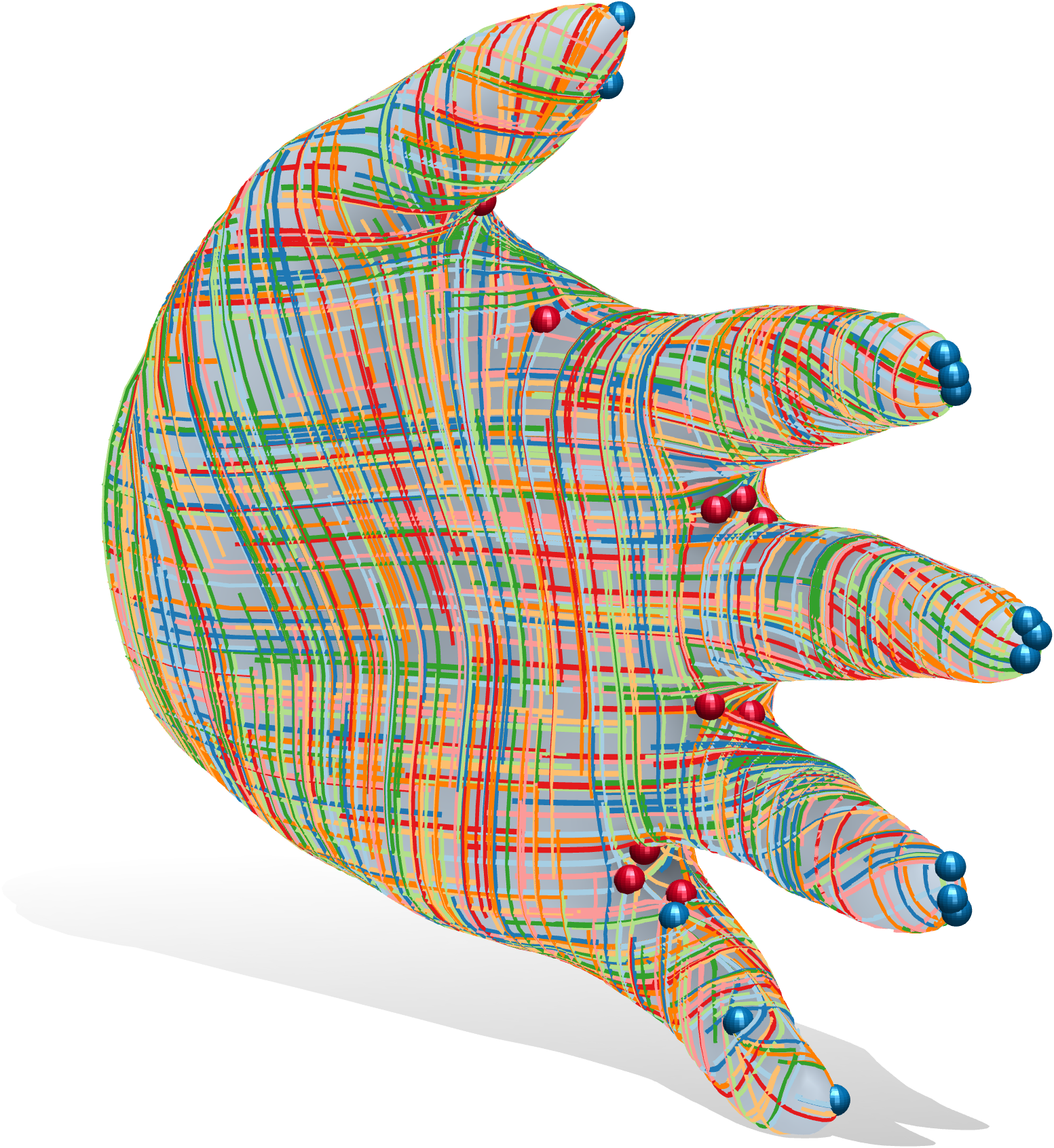} &
\includegraphics[width=\imgwidth,align=c]{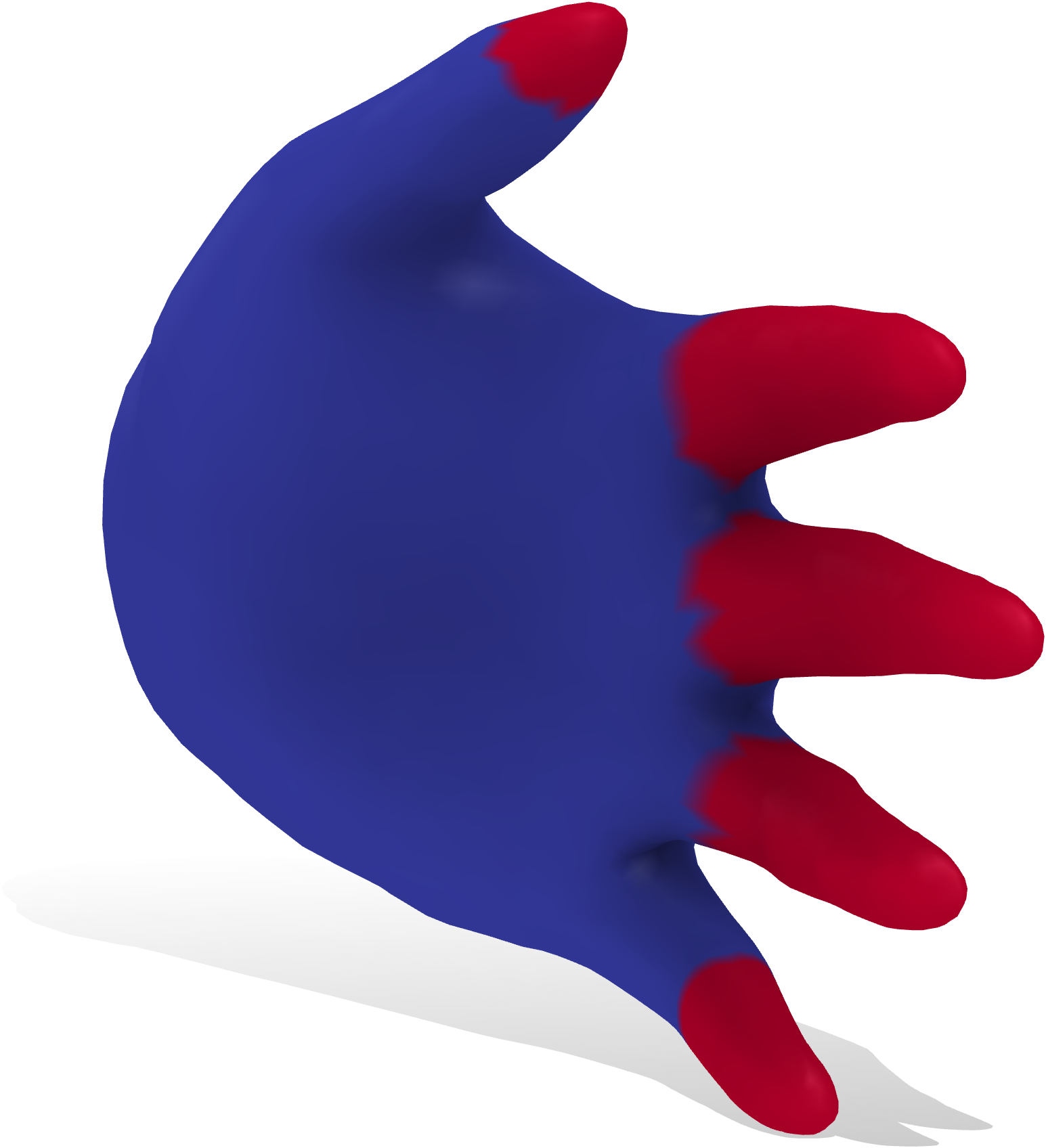} &
\includegraphics[width=\imgwidth,align=c]{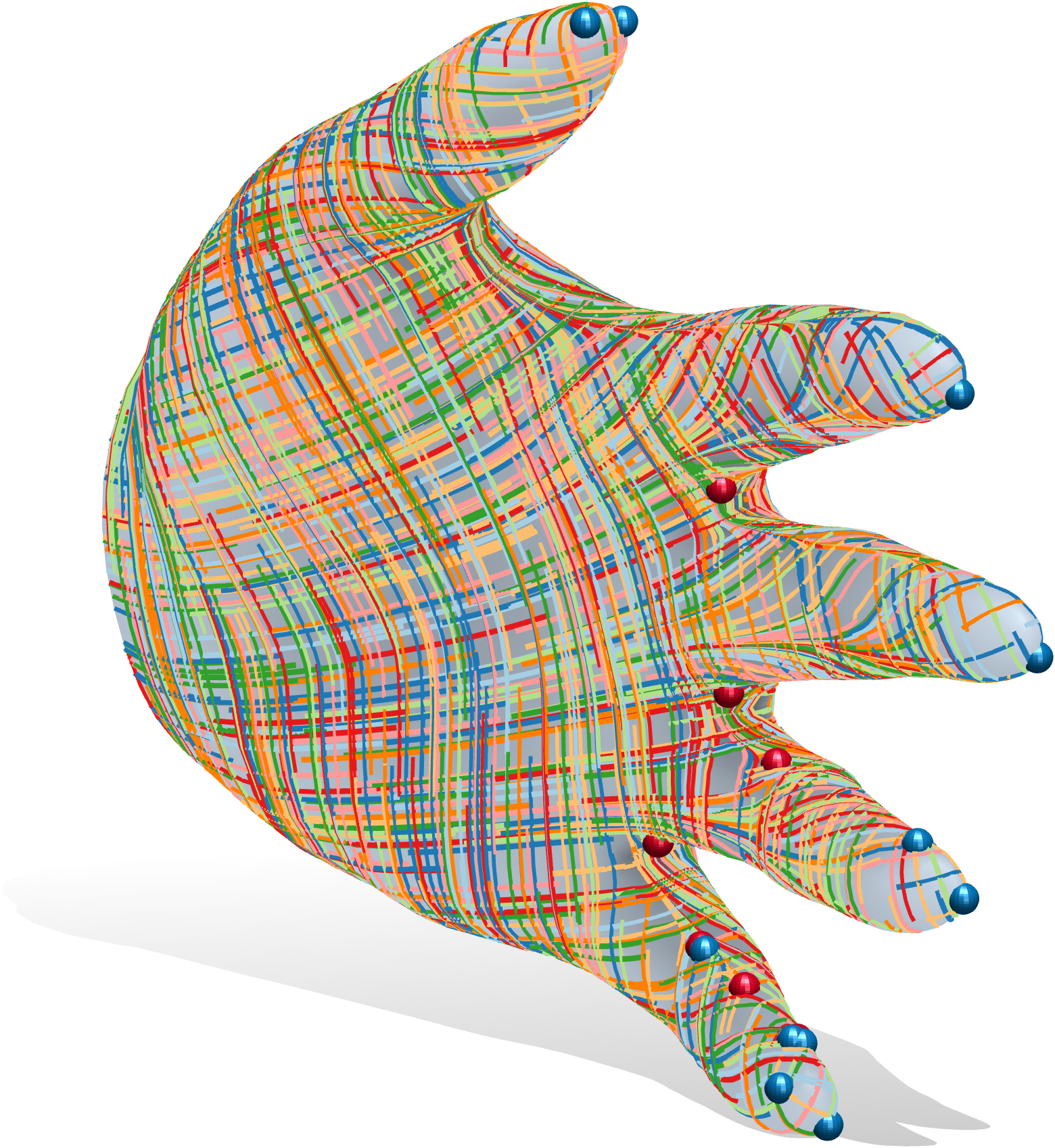} \\
\includegraphics[width=\imgwidth,align=c]{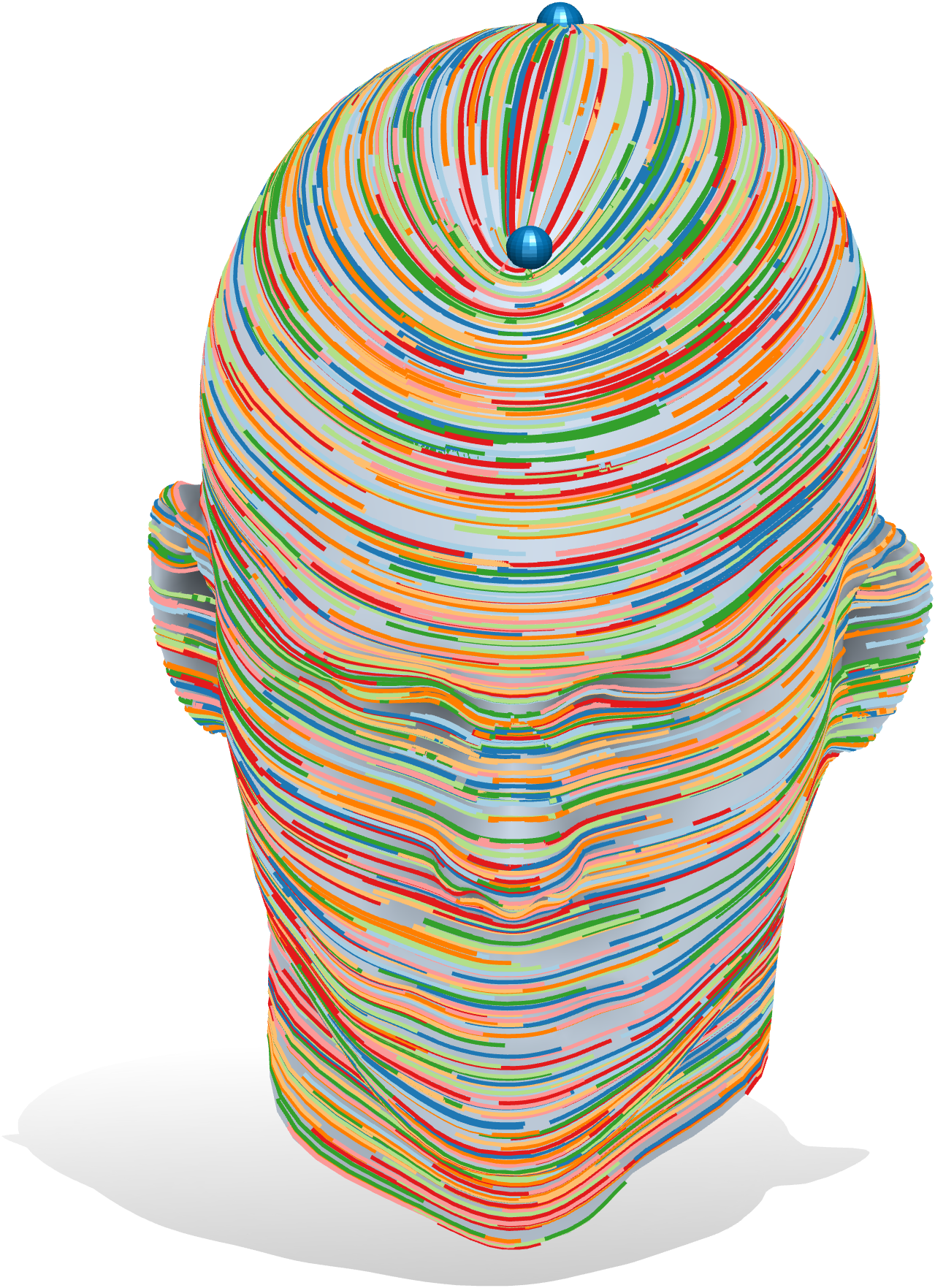} &
\includegraphics[width=\imgwidth,align=c]{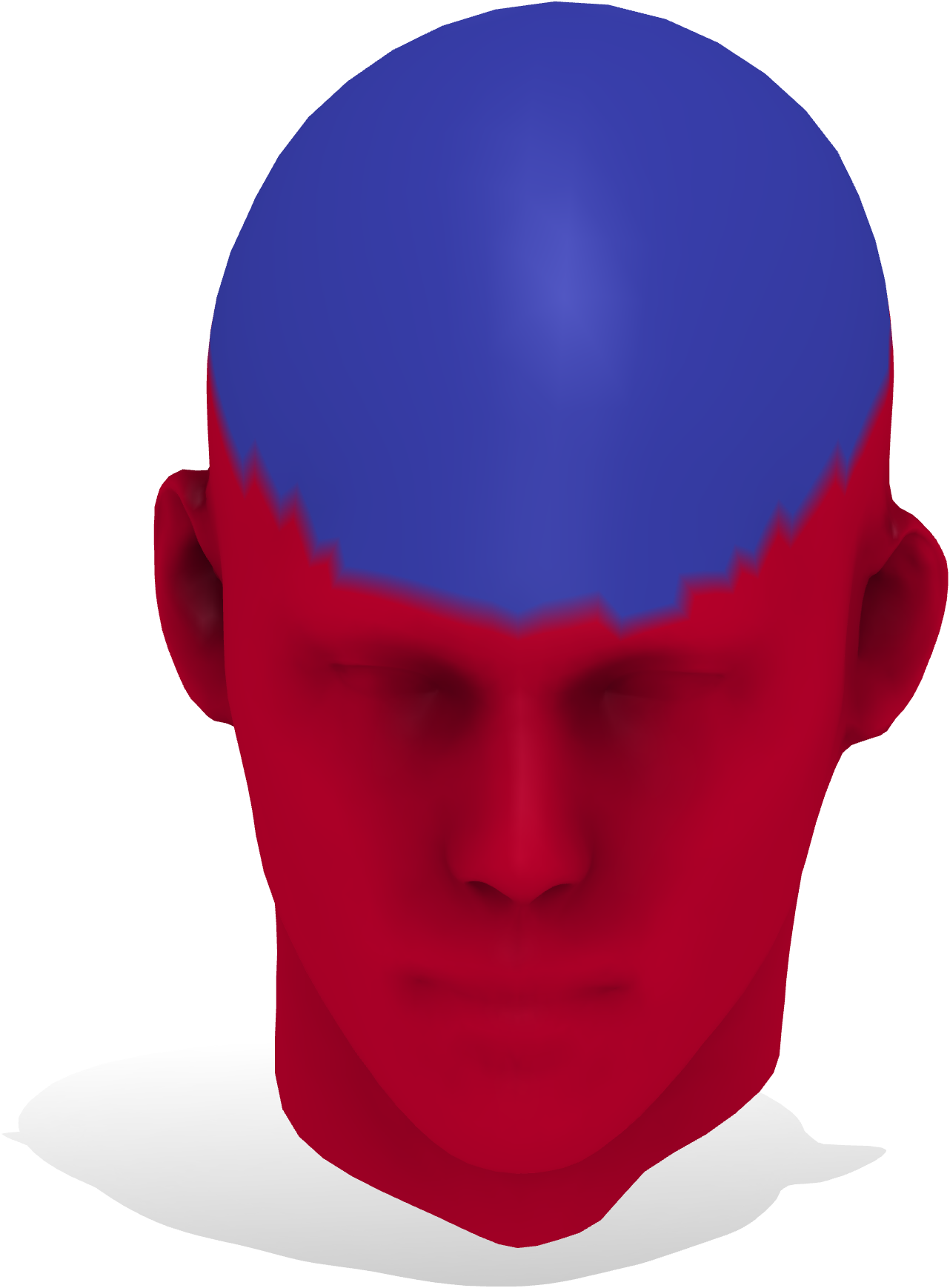} &
\includegraphics[width=\imgwidth,align=c]{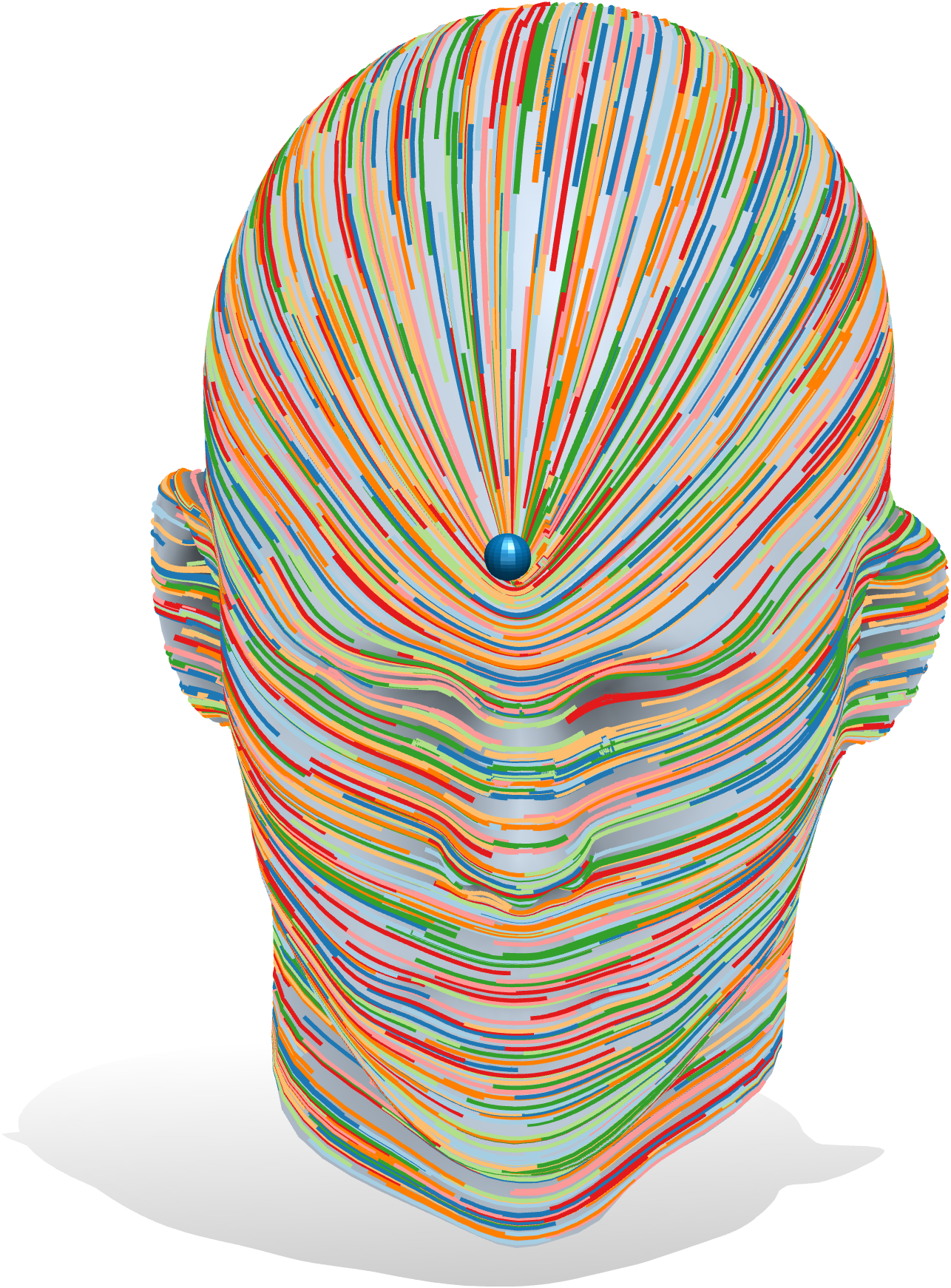} \\ \midrule
\includegraphics[width=\imgwidth,align=c]{figures/mask/head_unmasked} &
\includegraphics[width=\imgwidth,align=c]{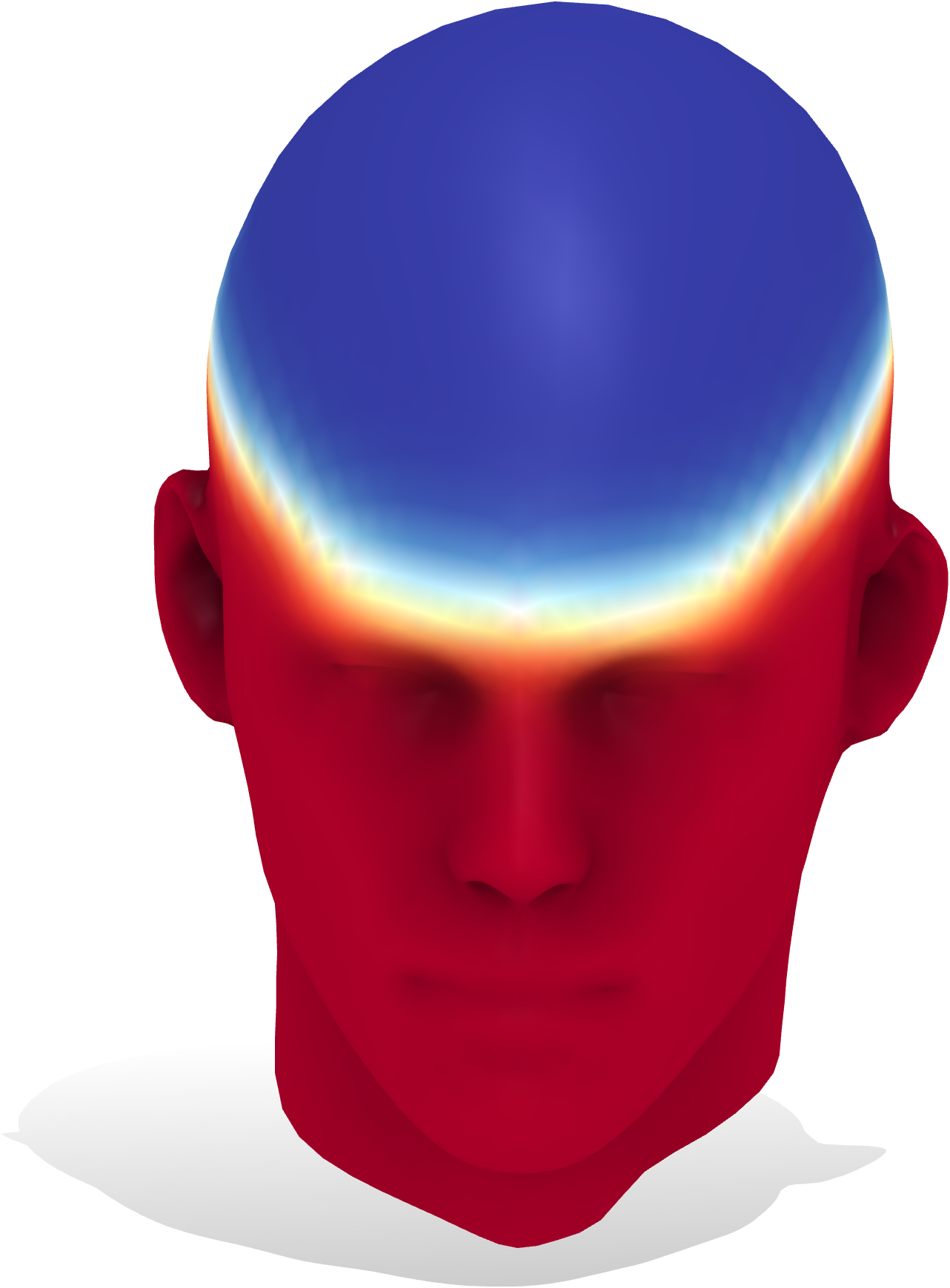} &
\includegraphics[width=\imgwidth,align=c]{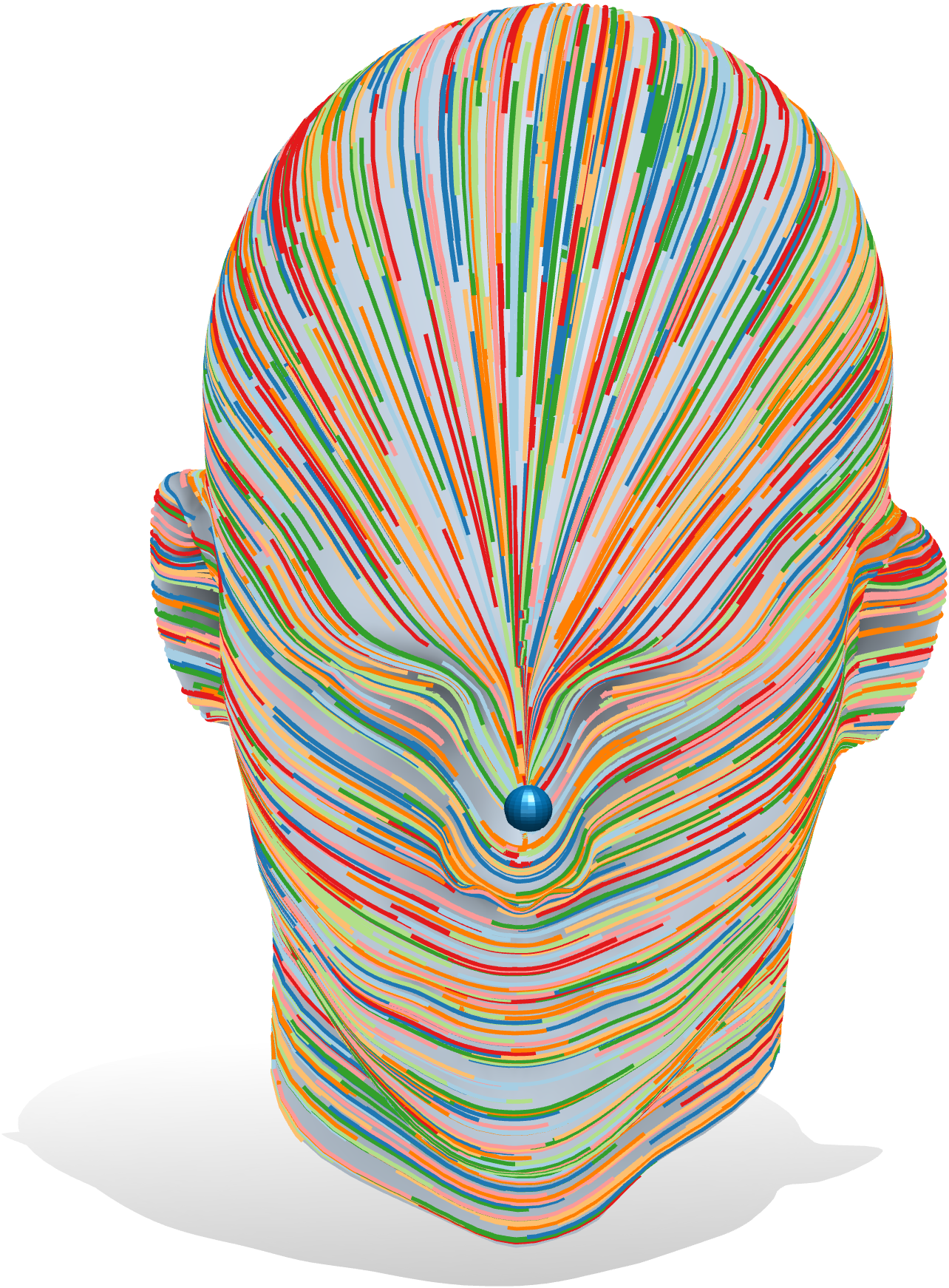} \\
Unmasked & Soft Mask & Masked
\end{tabulary}
\caption{Restricting singularities to a subdomain (shown in red) is as simple as adding a simple linear constraint in our method. The resulting constrained fields are shown at right. Soft constraints on singularities can be implemented by promoting the regularization parameter $\lambda$ to a spatially-varying scalar field (bottom row).}
\label{minsec:fig:hand-mask}
\end{figure}

\paragraph{Comparison.} In \Cref{minsec:fig:faces-cross}, we compare minimal section relaxation to prior methods for computing directional fields.
\begin{itemize}
\item \citet{knoppel_globally_2013} employ a convex relaxation relaxing the unit norm constraint. We test their method as implemented by the \textsc{geometry central} library \cite{geometrycentral}, as well as a version using our own covariant operators.
\item The \textsc{mbo} diffusion-projection method proposed by \citet{viertel_approach_2019} minimizes the nonconvex Ginzburg-Landau functional. Its diffusion time parameter $\tau$ controls the length scale of smoothing, reducing to the method of \cite{knoppel_globally_2013} as $\tau \to \infty$. \citet{viertel_approach_2019} recommend setting $\tau$ relative to the (connection) Laplacian spectrum; we use $\tau \in \{0.1, 0.01\}\lambda_2(\dscm{L}_1)$.
\end{itemize}
Compared to both methods, minimal section relaxation yields fewer singular points and more-consistent fields independent of small variations in the background mesh. Unlike the non-convex \textsc{mbo} method, it is independent of initialization.
\begin{figure*}
\centering
\newcommand{\imgwidth}{0.145\textwidth}
\begin{tblr}{width=\textwidth,colspec={X[l,m]X[c,m]X[c,m]X[c,m]X[c,m]X[c,m]}}
&%
\includegraphics[width=\imgwidth,align=c]{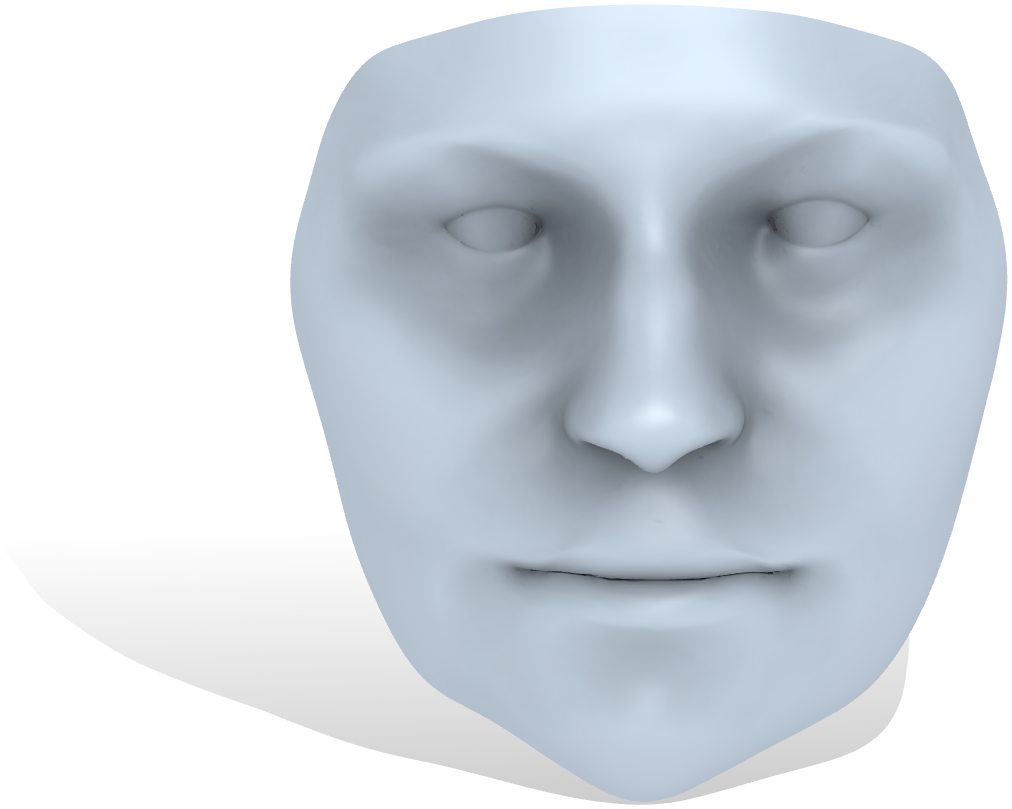} &
\includegraphics[width=\imgwidth,align=c]{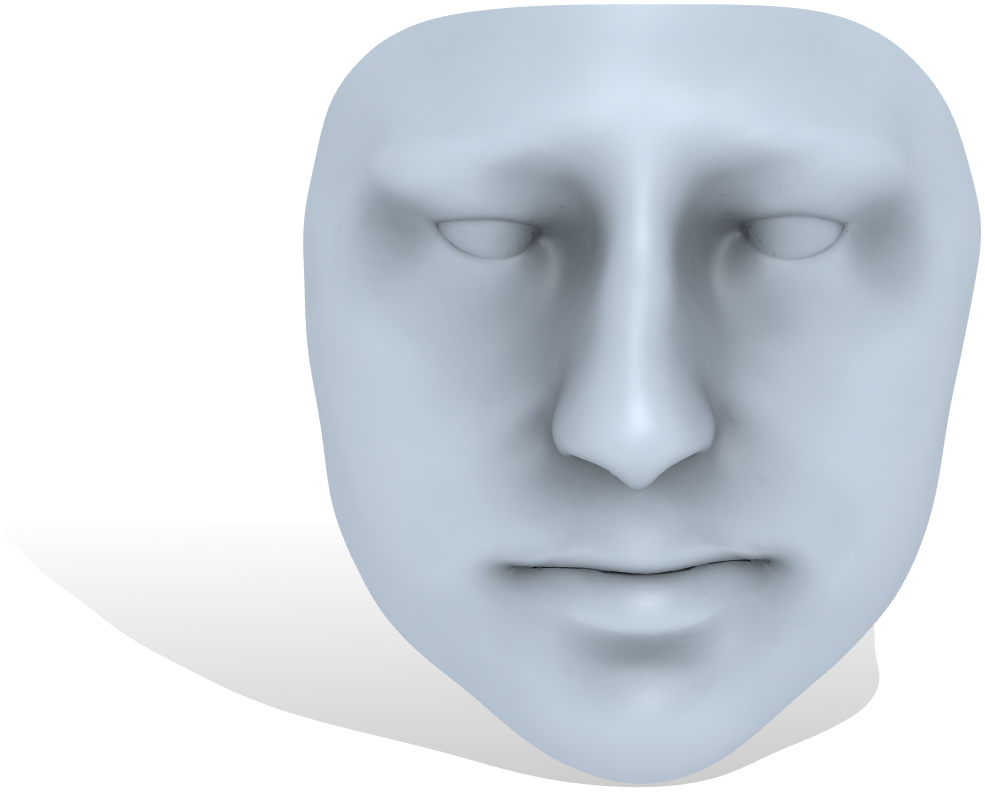} &
\includegraphics[width=\imgwidth,align=c]{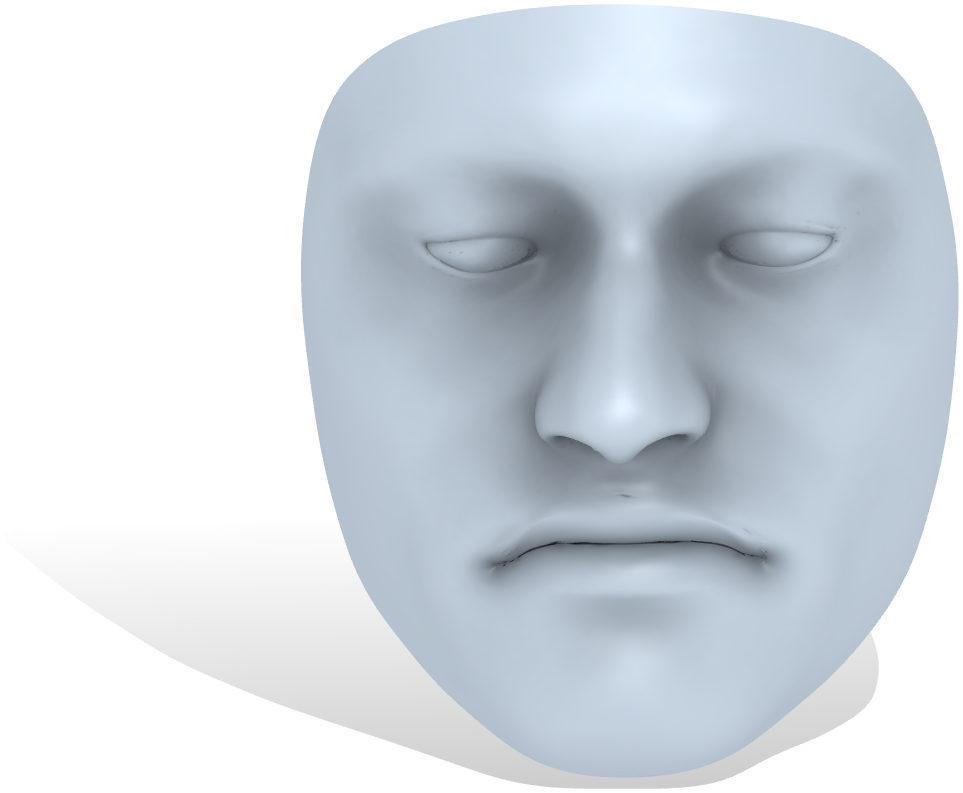} &
\includegraphics[width=\imgwidth,align=c]{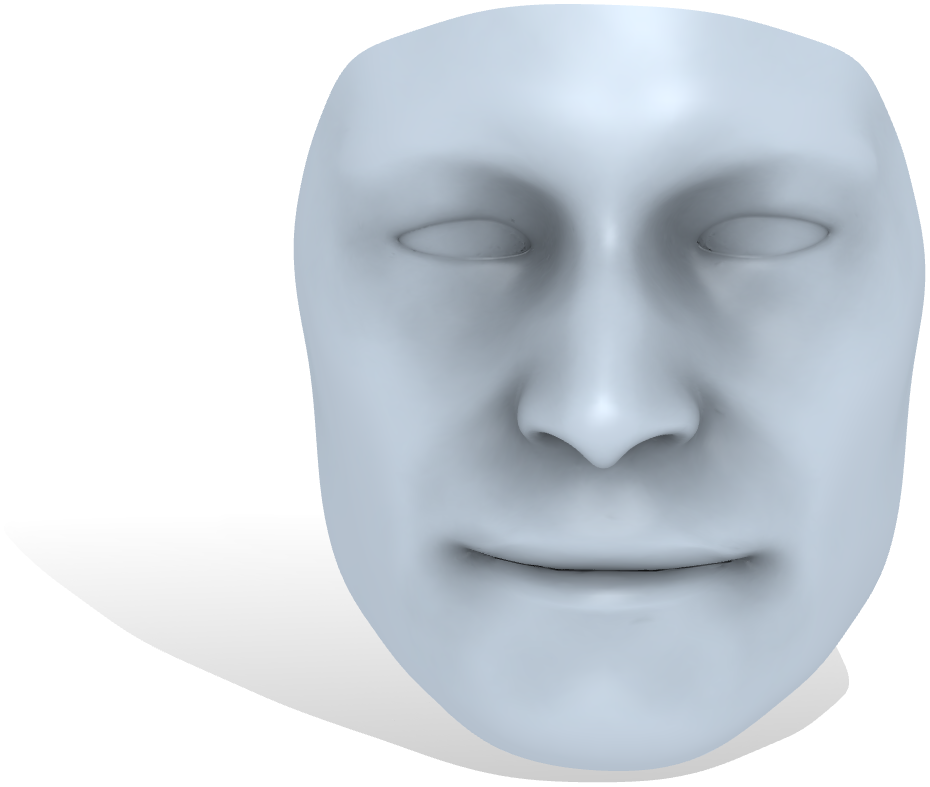} &
\includegraphics[width=\imgwidth,align=c]{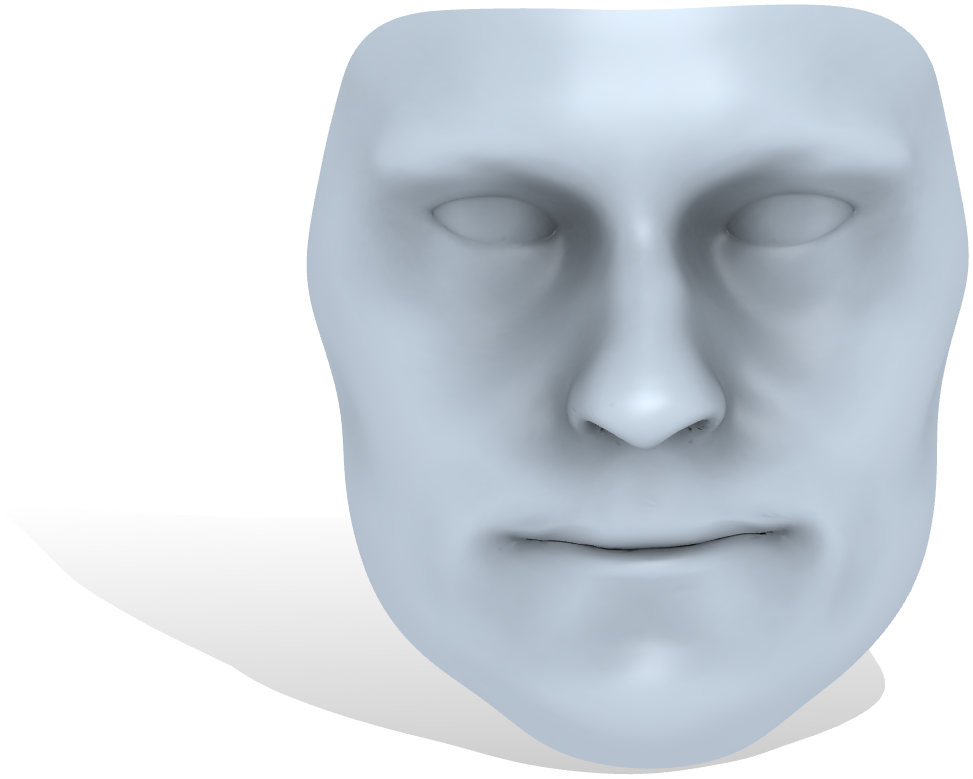} \\
{\small Ours ($\lambda = r = 0.1$)} &
\includegraphics[width=\imgwidth,align=c]{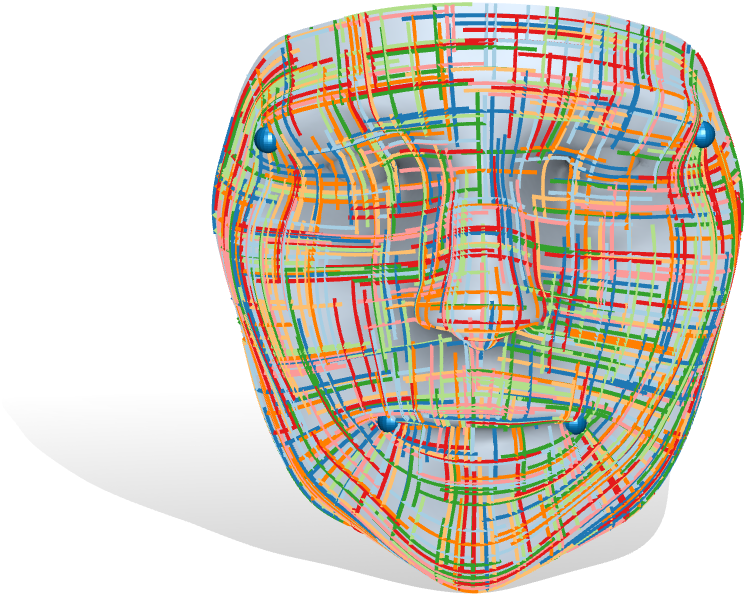} &
\includegraphics[width=\imgwidth,align=c]{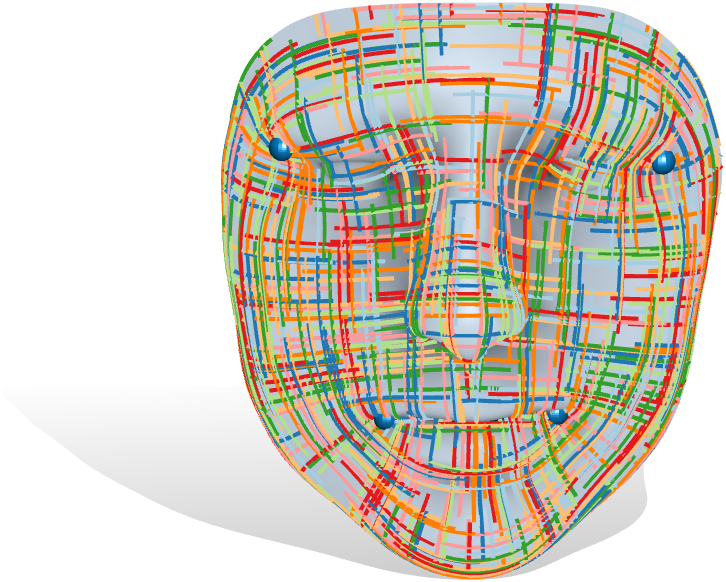} &
\includegraphics[width=\imgwidth,align=c]{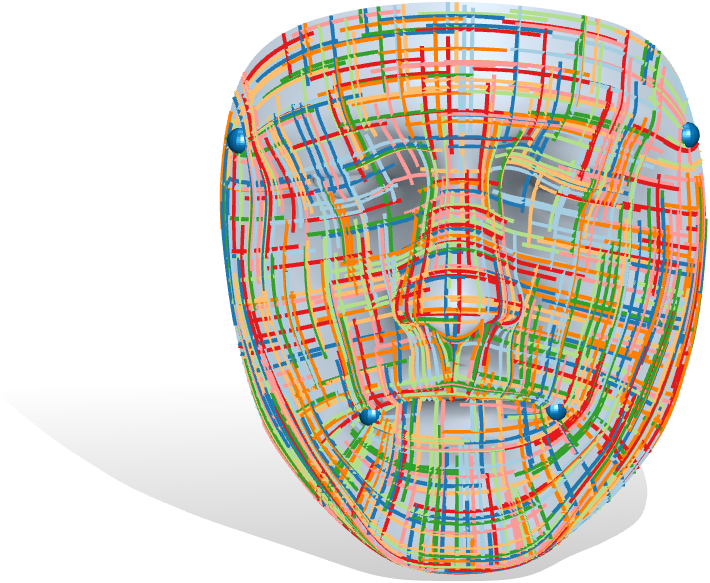} &
\includegraphics[width=\imgwidth,align=c]{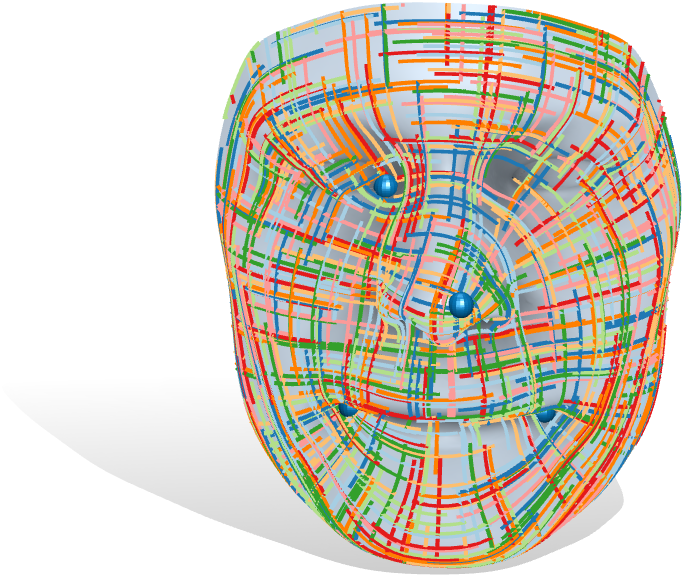} &
\includegraphics[width=\imgwidth,align=c]{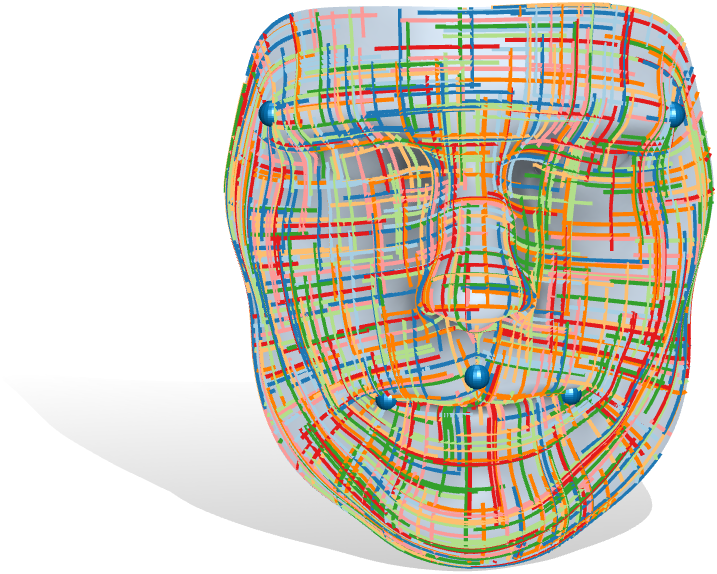} \\
{\small \textsc{mbo} ($\tau = 0.1 \lambda_2$)} &
\includegraphics[width=\imgwidth,align=c]{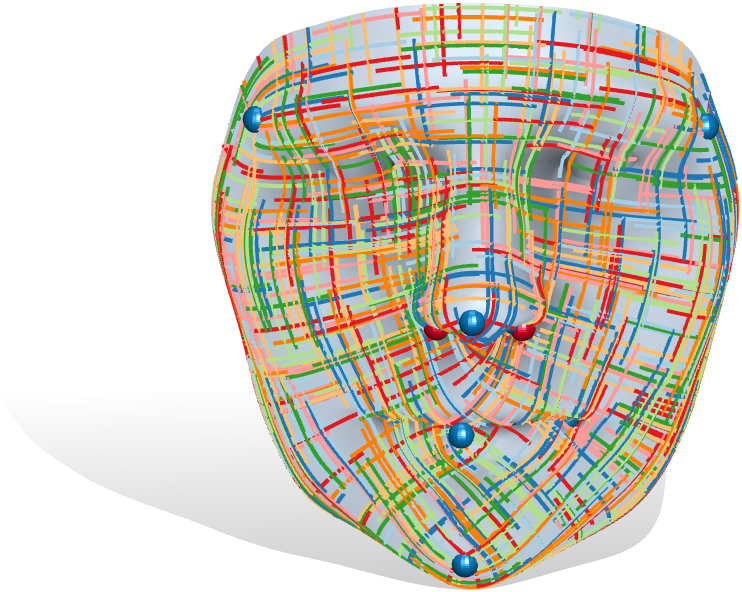} &
\includegraphics[width=\imgwidth,align=c]{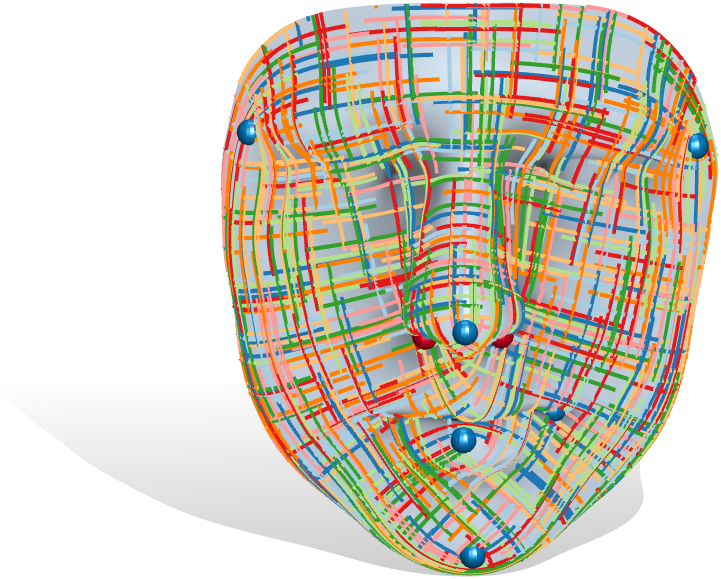} &
\includegraphics[width=\imgwidth,align=c]{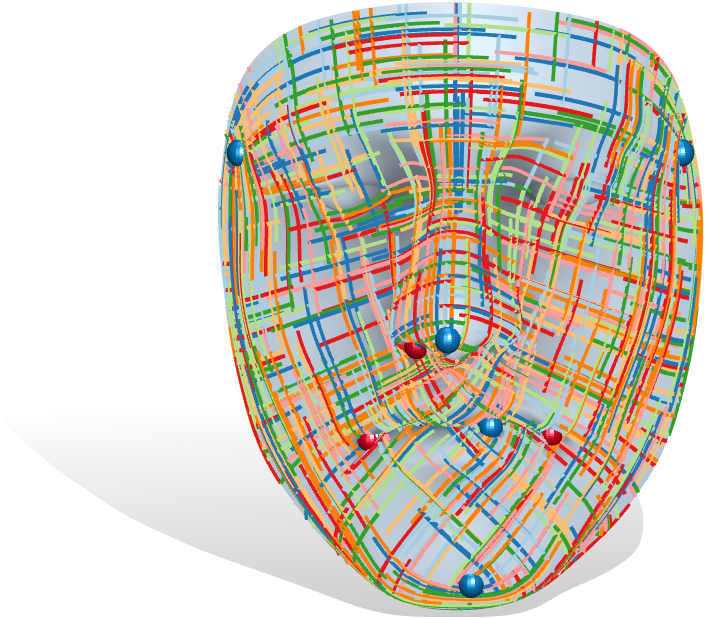} &
\includegraphics[width=\imgwidth,align=c]{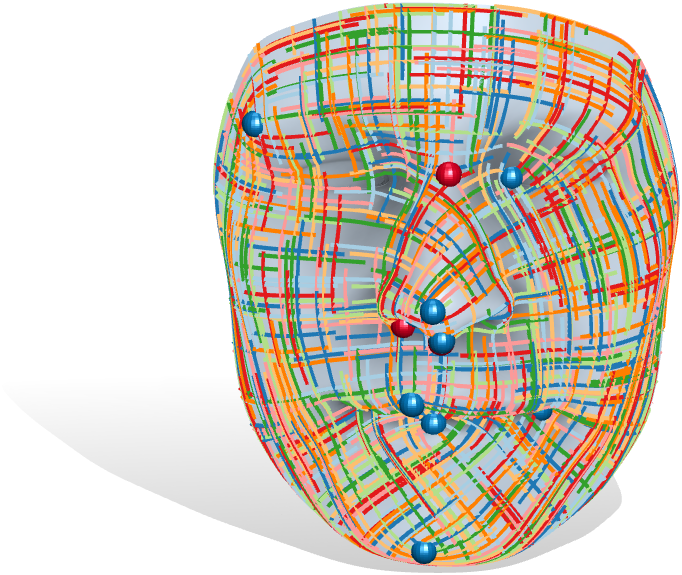} &
\includegraphics[width=\imgwidth,align=c]{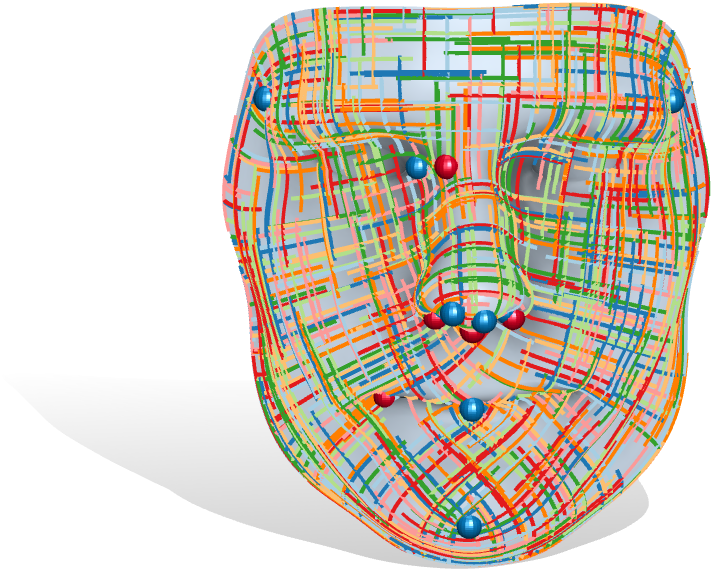} \\
{\small \textsc{mbo} ($\tau = 0.01 \lambda_2$)} &
\includegraphics[width=\imgwidth,align=c]{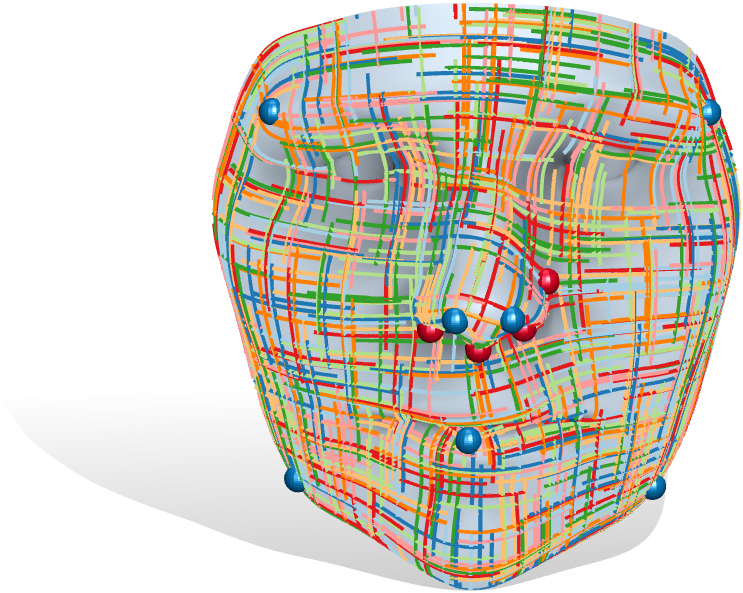} &
\includegraphics[width=\imgwidth,align=c]{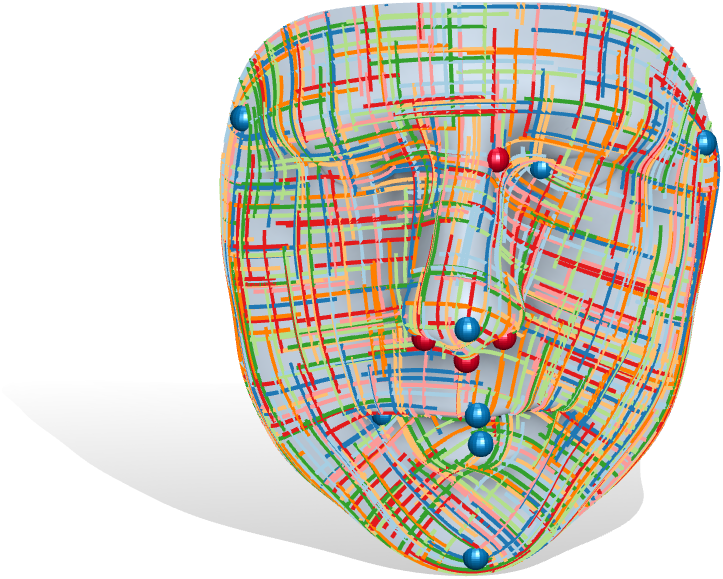} &
\includegraphics[width=\imgwidth,align=c]{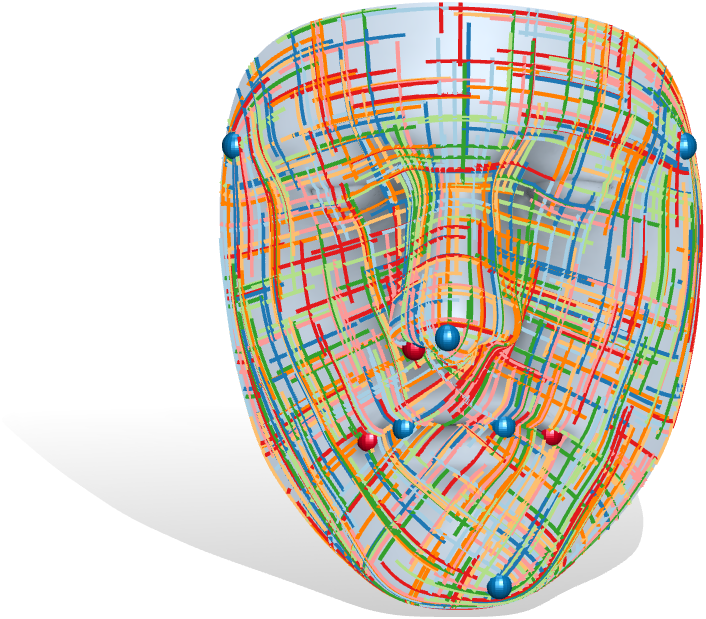} &
\includegraphics[width=\imgwidth,align=c]{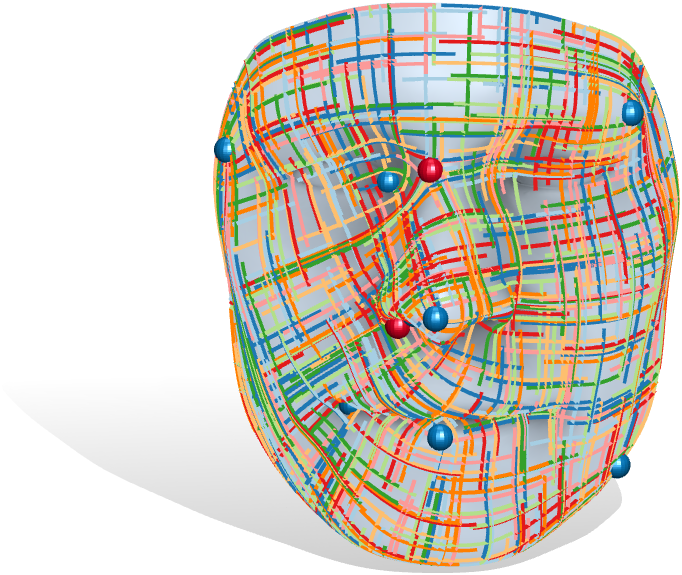} &
\includegraphics[width=\imgwidth,align=c]{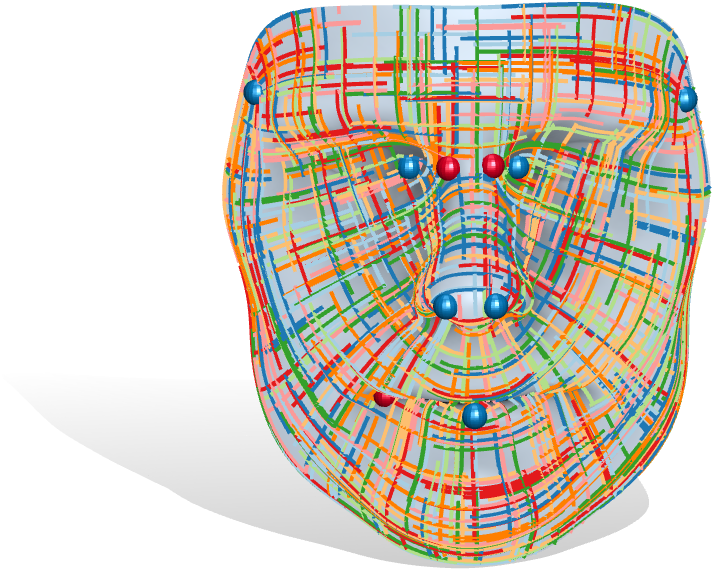} \\
{\small \cite{knoppel_globally_2013}\\(\textsc{geometry central})} &
\includegraphics[width=\imgwidth,align=c]{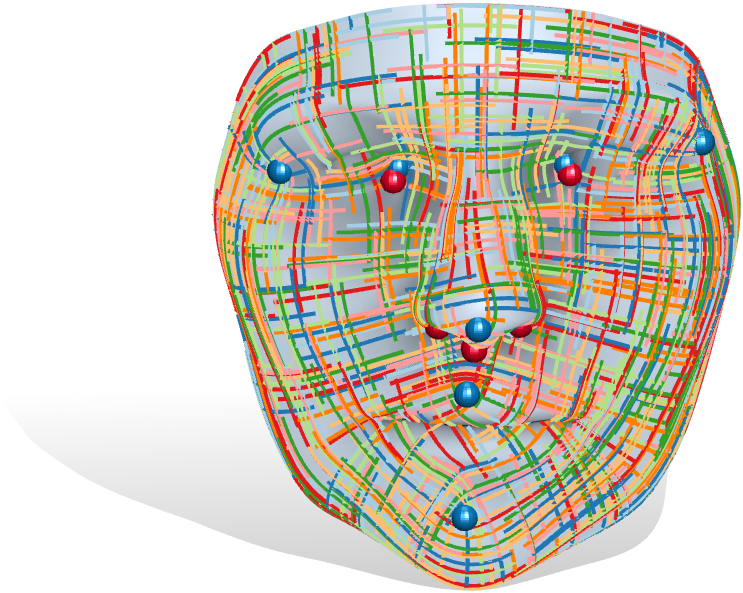} &
\includegraphics[width=\imgwidth,align=c]{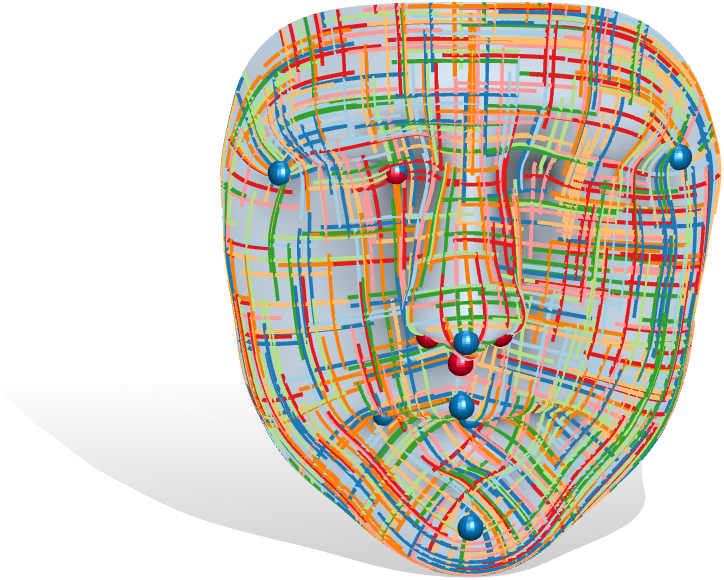} &
\includegraphics[width=\imgwidth,align=c]{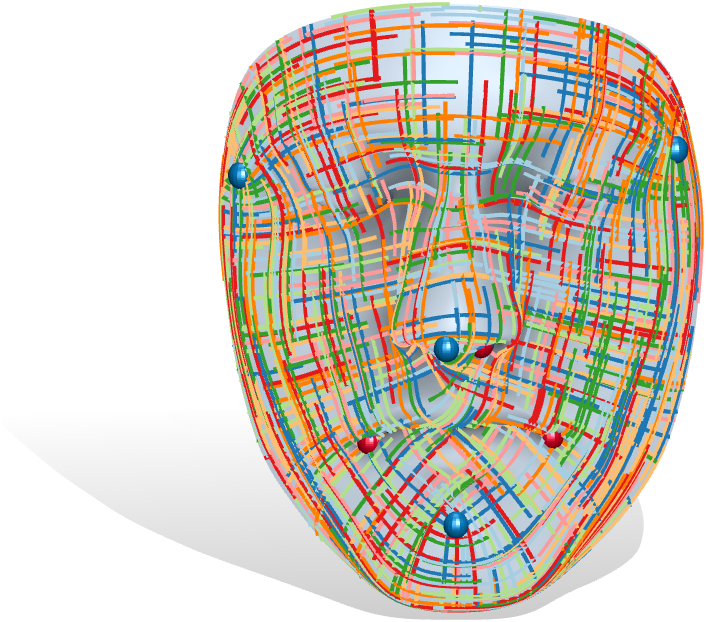} &
\includegraphics[width=\imgwidth,align=c]{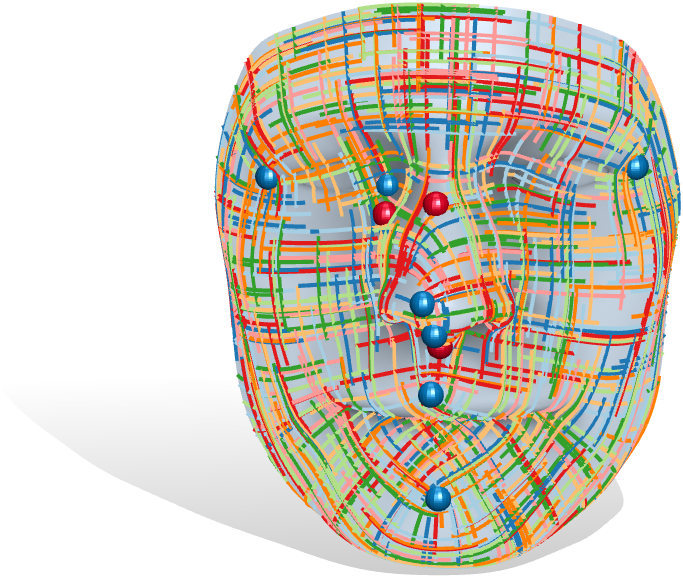} &
\includegraphics[width=\imgwidth,align=c]{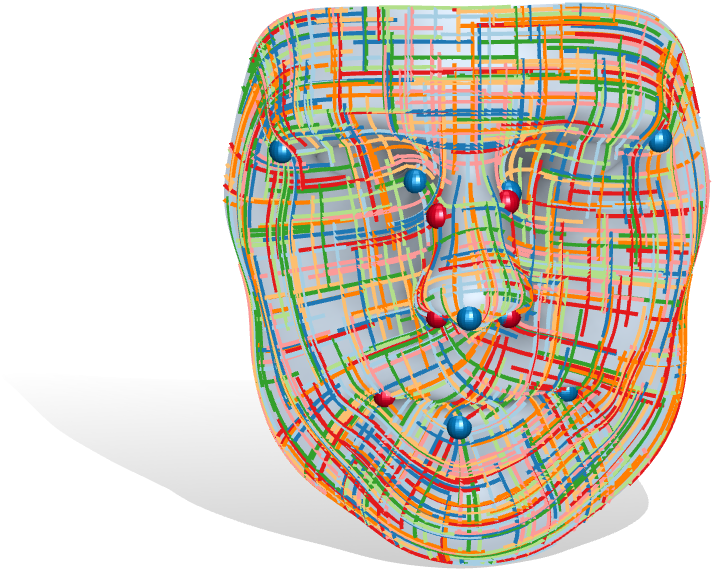} \\
{\small \cite{knoppel_globally_2013}\\(our operators)} &
\includegraphics[width=\imgwidth,align=c]{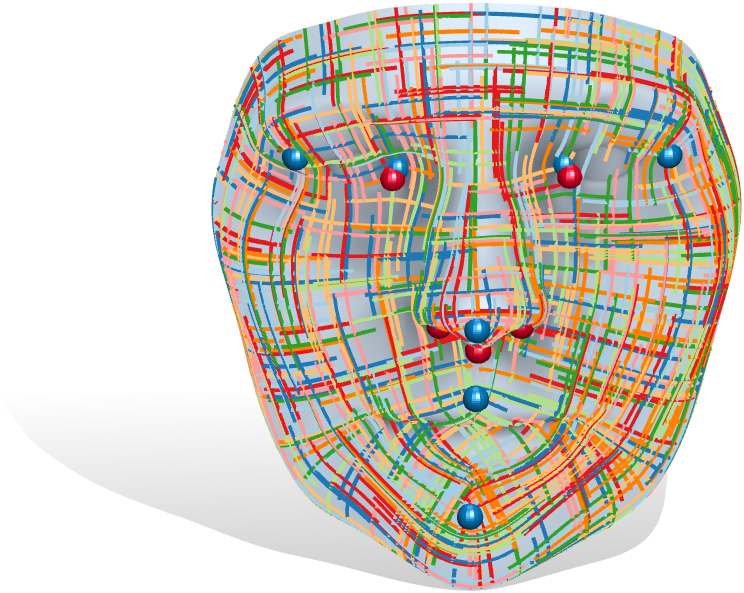} &
\includegraphics[width=\imgwidth,align=c]{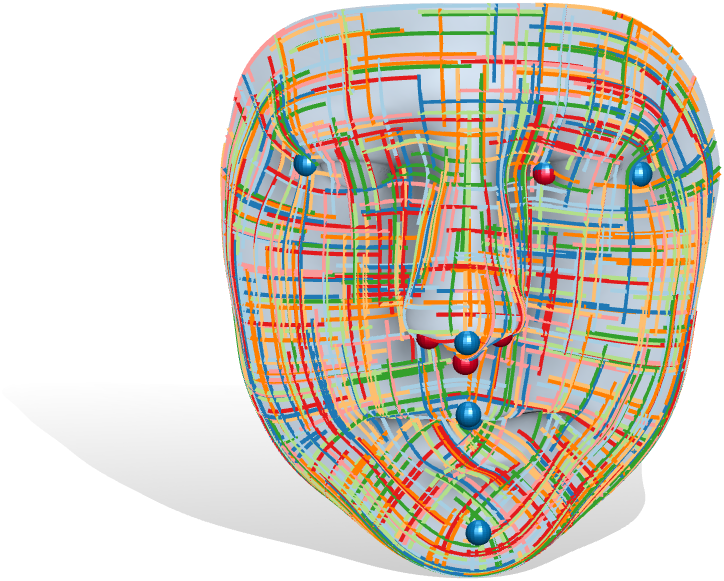} &
\includegraphics[width=\imgwidth,align=c]{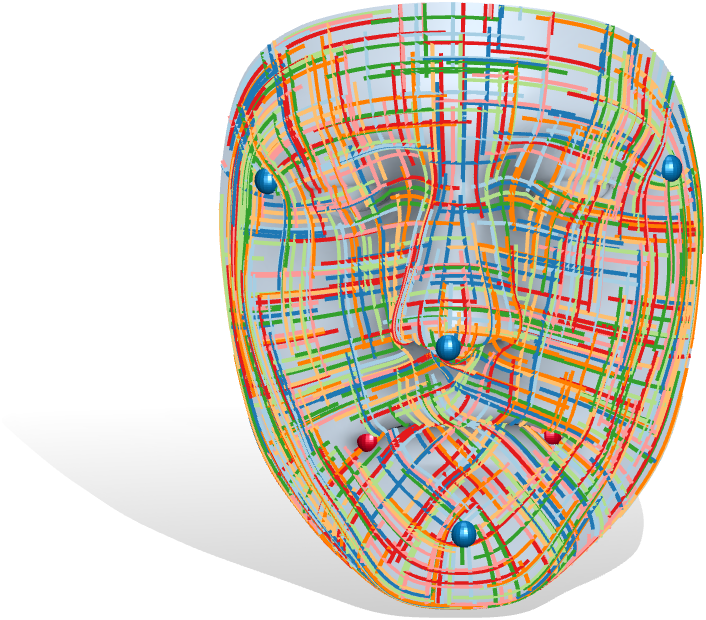} &
\includegraphics[width=\imgwidth,align=c]{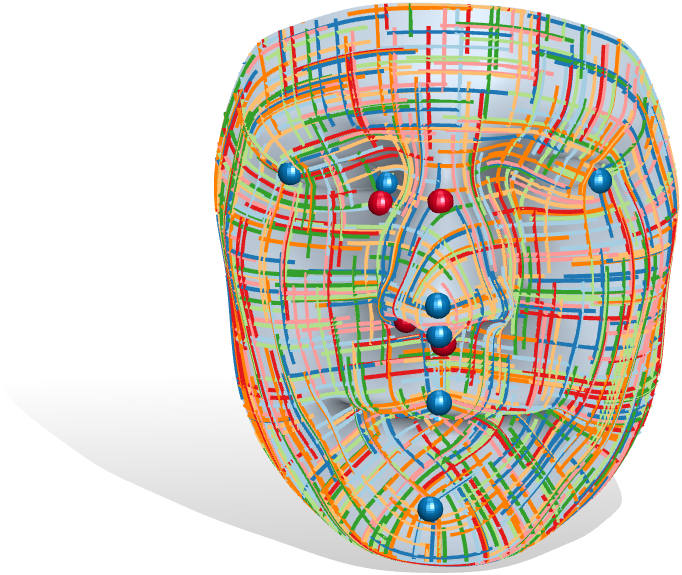} &
\includegraphics[width=\imgwidth,align=c]{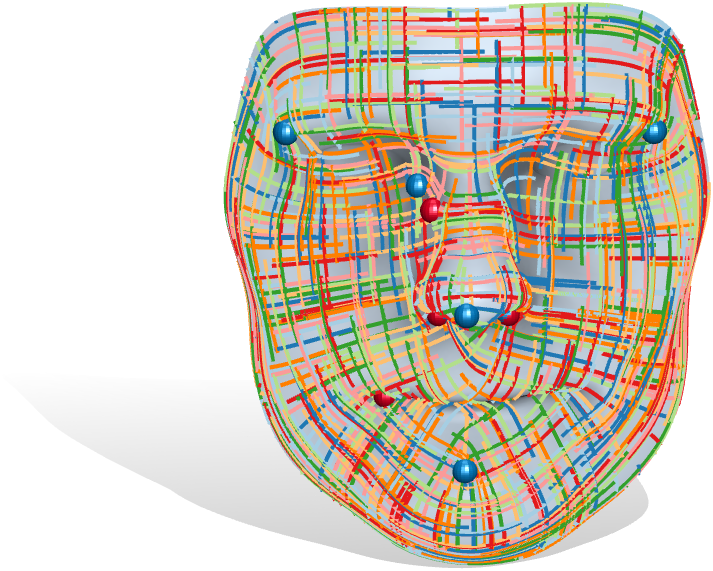} \\
{\small \cite{knoppel_globally_2013}\\(\textsc{geometry central})} &
\includegraphics[width=\imgwidth,align=c]{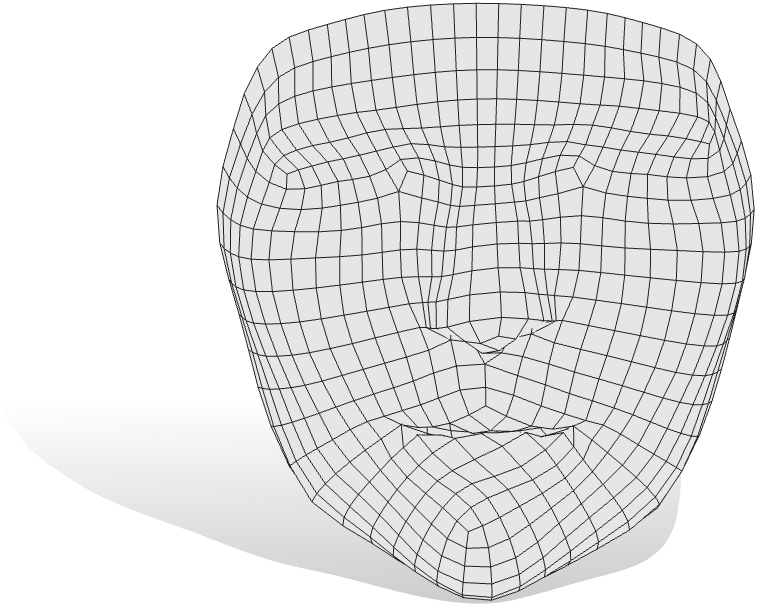} &
\includegraphics[width=\imgwidth,align=c]{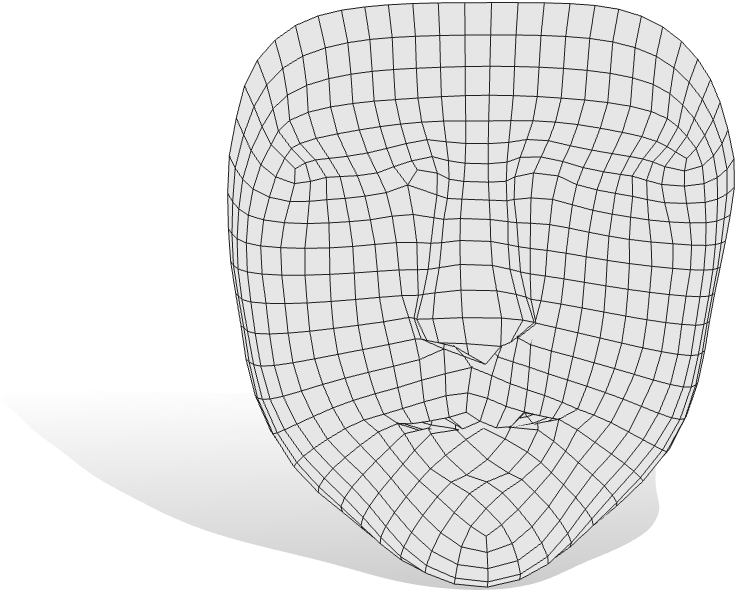} &
\includegraphics[width=\imgwidth,align=c]{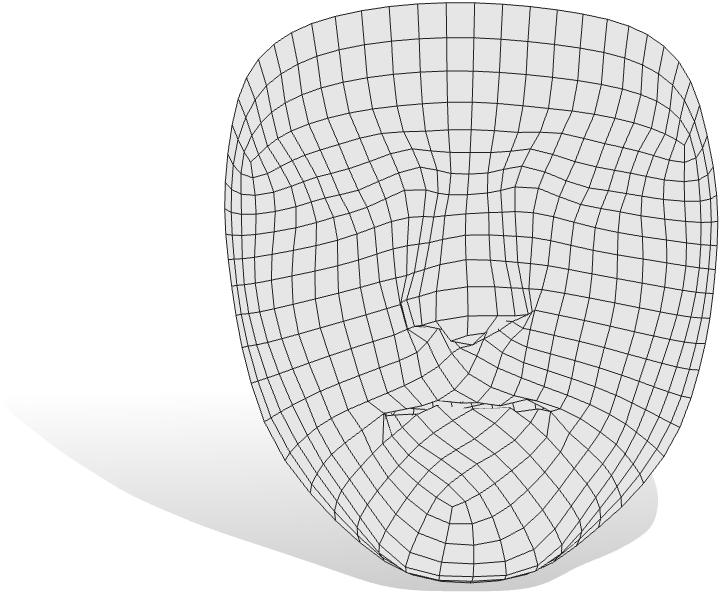} &
\includegraphics[width=\imgwidth,align=c]{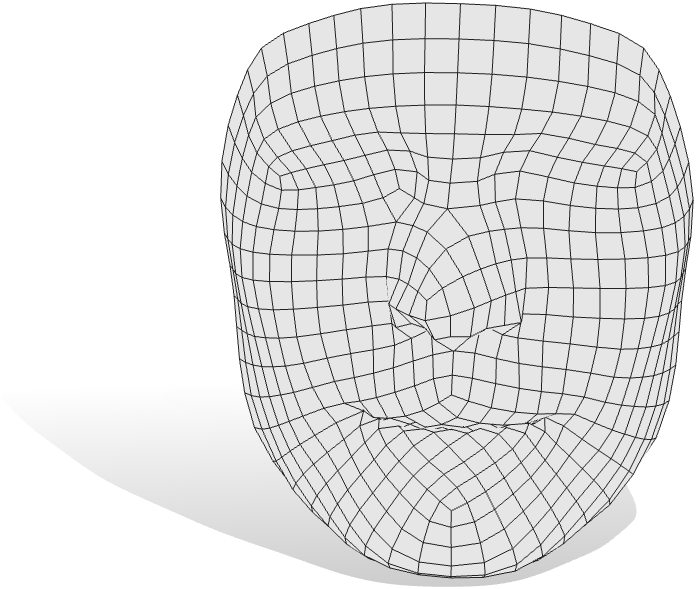} &
\includegraphics[width=\imgwidth,align=c]{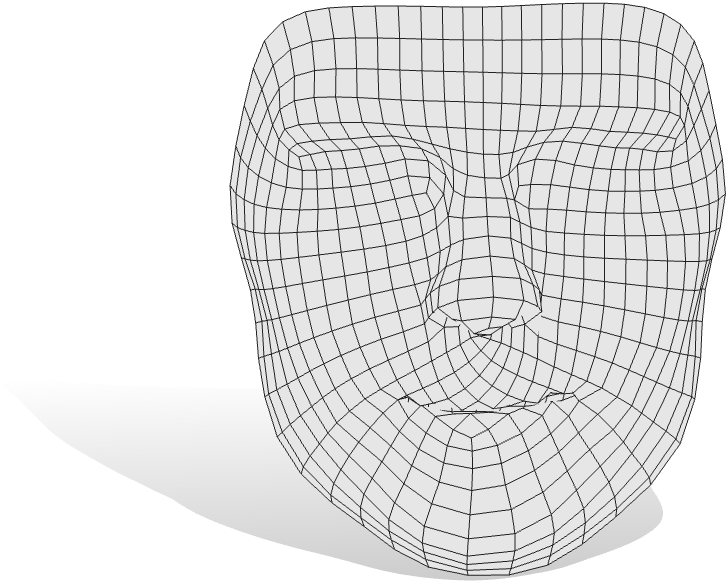} \\
{\small Ours} &
\includegraphics[width=\imgwidth,align=c]{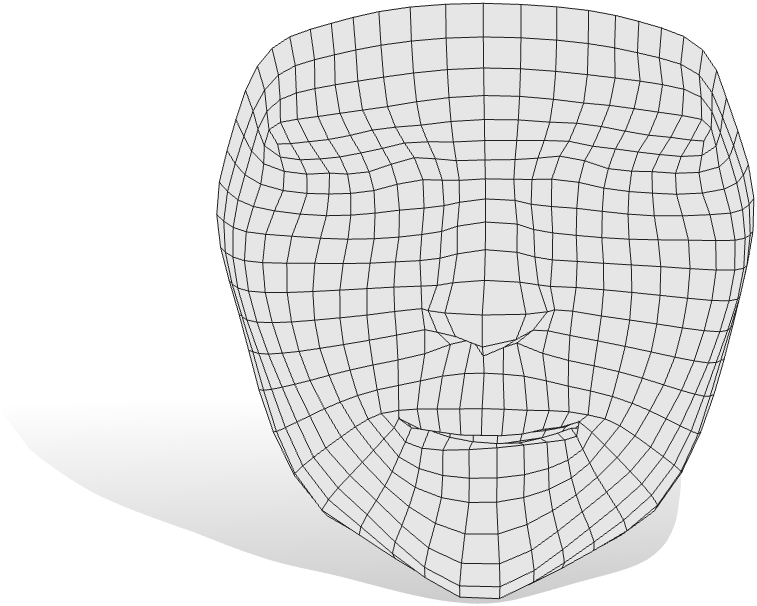} &
\includegraphics[width=\imgwidth,align=c]{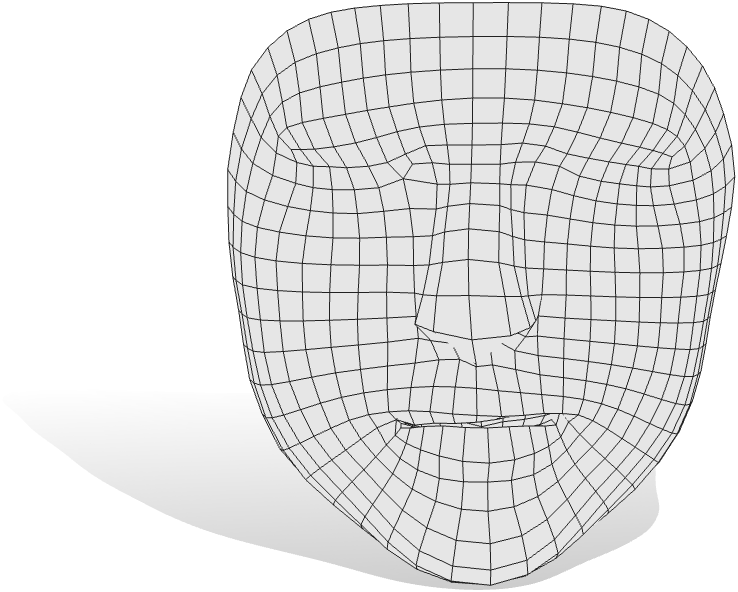} &
\includegraphics[width=\imgwidth,align=c]{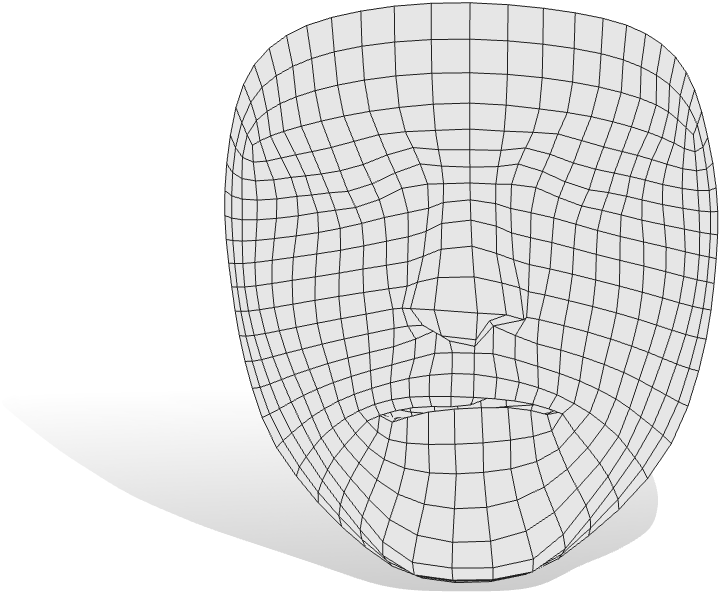} &
\includegraphics[width=\imgwidth,align=c]{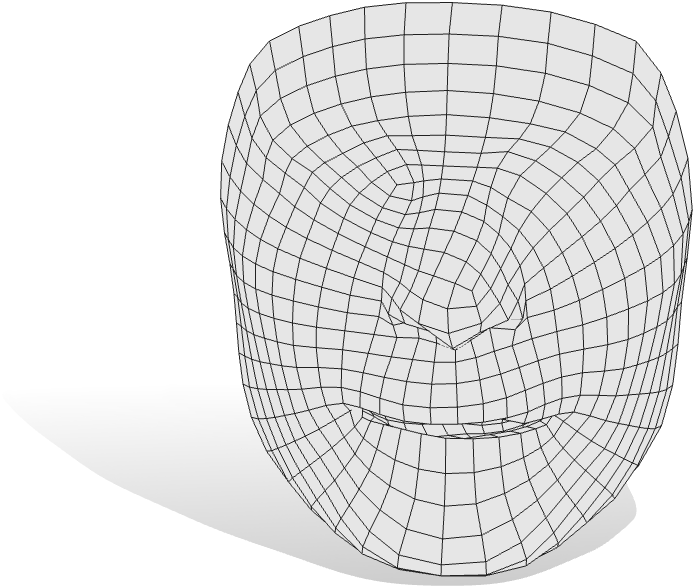} &
\includegraphics[width=\imgwidth,align=c]{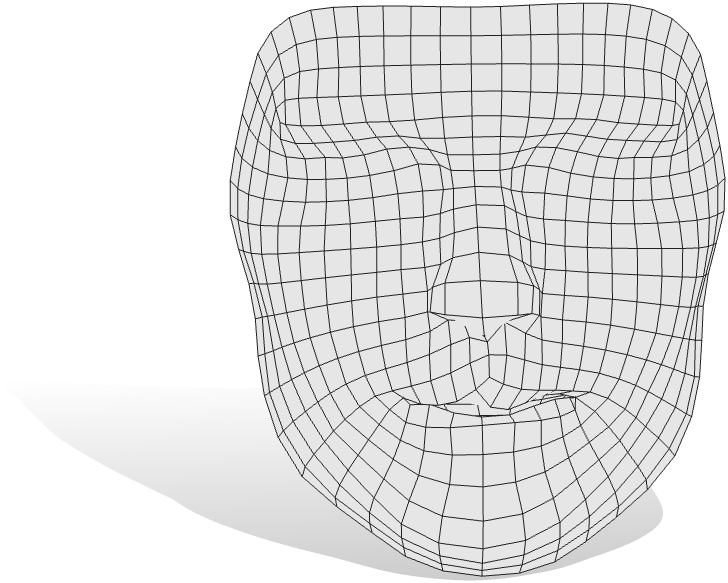}
\end{tblr}
\caption{Compared to the work of \citet{knoppel_globally_2013} and the \textsc{mbo} method of \citet{viertel_approach_2019}, minimal section relaxation yields fewer singularities and more consistent results independent of initialization or fine details of the underlying surface.}
\label{minsec:fig:faces-cross}
\end{figure*}

\paragraph{Convex Relaxation.} In the ideal case, the optimal solution to our convex relaxation is an integral current, and thus yields a globally minimal section. Due to discretization, this will never occur exactly, and the resulting current must be rounded to extract a directional field (see \Cref{minsec:subsec:extraction}). In practice, convergence toward an integral current is sufficient to extract a high-quality direction field (\Cref{minsec:fig:bdry-opening}). In some cases, however, the optimization converges to a superposition of integral solutions (\Cref{minsec:fig:superposition}). While our field extraction procedure still outputs a reasonable solution in such cases, we leave to future work the problem of bounding the rounding error.
\begin{figure}
\newcommand{\imgwidth}{0.3\columnwidth}
\begin{tabulary}{0.95\columnwidth}{@{}CCC@{}}
	\includegraphics[width=\imgwidth]{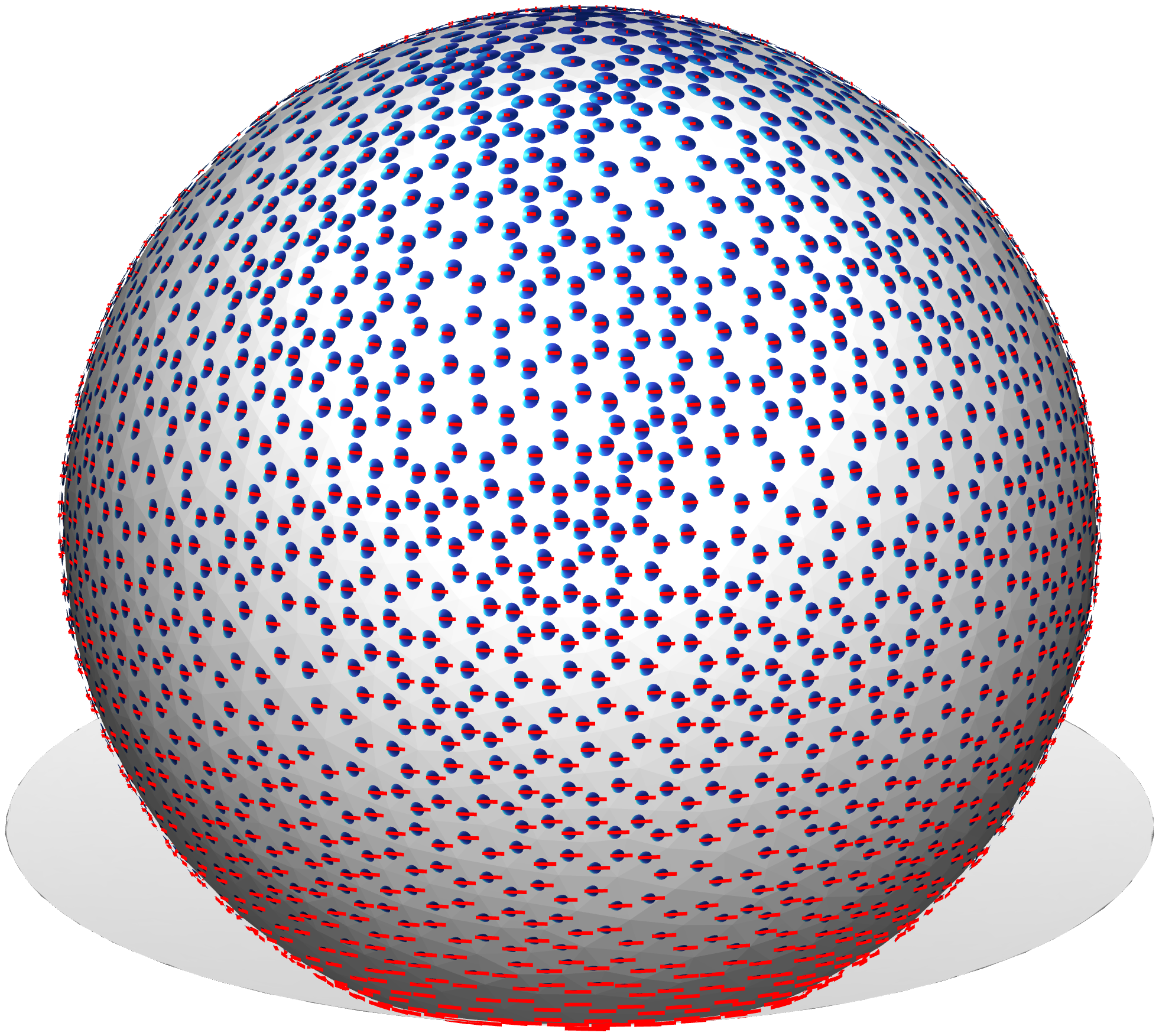} &
	\includegraphics[width=\imgwidth]{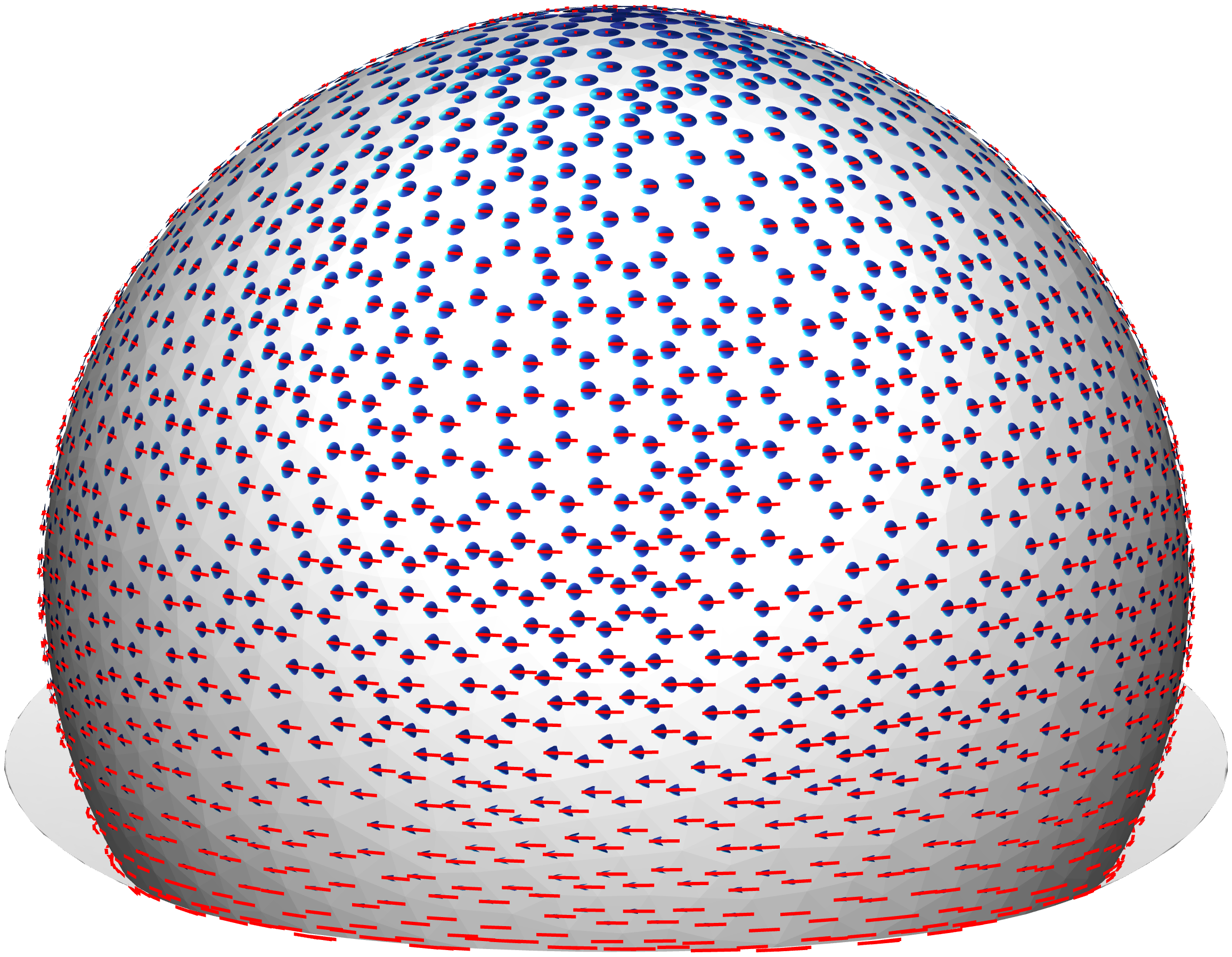} &
	\includegraphics[width=\imgwidth]{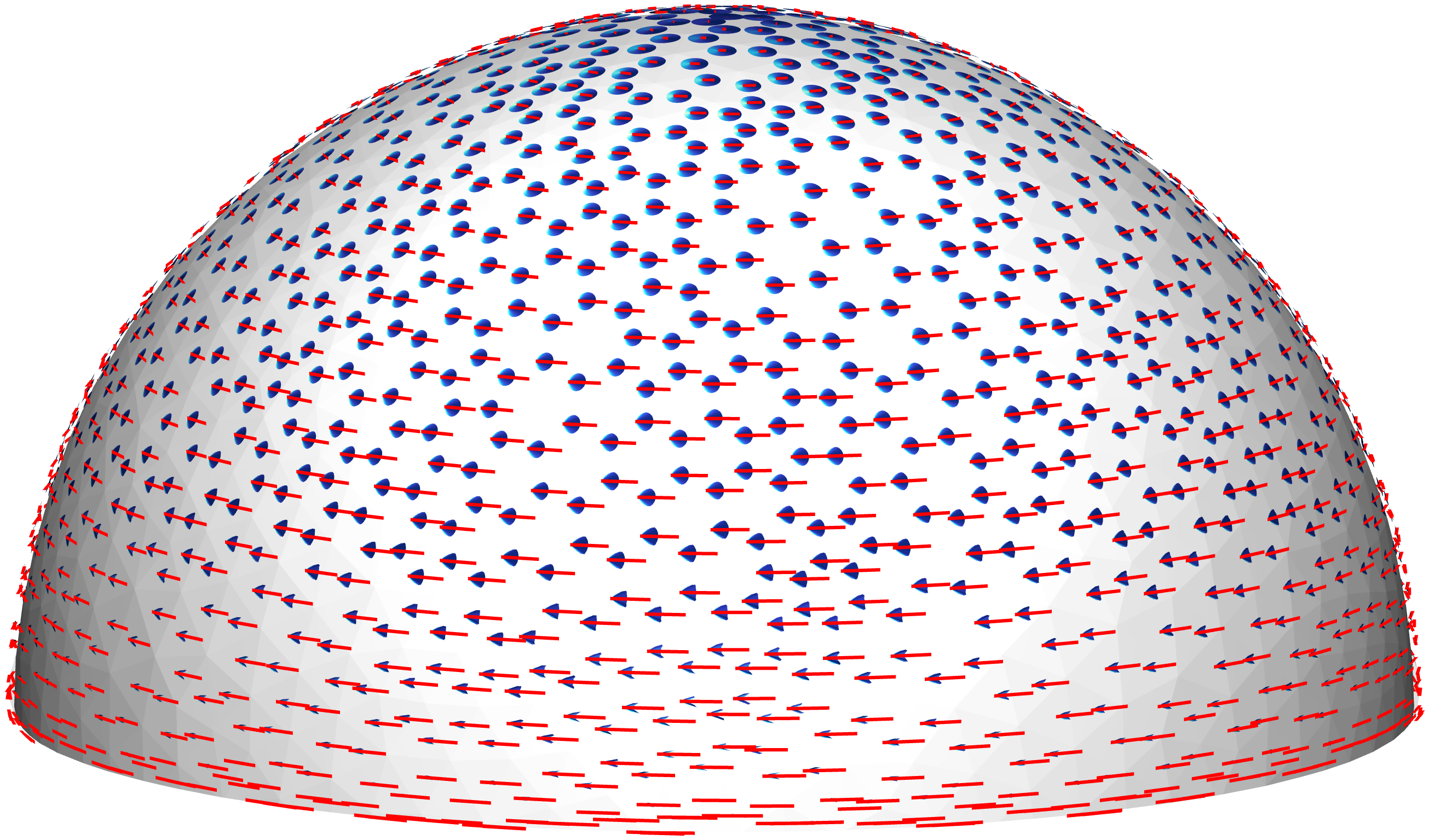} \\
	\includegraphics[width=\imgwidth]{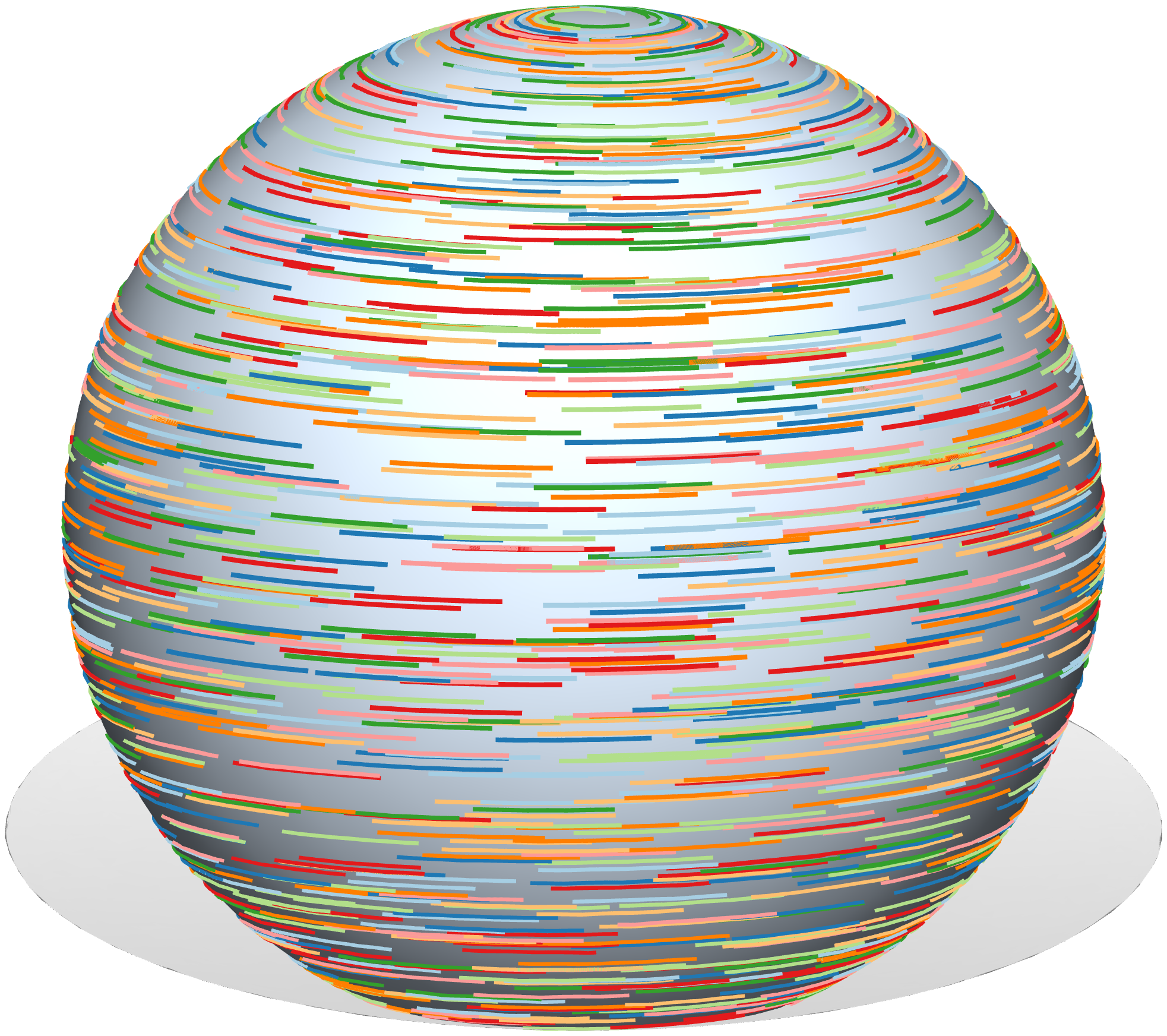} &
	\includegraphics[width=\imgwidth]{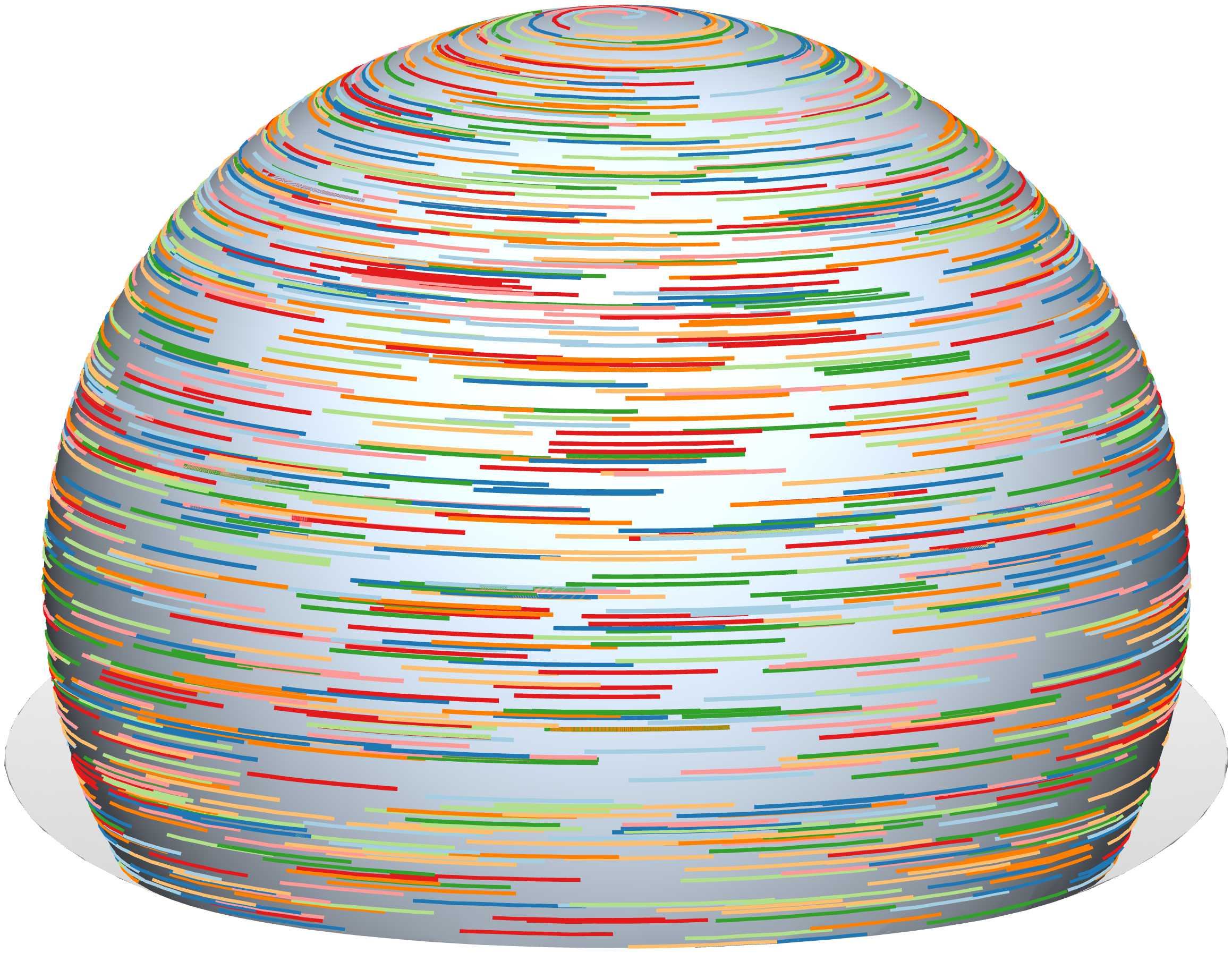} &
	\includegraphics[width=\imgwidth]{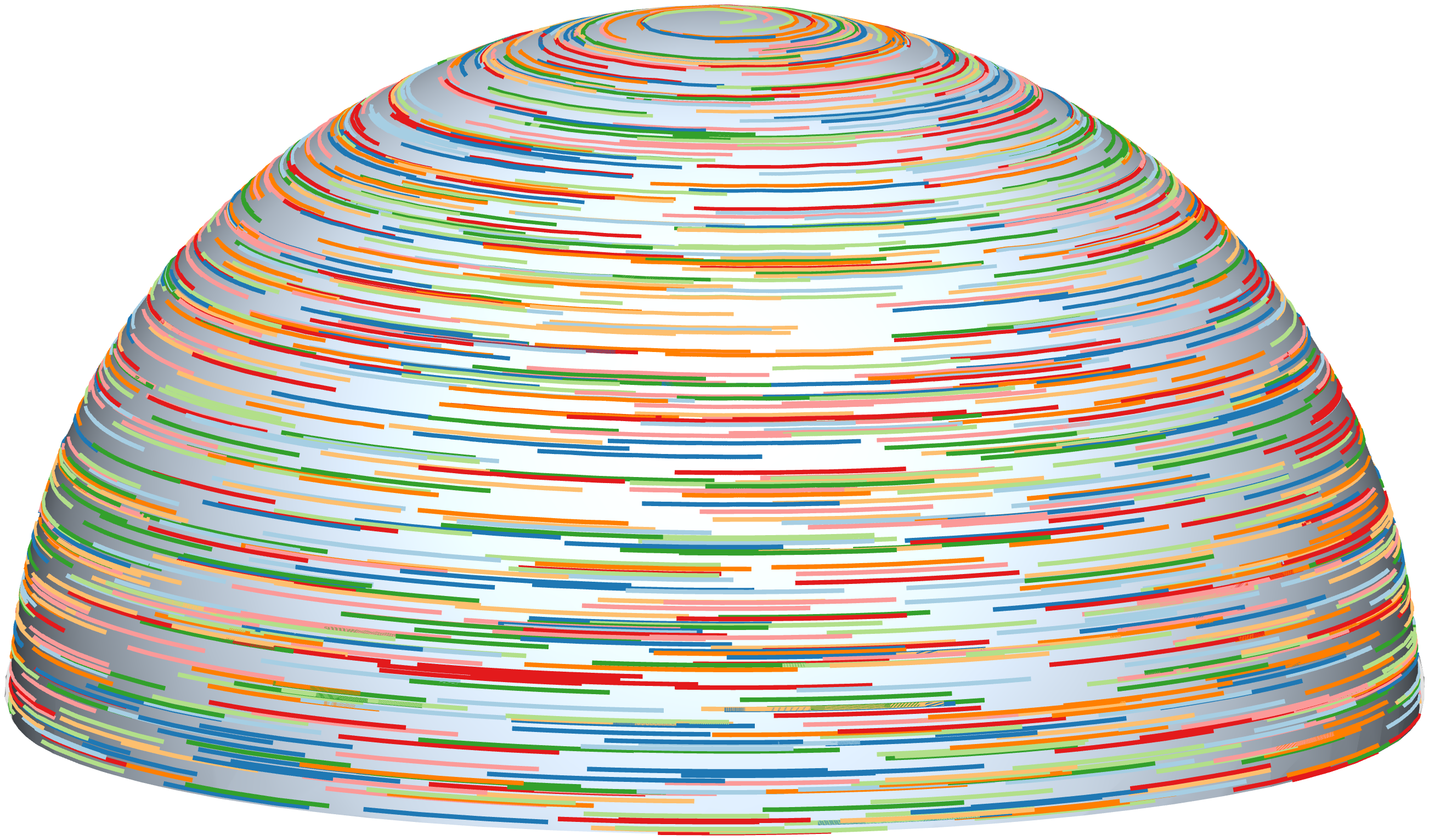}
\end{tabulary}
\caption{The current can become diffuse far away from the boundary (top). However, the extracted field is still well-defined (bottom).}
\label{minsec:fig:bdry-opening}
\end{figure}
\begin{figure}
	\newcommand{\imgwidth}{0.3\columnwidth}
	\begin{tabulary}{0.95\columnwidth}{@{}CCC@{}}
	\includegraphics[width=\imgwidth]{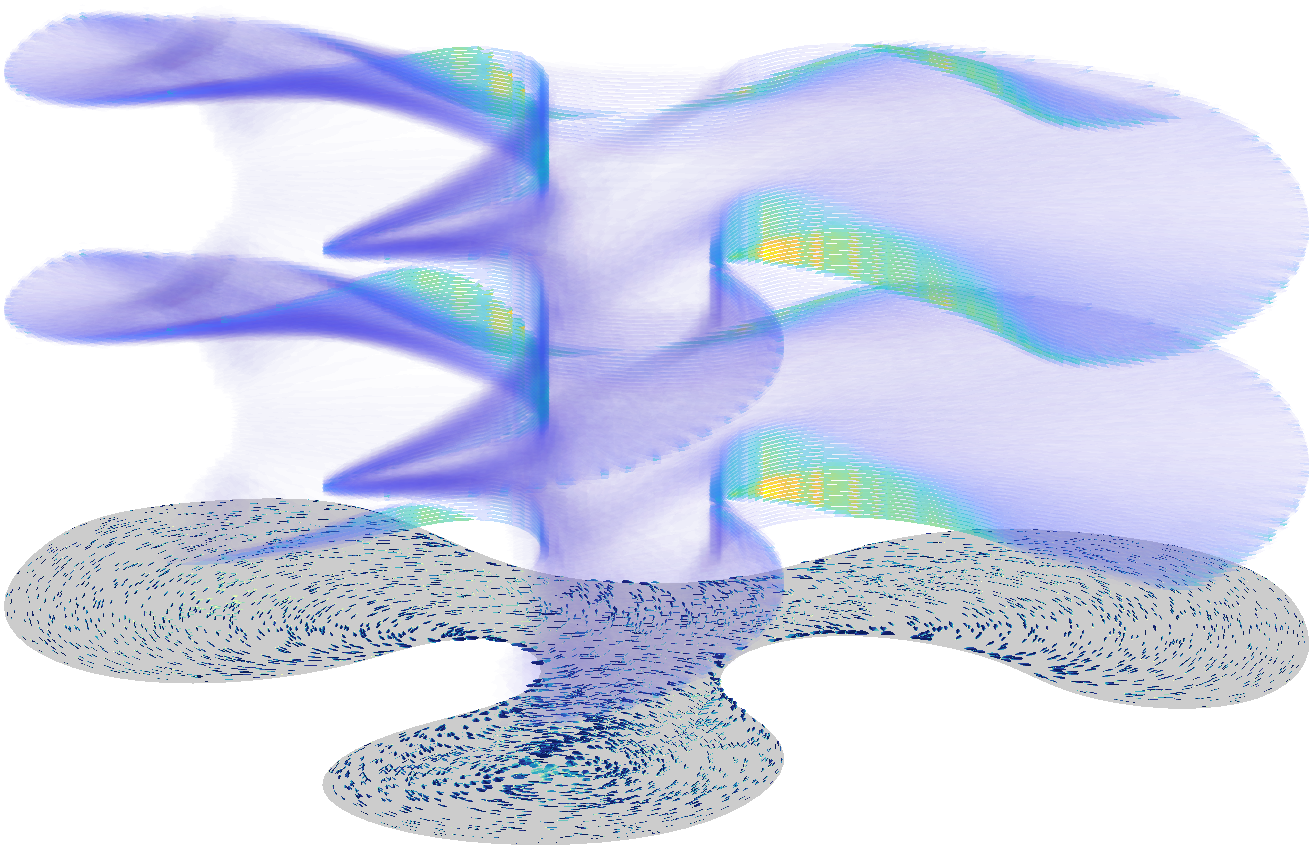} &
	\includegraphics[width=\imgwidth]{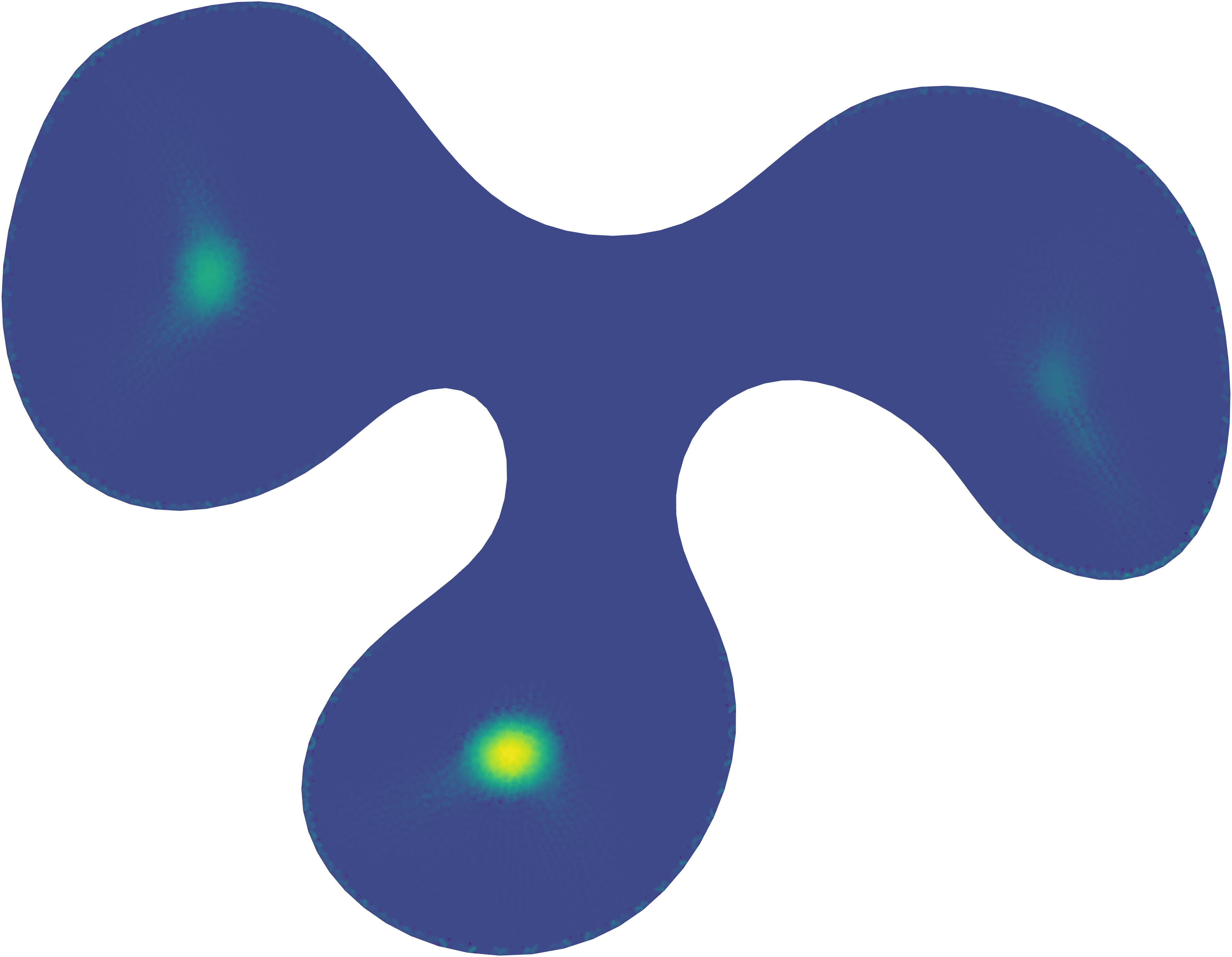} &
	\includegraphics[width=\imgwidth]{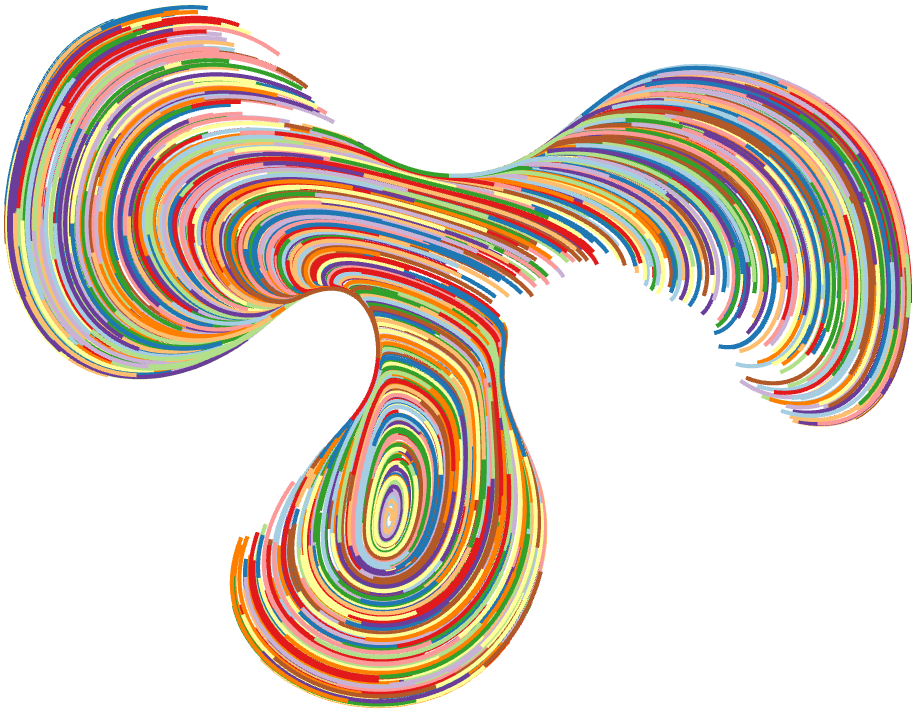}
\end{tabulary}
\caption{In some cases the current converges to a superposition of integral currents (left) with non-quantized singularities (center). However, one can still extract a directional field (right). This is a degree $1$ field computed with $\lambda = r = 1$ and base area $1$.}
\label{minsec:fig:superposition}
\end{figure}

To investigate the robustness of our method, we test it on models from the the dataset of \citet{Myles:RFAGP}. Since our method requires a surface with boundary, we preprocess closed surfaces by slicing along a plane through the centroid and taking the largest connected component. Five of the $116$ models in the dataset yielded errors during this preprocessing step, leaving us with $111$ models. The parameters $\lambda = 0.1, r = 0.2$ with $N = 64$ fiber increments were used for all models. \Cref{minsec:fig:dataset-mosaic} shows assorted results of our method and that of \citet{knoppel_globally_2013}.
\begin{figure*}
	\centering
	\newcommand{\imgwidth}{0.45\textwidth}
	\begin{tabulary}{\textwidth}{@{}C|C@{}}
	\includegraphics[width=\imgwidth]{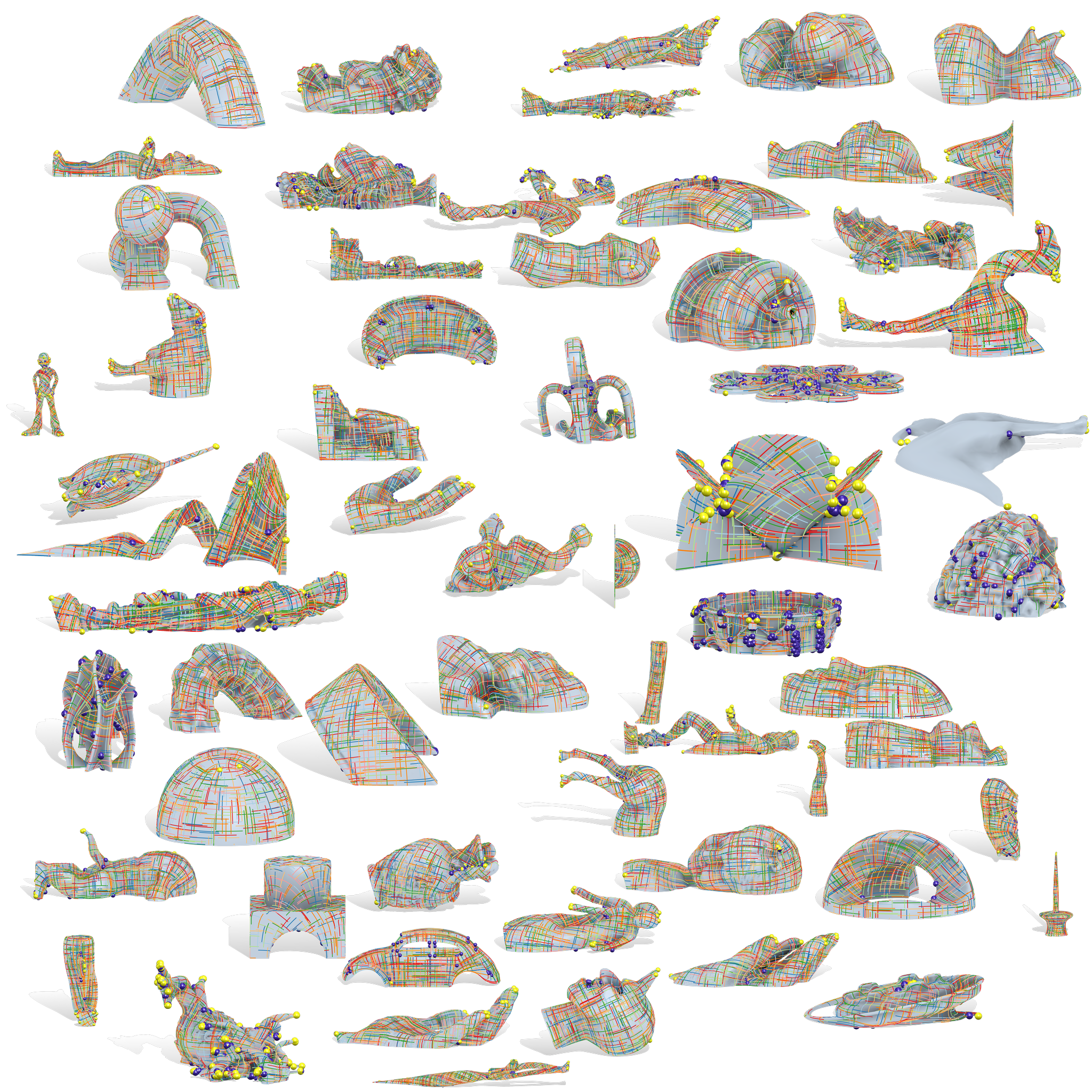} &
	\includegraphics[width=\imgwidth]{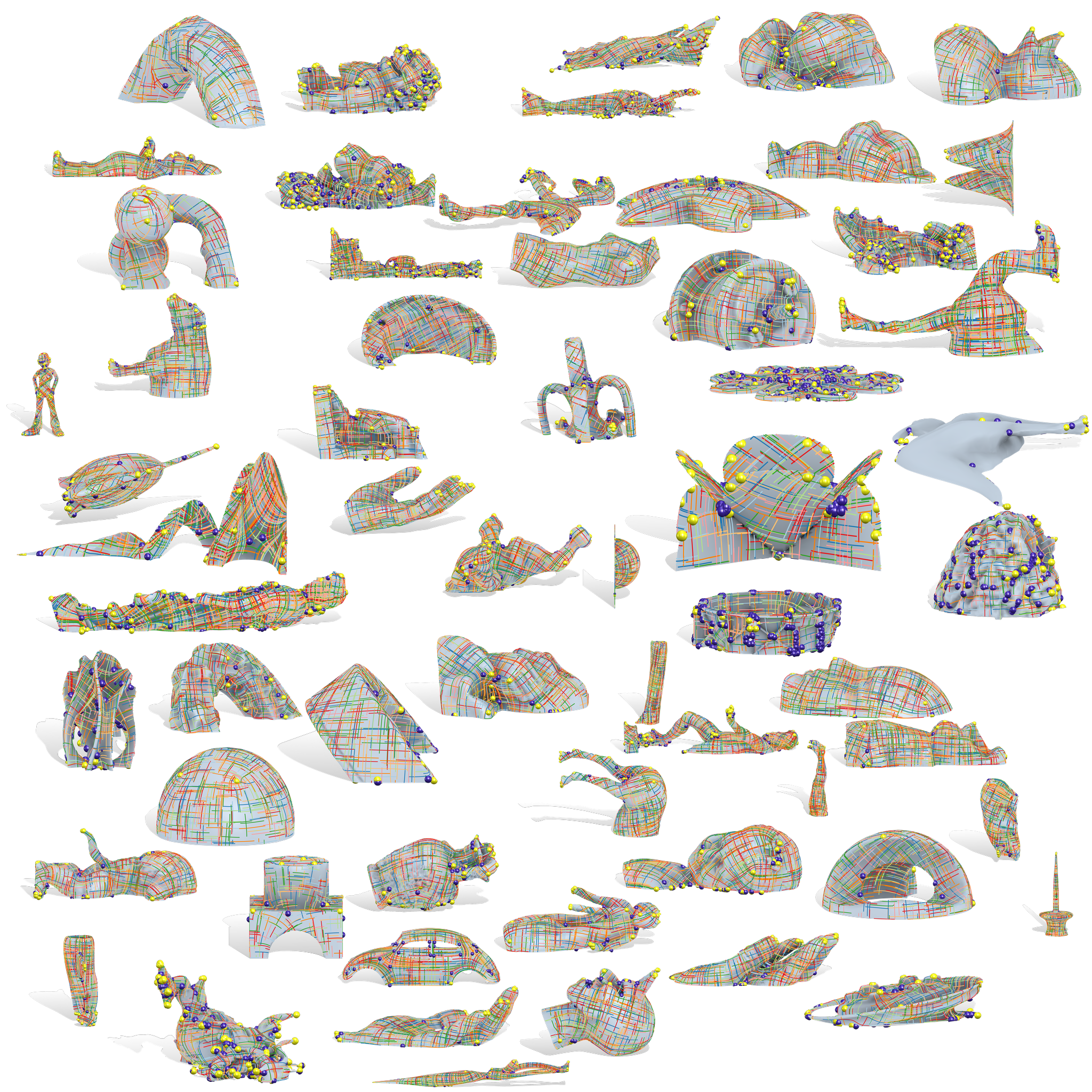} \\
	Ours & \cite{knoppel_globally_2013}
	\end{tabulary}
	\caption{Assorted results on the dataset of \citet{Myles:RFAGP}.}
	\label{minsec:fig:dataset-mosaic}
\end{figure*}

As a rough measure of exact recovery, we examine how tightly the mass of the computed current $\Sigma$ concentrates around the extracted section $\sigma$. For absolute angles $\theta \in [0, \pi]$, we measure how much of the mass of $\Sigma$ in each fiber is within a vertical offset of $\theta$ from the value of $\sigma$ in that fiber, then average over the base:
\begin{equation}
	\Mass_{\sigma,\theta}(\Sigma) = A^{-1} \sum_v A_v \sum_{x \in \pi^{-1}(v)} \{|\Sigma_x|_g : \dist_{\Sph^1}(x, \sigma_v) < \theta \}
\end{equation}
This cumulative distribution function (\textsc{cdf}) is shown in \Cref{minsec:fig:dataset-fiber-dist} for each surface in the dataset. Some currents are highly concentrated near the extracted section while others are diffusely spread out in the vertical direction. We consider a current to be meaningfully concentrated if $65\%$ of the mass is within $\pi/2$ of the extracted section. This is the case for $70$ of the $111$ models tested---a bit more than $60\%$ of the models.

The vertical \textsc{cdf} gives us a global measure of how close the relaxation is to exact recovery. To get a clearer picture of how the exactness varies over each surface, we measure the optimal transport distance $W_2$ on each fiber between $|\Sigma|$ restricted to that fiber and a Dirac $\delta$ measure at the value of $\sigma$ on that fiber:
\begin{equation}
	W_2(|\Sigma|_{\pi^{-1}(v)}, \delta_{\sigma_v}).
\end{equation}
In turn, \Cref{minsec:fig:dataset-W2-dist-cdf} shows a \textsc{cdf} of this pointwise measure of exact recovery over each base surface. While some models feature mass concentrated near the extracted section on most fibers, other models feature a significant fraction of fibers for which the mass is spread out diffusely. We suspect this is partly due to numerical decay of the high vertical frequencies away from the boundary, especially on highly curved surfaces. \Cref{minsec:fig:W2-vs-mesh} plots the mean fiber-wise $W_2$ distance against statistics of the vertex-wise Gaussian curvature $|\kappa|$. The results are closer to exact recovery on more gently curved meshes. Some of the meshes in the dataset have extremely high values of pointwise curvature---as high as $\num{1e8}$---due to nearly-degenerate triangles in curved regions. This suggests that remeshing the input surface prior to computing a minimal section may significantly improve exact recovery.
\begin{figure}
	\centering
	\pgfplotstableread[col sep=comma]{figures/dataset/fiber_dist_cdf.csv}{\datasetStat}
	\pgfplotstablegetcolsof{\datasetStat}
	\pgfmathsetmacro{\N}{\pgfplotsretval-1}
	\tikzsetnextfilename{fiber-dist-cdf}
	\begin{tikzpicture}
	\begin{axis}[
	        mark size = 0.5pt,
	        width = \columnwidth,
	        height = 0.6\columnwidth,
	        enlarge x limits = 0,
	        enlarge y limits = 0,
	        grid = none,
	        xlabel = {\footnotesize Angle Offset $\theta$},
	        ylabel = {\footnotesize Cumulative Mass Fraction $\Mass_{\sigma,\theta}(\Sigma)$},
	        xlabel near ticks,
	        ylabel near ticks,
	        every tick label/.append style = {font=\tiny}
	    ]
	    \foreach \column in {1,...,\N}{%
  			\addplot+[no markers, solid, ultra thin] table [x={dist_level},y index=\column] {\datasetStat};
		}
	\end{axis}
	\end{tikzpicture}
	\caption{The mass of $\Sigma$ is always biased toward the extracted section, though the degree of concentration varies with the base model.}
	\label{minsec:fig:dataset-fiber-dist}
\end{figure}
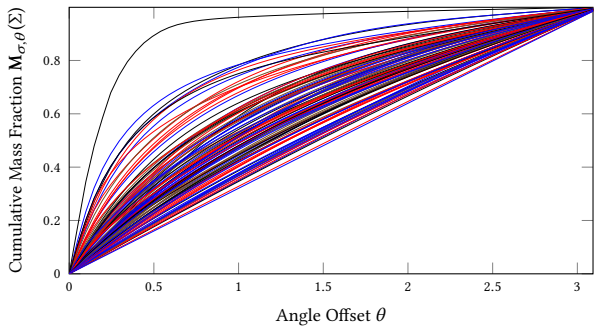
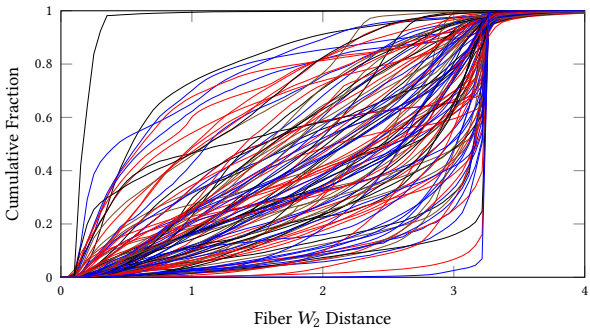
\begin{figure}
	\centering
	\pgfplotstableread[col sep=comma]{figures/dataset/W2_cdf_resaved.csv}{\datasetStat}
	\pgfplotstablegetcolsof{\datasetStat}
	\pgfmathsetmacro{\N}{\pgfplotsretval-1}
	\tikzsetnextfilename{W2-cdf}
	\begin{tikzpicture}
	\begin{axis}[
	        mark size = 0.5pt,
	        width = \columnwidth,
	        height = 0.6\columnwidth,
	        xmin = 0,
	        xmax = 4,
	        enlarge x limits = 0,
	        enlarge y limits = 0,
	        grid = none,
	        xlabel = {\footnotesize Fiber $W_2$ Distance},
	        ylabel = {\footnotesize Cumulative Fraction},
	        xlabel near ticks,
	        ylabel near ticks,
	        every tick label/.append style = {font=\tiny}
	    ]
	    \foreach \column in {1,...,\N}{%
  			\addplot+[no markers, solid, ultra thin] table [x={bin},y index=\column] {\datasetStat};
		}
	\end{axis}
	\end{tikzpicture}
	\caption{The concentration of $\Sigma$ toward the extracted section is heterogeneous over the base surface, as measured by fiberwise $W_2$ distance.}
	\label{minsec:fig:dataset-W2-dist-cdf}
\end{figure}
\begin{figure}
	\pgfplotstableread[col sep=comma]{figures/dataset/global_stats.csv}{\datasetStat}
	\tikzsetnextfilename{W2-vs-mean-abs-curv}
	\begin{tikzpicture}
	\begin{semilogxaxis}[
	        mark size = 0.5pt,
	        width = \columnwidth,
	        height = 0.5\columnwidth,
	        grid = none,
	        xlabel = {\footnotesize Mean Vertex Curvature},
	        ylabel = {\footnotesize Mean Fiber $W_2$ Distance},
	        xlabel near ticks,
	        ylabel near ticks,
	        legend pos = north west,
	        legend cell align = left,
	        legend style = {font=\footnotesize, row sep=0.1pt},
	        every tick label/.append style = {font=\tiny}
	    ]	    
	    \addplot [only marks,black,forget plot] table [x = mean_abs_curvature, y = mean_W2] {\datasetStat};
		\addplot [blue] table[x = mean_abs_curvature, y={create col/linear regression={x = mean_abs_curvature, y = mean_W2}}] {\datasetStat};
	\end{semilogxaxis}
	\end{tikzpicture}\\%
	\tikzsetnextfilename{W2-vs-max-abs-curv}
	\begin{tikzpicture}
	\begin{semilogxaxis}[
	        mark size = 0.5pt,
	        width = \columnwidth,
	        height = 0.5\columnwidth,
	        grid = none,
	        xlabel = {\footnotesize Maximum Vertex Curvature},
	        ylabel = {\footnotesize Mean Fiber $W_2$ Distance},
	        xlabel near ticks,
	        ylabel near ticks,
	        legend pos = north west,
	        legend cell align = left,
	        legend style = {font=\footnotesize, row sep=0.1pt},
	        every tick label/.append style = {font=\tiny}
	    ]	    
	    \addplot [only marks,black,forget plot] table [x = max_abs_curvature, y = mean_W2] {\datasetStat};
		\addplot [blue] table[x = max_abs_curvature, y={create col/linear regression={x = max_abs_curvature, y = mean_W2}}] {\datasetStat};
	\end{semilogxaxis}
	\end{tikzpicture}
	\caption{The computed current tends to be more diffuse when the input mesh is highly curved.}
	\label{minsec:fig:W2-vs-mesh}
\end{figure}
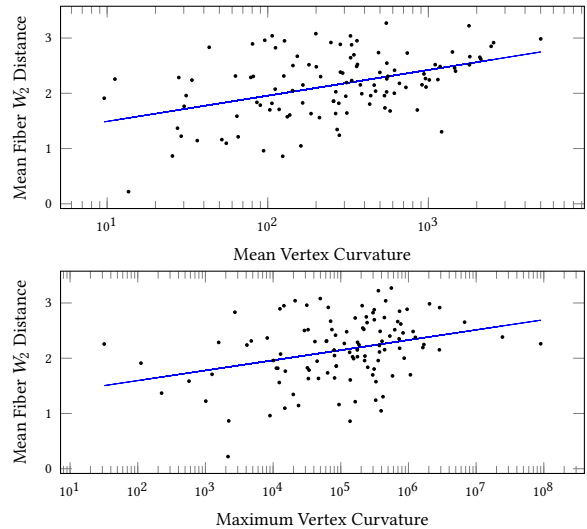

\paragraph{Performance.} \Cref{minsec:fig:dataset-timing} depicts the runtime scaling of our method as a function of base mesh size (number of vertices). The method is slightly superlinear. Experiments were done on a system with eight Intel Ice Lake cores and an Nvidia A100 MIG GPU.
\begin{figure}
	\pgfplotstableread[col sep=comma]{figures/dataset/global_stats.csv}{\datasetStat}
	\tikzsetnextfilename{performance-scaling}
	\begin{tikzpicture}
	\begin{loglogaxis}[
	        mark size = 0.5pt,
	        width = \columnwidth,
	        height = 0.6\columnwidth,
	        grid = none,
	        xlabel = {\footnotesize \# Vertices},
	        ylabel = {\footnotesize Time (s)},
	        xlabel near ticks,
	        ylabel near ticks,
	        legend pos = north west,
	        legend cell align = left,
	        legend style = {font=\footnotesize, row sep=0.1pt},
	        every tick label/.append style = {font=\tiny}
	    ]	    
	    \addplot [only marks,black,forget plot] table [x = nv, y = time] {\datasetStat};
		\addplot [blue] table[x = nv, y={create col/linear regression={x = nv, y = time}}] {\datasetStat};
		\xdef\scalingExponent{\pgfmathprintnumber{\pgfplotstableregressiona}}
		\addlegendentry{$\propto V^{\scalingExponent}$}
	\end{loglogaxis}
	\end{tikzpicture}
	\caption{Overall performance of our \textsc{admm}-based method scales nearly linearly with base mesh size.}
	\label{minsec:fig:dataset-timing}
\end{figure}
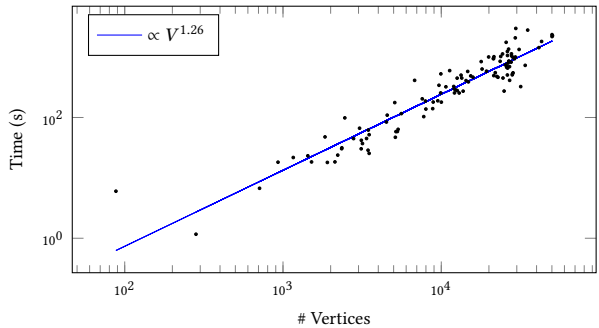

\paragraph{Meshing.} Our direction fields are ready to use for downstream applications such as quadrilateral meshing. Examples of meshes computed from our fields are shown in \Cref{minsec:fig:faces-cross} and compared against those computed with the method of \cite{knoppel_globally_2013}. All meshes were computed by the seamless integration method of \citet{bommes2013integer}.

\section{Discussion}
Lifting field optimization from maps to their graphs unlocks the rich tools of geometric measure theory, allowing us to precisely characterize the relationship between fields and their singularities as a linear boundary condition. The lifted formulation lends itself to a convex relaxation, and our hybrid Fourier discretization avoids the problem of discretizing a curved bundle.

This paper explores the case of fields valued in circle bundles, but a larger class of problems in geometry processing and computational engineering can be formulated as optimization problems over sections with singularities. The most obvious generalization is to volumetric frame fields \cite{solomon_boundary_2017,liu_singularity-constrained_2018,palmer_algebraic_2020}. More surprisingly, it may be possible to formulate end-to-end meshing problems as optimization problems over sections, with the right choice of fiber. For example, a quadrilateral mesh may be viewed as a section of a special bundle whose fiber is a certain $3$-manifold, itself in turn constructed as a twisted torus bundle (\emph{cf.} \cite{hocking_topological_2022}, which explores this idea from the perspective of modeling crystal defects in physics).

Supposing that the appropriate minimal section problems can be formulated in the language of currents, the next big question is whether convex relaxation from integral to normal currents remains useful. Even if exact recovery is achieved, it may not be worth the computational cost of discretizing a bundle of much higher dimension than the base domain, as the size of the discretization will generally scale exponentially with the dimension. Given this, an appealing alternative might be to use a neural implicit representation for currents in high dimension, \emph{a la} \cite{palmer_deepcurrents_2022}. Even then, working with the mass norm in dimension greater than three may be challenging due to issues arising from non-simple $p$-forms.

Alternatively, one could take inspiration from the simpler sparse deconvolution problem, where the conic particle optimization me\-thod of \citet{chizat_sparse_2020} converges exponentially to the global optimum given suitable initialization. If such Lagrangian methods could be extended from measures to currents, they could provide a computationally efficient alternative to discretizing the ambient space.

Along the same lines, a tantalizing approach is to develop a \emph{reduced-order model} of singularities. In such a model, upon ``integrating out'' the degrees of freedom of the field, the field energy would induce an effective energy encompassing interactions between the singularities. This would dramatically trim the computational complexity because only degrees of freedom on the singularities would need to be represented. Reduced-order models of elastic defects in two dimensions have been developed \cite{moshe_elastic_2015}, but they rely on lucky linearity properties that do not generalize to the volumetric case. Developing reduced-order models of volumetric frame field and mesh singularities remains an appealing challenge for future work.

\begin{acks}
We are grateful to David Bommes, Paul Zhang, Etienne Vouga, and Josh Vekhter for many interesting discussions. We thank David Bommes for helping with quad meshing experiments and Keenan Crane for providing surface models.

David Palmer acknowledges the generous support of the Fannie and John Hertz Foundation Fellowship
and the \grantsponsor{NSF}{NSF}{https://nsf.gov} Mathematical Sciences Postdoctoral Research Fellowship under award \#\grantnum{NSF}{2303403}. Albert Chern appreciates the generous support of the \grantsponsor{NSF}{National Science Foundation}{https://nsf.gov} CAREER Award \#\grantnum{NSF}{2239062}.
The MIT Geometric Data Processing group acknowledges the generous support of \grantsponsor{ARO}{Army Research Office}{https://arl.devcom.army.mil/who-we-are/aro/} grants \grantnum{ARO}{W911NF2010168} and \grantnum{ARO}{W911NF2110293}, of \grantsponsor{AFOSR}{Air Force Office of Scientific Research}{https://www.afrl.af.mil/AFOSR/} award \grantnum{AFOSR}{FA9550-19-1-031}, of \grantsponsor{NSF}{National Science Foundation}{https://nsf.gov} grant \grantnum{NSF}{CHS-1955697}, from the CSAIL Systems that Learn program, from the MIT–IBM Watson AI Laboratory, and from the Toyota--CSAIL Joint Research Center.
\end{acks}

\printbibliography

\appendix
\section{Mathematical Background} \label{minsec:sec:math-prelim}
Our convex relaxation uses the concept of \emph{current} from geometric measure theory. In \Cref{minsec:subsec:gmt}, we recall the pertinent definitions. In geometric measure theory, currents are constructed as dual to exterior forms, and formulating a minimal area problem requires defining a metric in the ambient space, here a circle bundle over a base surface. To that end, we introduce the basics of exterior calculus and Riemannian geometry on a circle bundle. In \Cref{minsec:subsec:ehresmann}, we examine the horizontal-vertical decomposition comprising the connection on the bundle. In \Cref{minsec:subsec:bundle-metric}, we describe the \emph{Sasaki} or \emph{Kaluza-Klein} metric, a natural Riemannian metric on the total space of the bundle. We discuss forms on the bundle in \Cref{minsec:subsec:bundle-decomp} and the metric on forms in \Cref{minsec:subsec:forms-metric}. 

\subsection{Currents} \label{minsec:subsec:gmt}
We briefly recall the essential definitions of geometric measure theory. We refer the interested reader to \cite{lang2005,federer_gmt,simon2014introduction} for a deeper introduction to this rich theory.

Currents on a manifold $X$ are dual to differential forms, generalizing the duality between measures and smooth functions:
\begin{definition}[$p$-current]
	The space $\mathcal{D}_p(X)$ of \textbf{$p$-currents} on $X$ is defined as the dual to the space of smooth compactly-supported $p$-forms on $X$:
	\[ \mathcal{D}_p(X) \coloneqq (\Omega^p_c(X))^* \]
	That is, a $p$-current $\Sigma \in \mathcal{D}_p(X)$ is a continuous linear functional on compactly-supported $p$-forms, and we denote the duality pairing by $\langle \Sigma, \omega \rangle$ for $\omega \in \Omega^p_c(X)$.
\end{definition}

The spaces of currents are vector spaces, and in particular $0$-currents are measures as implied above, $\mathcal{D}_0(X) = \mathcal{M}(X)$. Just as measures can be viewed as generalized point sets, higher-dimensional currents can be viewed as generalized submanifolds. In particular, any smooth surface $\Sigma$ defines a current, as
\begin{equation}
	\omega \mapsto \int_{\Sigma}\omega \in \R
\end{equation}
is a continuous linear functional on $2$-forms. Informally, one way to view currents is as ``objects that can be integrated over.''

Given a metric on the ambient manifold $X$, one can define a norm on currents by duality:
\begin{definition}[mass norm]
The \textbf{mass} of a $p$-current $\Sigma$ is given by
\[
	\Mass(\Sigma) \coloneqq \sup_{\|\alpha\|_\infty \le 1} \langle \Sigma, \alpha\rangle,
\]
where the supremum is over $p$-forms whose pointwise norm is uniformly bounded.\footnote{Strictly speaking, the pointwise norm in question should be the \emph{comass norm}. As this distinction only becomes relevant in ambient dimensions greater than $3$, we will not discuss it further here.}
\end{definition}
In case $\Sigma$ is a surface, $\Mass(\Sigma) = \Area(\Sigma)$ (this holds more generally for higher-dimensional submanifolds). Thus the mass norm can be seen as a generalized area functional. Similarly, one can generalize the topological notion of boundary to currents:
\begin{definition}[boundary operator]
	The \textbf{boundary} of a $p$-current $\Sigma$ is a $(p-1)$-current $\partial\Sigma$ defined by its action on $(p-1)$-forms as follows:
	\[ \langle \partial \Sigma, \omega \rangle \coloneqq \langle \Sigma, \dext \omega \rangle. \]
\end{definition}
Note that in case $\Sigma$ is a smooth submanifold, this is just Stokes' Theorem. So the boundary of a current is given by extending Stokes' Theorem to currents as a definition. Similarly, one can define a Cartesian product of currents, which generalizes the Cartesian product on submanifolds. We will use this notion along with the boundary operator to express singularities of a field as vertical boundary components of its graph:
\begin{definition}[Cartesian product of currents]
	Given currents $\Sigma \in \mathcal{D}_p(X), \Xi \in \mathcal{D}_q(Y)$, their \textbf{Cartesian product} is a $(p+q)$-current $\Sigma \times \Xi \in \mathcal{D}_p(X \times Y)$ given by
	\[ \langle\Sigma \times \Xi, \phi (\pi_X^* \alpha) \wedge (\pi_Y^*\beta)\rangle =
	\begin{cases} \langle \Sigma, \alpha \langle \Xi, \phi \beta \rangle\rangle  & \text{if }\alpha \in \Omega^p(X), \beta \in \Omega^q(Y) \\
	0 & \text{otherwise},
	\end{cases} \]
	and extended by linearity to general forms on the bundle. Here $\pi_X : X \times Y \to X$ and $\pi_Y : X \times Y \to Y$ are the projections.
\end{definition}

Geometric measure theory considers various formal optimization problems over spaces of currents. For example, the classical minimal surface problem or \emph{Plateau's problem} may be formulated as
\begin{equation}
	\inf_\Sigma \{\Mass(\Sigma) : \partial \Sigma = \Gamma\}.
\end{equation}
Theoretical work on existence and regularity of optimal solutions often finds it convenient to restrict the domain of optimization from all currents to a subspace of \emph{integral currents} $\mathcal{I}_p(X) \subset \mathcal{D}_p(X)$ satisfying additional technical conditions we will not expand on herein. Suffice it to say that optimization over integral currents is to optimization over general currents as integer programming is to linear programming. As such, integral currents are difficult to work with numerically. But in some cases, a convex relaxation to general currents yields integral solutions: a case of \emph{exact recovery} \cite{federer_real_1974,brezis_plateau_2019}. Our formulation of the minimal section problem in \Cref{minsec:sec:problem} adopts a relaxation to general currents.

\subsection{Ehresmann Connection}
\label{minsec:subsec:ehresmann}
Our minimal section problem \eqref{minsec:prob.sym} takes place in a circle bundle. To construct the mass norm, we need to metrize the total space of the circle bundle. Note that the fiber and base metrics alone are insufficient to uniquely specify a metric on the total bundle. The total metric also encodes which \emph{horizontal} directions are orthogonal to the vertical fibers. This choice determines the slope of a section graph relative to the base, which will give our area functional differential-geometric meaning.

Such a choice of horizontal reference directions in each fiber is precisely the information encoded in a \emph{connection} on the bundle. As a natural connection inherited from the base surface is readily available, we use this connection to construct the Riemannian metric on the total space of the bundle. We first recall the notion of connection appropriate to a circle bundle. In \Cref{minsec:subsec:bundle-metric} we will introduce the natural metric on the bundle induced by the connection.

\begin{figure}
\centering
	\tikzsetnextfilename{bundle-schematic}
	\begin{tikzpicture}[scale=\columnwidth/5.5cm]
		\fill[gray,opacity=0.2] (-1.5, .5) -- (-2.5, -.5) .. controls (-1, -1) and (0, 0) .. (1.5, -.5) -- (2.5, .5) .. controls (1, 1) and (0, 0) .. cycle;
		\fill[cyan!5] (-2.5, -.5) -- (-1.5, .5) .. controls (0, 0) and (1, 1) .. (2.5, .5) -- (2.5, 2.5) .. controls (1, 3) and (0, 2) .. (-1.5, 2.5) -- (-2.5, 1.5) -- cycle;
		\draw[dashed] (-2, 0) .. controls (-.5, -.5) and (.5, .5) .. (2, 0) -- (2, 2) .. controls (.5, 2.5) and (-.5, 1.5) .. (-2, 2) -- cycle;
		\draw[thick,cyan,>->] (-1, -.3) -- (-1, 2) node[pos=1,label={[label distance=-8pt]above right:{$\pi^{-1}(p)\simeq \Sph^1$}}] {};
		\node at (1, -.2) {$B$};
		\draw[thick,->] (-2, 0) .. controls (-.5, -.5) and (.5, .5) .. (2, 0) node[pos=0.8,label={[label distance=-4pt]above:$\gamma$}] {};
		\node[cyan] at (2.2, 2.2) {$E$};
		\draw[magenta,thick,->] (-2, 1) .. controls (-.5, .28) and (1, 1) .. (2, 1) node[pos=0.8,label={[label distance=-4pt]above:{$\tilde\gamma$}}] {};
		\draw[cyan,very thick,->] (-1, .7) -- (-1, 1.2) node[label={right:{$V_x E$}}] {};
		\fill[magenta,opacity=0.2] (-1.3, .95) -- (-.1, .85) -- (-.7, .45) -- (-1.9, .55) -- cycle;
		\node[magenta] at (-.7, .8) {$H_x E$};
		\node[circle,fill=black,inner sep=0pt,minimum size=3pt,label={[label distance=-2pt]above right:$p$}] at (-1, -.14) {};
		\node[circle,fill=cyan,inner sep=0pt,minimum size=3pt,label={[label distance=-2pt,cyan]below right:$x$}] at (-1, .7) {};
	\end{tikzpicture}
	\caption{The connection specifies a horizontal subspace of the tangent space to the bundle $E$ at each point $x \in E$. This uniquely determines horizontal lifts, i.e., parallel transport of points in the bundle.}
	\label{minsec:fig:bundle-schematic}
\end{figure}
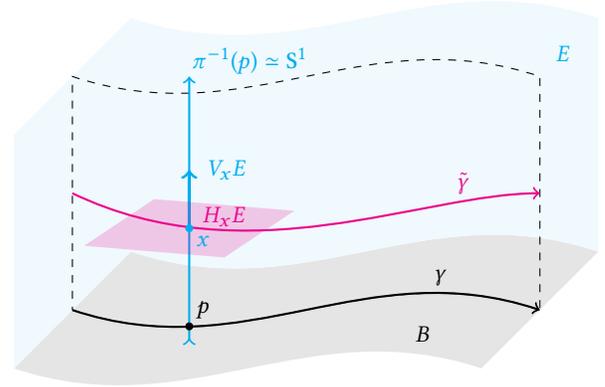
Let $\pi: E \to B$ be an orientable circle bundle over a surface $B$. Because the fibers are circles, each point $x \in E$ has a distinguished \emph{vertical} direction, given by the tangent vector $\vvf_x$ to the fiber. $\vvf_x$ spans the \textbf{vertical subspace} $V_x E \subset T_x E$, and we will refer to $\vvf$ as the \textbf{vertical vector field}. Unlike vertical directions, horizontal directions are not canonically defined by the bundle geometry. The additional structure of a connection is needed to explicitly fix them. The following definition formalizes this idea, applicable to fiber bundles more generally.

\begin{definition}[\protect{\cite[Definition 10.1]{nakaharaGeometryTopologyPhysics2003}}]
An \textbf{Ehresmann connection} on $E$ is a smoothly varying selection of a subspace $H_x E \subset T_x E$ of the tangent space to the bundle at each point $x \in E$, transverse to the vertical subspace $V_x E$, and preserved under vertical translation. $H_x E$ is called a \textbf{horizontal subspace}, $HE$ is accordingly the \textbf{horizontal sub-bundle}, and vectors in $HE$ are referred to as \textbf{horizontal}.
\end{definition}
 Intuitively, the connection specifies the rules for parallel transport. Just as the Levi-Civita connection allows one to parallel transport a tangent vector (i.e., a point in the tangent bundle) along a Riemannian manifold, an Ehresmann connection describes how to parallel transport a point in a fiber bundle $E$ along a path in the base space.
To parallel transport a point $x \in E$ along a curve $\gamma : [0, 1] \to B$ with $\pi(x) = \gamma(0)$, one chooses a \textbf{horizontal lift} of the curve $\tilde{\gamma} : [0, 1] \to E$, tangent to $HE$:
\begin{equation}
	\pi \circ \tilde\gamma = \gamma \qquad \dot{\tilde\gamma}(t) \in H_{\tilde\gamma(t)}E
\end{equation}
This amounts to solving a first-order \textsc{ode} on $E$ and provides a geometric notion of parallel transport on fiber bundles (\Cref{minsec:fig:bundle-schematic}).

The connection can be specified more concretely in local coordinates. Take an open set $U \subset B$ endowed with a local trivialization $U \times \Sph^1 \cong E_U \to U$. 
Locally on $E_U$, one can define the horizontal subspaces by a connection form $\tau \in \Omega^1(E_U)$ satisfying the transversality condition $\iota_{\vvf} \tau = 1$, where $\iota$ denotes the interior product or pairing between vector fields and $1$-forms. One takes $H_x E$ to be the subspace of $T_x E$ annihilated by $\tau$, and the transversality condition ensures that $\vvf_x$ does not fall in $H_x E$.
Finally, invariance under vertical translations amounts to the requirement that $\Lie_{\vvf} \tau = 0$. Using Cartan's formula, we can rewrite this as
\begin{equation}
	0 = \Lie_{\vvf} \tau = \dext \iota_{\vvf} \tau + \iota_{\vvf} \dext \tau = \iota_{\vvf} \dext \tau. \label{minsec:eq.dtau-horiz}
\end{equation}
When $E$ is the unit tangent bundle $\Sph TB$, the natural choice of connection is the restriction of the Levi-Civita connection on $TB$.

\subsection{Metric on the Bundle}
\label{minsec:subsec:bundle-metric}
In this section, we use the machinery of \Cref{minsec:subsec:ehresmann} to describe the concrete metric on our circle bundle. We confirm that, with this choice of metric, the area functional indeed measures field energy, proving \Cref{minsec:prop:area-func}.

The \textbf{Sasaki metric} \cite{sasaki_differential_1958} is a Riemannian metric $g$ induced by the connection on a fiber bundle $E$. When $E$ is a circle bundle, it is also known as the Kaluza-Klein metric after the famous work of Kaluza and Klein on geometrizing electromagnetism \cite{kaluzaUnitatsproblemPhysik1921,kleinQuantentheorieUndFuenfdimensionale1926}. See also \citet{padilla_bubble_2019} for a recent application in graphics. The metric is specified by stipulating that it is homogeneous along the fibers, makes the vertical and horizontal subspaces orthogonal, and agrees with the metric on the base:
\begin{equation} VE \perp_g HE, \qquad g\mid_{HE} = \pi^* g_B. \end{equation}
Together, these constraints ensure that the area of a \emph{parallel} or horizontal section is minimal.
For a circle bundle, these constraints leave one free parameter, the length of the fiber $\ell$, or equivalently, the fiber radius $r = \ell / 2\pi$.
Choosing $r$ fixes the pointwise norm $|\vvf|_g = r$ as
\begin{equation}
	2\pi r = \ell = \int_{\pi^{-1}(x)} |\vvf|_g \tau = 2\pi |\vvf|_g,
\end{equation}
and in matrix form, the resulting metric is
\begin{equation} 	\mathbf{v}^\top g \mathbf{w} = \begin{pmatrix}
	\mathbf{v}_H \\ v_V
\end{pmatrix} \begin{pmatrix}
g_B & 0 \\ 0 & r^2
\end{pmatrix} \begin{pmatrix}
	\mathbf{w}_H \\ w_V
\end{pmatrix}.
\end{equation}
Henceforth, we will assume a radius $r$ has been chosen and drop the subscript $g$ where understood. In \Cref{minsec:sec:intuition} we saw that changing $r$ changes the behavior of the area functional from Dirichlet-like to total variation--like. In \Cref{minsec:sec:results} we explore the practical implications of this parameter.

Having introduced the Sasaki metric, we can now prove \Cref{minsec:prop:area-func}, demonstrating how the area functional generalizes typical field energies:
\begin{proof}[Proof of \Cref{minsec:prop:area-func}]
Let $E_U \to U$ be a local trivialization with coordinates $(x^1, \dots, x^n, \theta)$, and parametrize the graph of $\sigma$ by $\tilde\sigma(x) = (x, \sigma(x))$. Denote $\partial_i \coloneqq \partial/\partial x^i$. Then
\[ \dext \tilde\sigma(\partial_i) = \partial_i + \partial_i\sigma \, \partial_\theta, \]
while the horizontal lift of $\partial_i$ can be expressed locally as
\[ (\partial_i)^H = \partial_i - \tau(\partial_i) \partial_\theta. \]
Evaluating the area form on the graph $\Sigma$, we obtain
\[
	\tilde\sigma^* \dext A_\Sigma(\partial_1, \dots, \partial_n) = \dext A_\Sigma(\dext \tilde\sigma(\partial_1), \dots, \dext \tilde\sigma(\partial_n)) = \sqrt{\det \tilde{g}},
\]
where
\begin{align*}
\tilde{g}_{ij} &= \langle \dext \tilde\sigma(\partial_i), \dext \tilde\sigma(\partial_j) \rangle \\
&= \langle (\partial_i)^H + (\partial_i\sigma + \tau(\partial_i)) \partial_\theta, (\partial_j)^H + (\partial_j\sigma + \tau(\partial_j)) \partial_\theta \rangle \\
&= \langle \partial_i, \partial_j\rangle_B + r^2 (\Dext_i \sigma)(\Dext_j \sigma)
\end{align*}
with the covariant derivative $\Dext_i \sigma \coloneqq \partial_i\sigma + \tau(\partial_i)$ induced by the connection.
Hence $\det \tilde{g} = (1 + r^2 \|\Dext \sigma\|_{g_B^{-1}}^2)\det g_B$, i.e.,
\[ \tilde\sigma^* \dext A_\Sigma = \sqrt{1 + r^2 |\Dext \sigma|^2} \; \dext A_B. \qedhere \]
\end{proof}

\subsection{Differential Forms on the Bundle}
\label{minsec:subsec:bundle-decomp}

As currents are dual to differential forms, we establish some helpful facts about forms on the bundle. Discretizing a curved bundle with finite elements is challenging. Our discretization bypasses this by decomposing forms on the bundle into their Fourier components whose coefficients are \emph{horizontal homogeneous} forms. As these are identified with forms on the base surface, we can do all computations on the base. Below, we introduce the horizontal forms and the decomposition of forms on $E$ into their horizontal and vertical parts.
\begin{definition}[Horizontal forms]
A $p$-form $\xi$ is \textbf{horizontal} if it satisfies $\iota_\vvf \xi = 0$. $\xi$ is \textbf{horizontal homogeneous} if it additionally satisfies $\iota_{\vvf} \dext \xi = 0 \implies \Lie_{\vvf} \xi = \dext \iota_{\vvf} \xi + \iota_{\vvf} \dext \xi = 0$.
We denote the horizontal homogeneous forms on $E_U$ by $\Omega_H^p(E_U)$.
\end{definition}
Horizontal and horizontal homogeneous forms are known elsewhere as \emph{semi-basic} and \emph{basic} respectively \cite{iveyCartanBeginnersDifferential2016}.
Horizontal forms can be identified pointwise with forms on the base, but their values may vary along the fiber direction.
The requirement that $\Lie_{\vvf} \dext \xi = 0$ ensures that a horizontal homogeneous form $\xi$ is invariant under global vertical translation.
As a result, horizontal homogeneous $p$-forms can be naturally identified with forms on the base:
\begin{proposition} \label{minsec:prop:horiz-base}
	$\pi^*$ is an isometry between horizontal homogeneous $p$-forms $\Omega^p_H(E)$ and $p$-forms on the base $\Omega^p(B)$.
\end{proposition}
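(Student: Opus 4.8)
The plan is to exhibit $\pi^*$ as a bijection $\Omega^p(B) \to \Omega^p_H(E)$ that preserves the pointwise inner product, so that the two defining properties of the Sasaki metric play distinct roles: the orthogonal splitting together with $g\mid_{HE} = \pi^* g_B$ will give metric preservation, while invariance of the horizontal subbundle under vertical translation (part of the Ehresmann connection) together with homogeneity will give well-definedness of the inverse. First I would check that $\pi^*$ actually lands in $\Omega^p_H(E)$. Horizontality is immediate: since $\dext\pi(\vvf) = 0$, we have $\iota_\vvf \pi^*\omega = 0$ for every $\omega$, because every argument of $\pi^*\omega$ is pushed forward by $\dext\pi$. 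Homogeneity then follows from Cartan's formula $\Lie_\vvf \pi^*\omega = \iota_\vvf \dext \pi^*\omega + \dext \iota_\vvf \pi^*\omega$: the second term vanishes by horizontality, and the first vanishes because $\dext$ commutes with pullback, so $\dext\pi^*\omega = \pi^*\dext\omega$ is again horizontal.

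Injectivity is easy: as $\pi$ is a surjective submersion, every base tangent vector admits a horizontal lift, so $\pi^*\omega = 0$ forces $\omega = 0$. The substance of the argument is surjectivity, which I expect to be the main obstacle. Given $\xi \in \Omega^p_H(E)$, I would define a candidate base form $\omega$ pointwise by $\omega_p(v_1, \dots, v_p) \coloneqq \xi_x(\tilde v_1, \dots, \tilde v_p)$, where $x$ is any point of the fiber over $p$ and $\tilde v_i \in H_x E$ is the horizontal lift of $v_i$. The crux is independence of the choice of $x$ within the fiber. Writing $x' = \Phi_t(x)$ for the time-$t$ flow $\Phi_t$ of $\vvf$ (which sweeps out the entire circular fiber), invariance of $HE$ under vertical translation gives $\dext\Phi_t(\tilde v_i)$ as the horizontal lift at $x'$, so the value at $x'$ equals $(\Phi_t^*\xi)_x(\tilde v_1, \dots, \tilde v_p)$; homogeneity $\Lie_\vvf \xi = 0$ means $\Phi_t^*\xi = \xi$, so the value is unchanged. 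Smoothness of $\omega$ follows from local triviality, and $\pi^*\omega = \xi$ because both are horizontal and agree on horizontal lifts by construction.

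Finally I would verify the isometry property. The orthogonal splitting $T_x E = H_x E \oplus V_x E$ induces a splitting of $\Lambda^p T_x^* E$, under which a horizontal $p$-form lies in $\Lambda^p H_x^* E$ with norm computed using only $g\mid_{H_x E}$. By the Sasaki condition $g\mid_{HE} = \pi^* g_B$, the map $\dext\pi\mid_{H_x E} : H_x E \to T_{\pi(x)} B$ is a linear isometry, hence so is its adjoint $(\dext\pi\mid_{H_x E})^*$ on covectors and, by functoriality of $\Lambda^p$, on the $p$th exterior power. Since the restriction of $\pi^*\omega$ to $H_x E$ is exactly $(\dext\pi\mid_{H_x E})^*\omega_{\pi(x)}$, we conclude $|\pi^*\omega|_{g} = |\omega|_{g_B}\circ\pi$ pointwise, and polarization upgrades this to preservation of the full inner product. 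Combined with the bijection above, this shows $\pi^*$ is an isometric isomorphism, as claimed.
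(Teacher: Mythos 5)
Your proposal is correct and follows essentially the same route as the paper: your pointwise descent via horizontal lifts is exactly the paper's operator $P_\tau$ (the paper averages over the fiber against $\tau/2\pi$, which reduces to pointwise evaluation by homogeneity), and your verification that $\pi^*$ inverts this descent uses the same observation that any lift differs from the horizontal one by a vertical vector annihilated by a horizontal form. The only difference is cosmetic: for the isometry you invoke functoriality of $\Lambda^p$ applied to the linear isometry $\dext\pi\mid_{H_xE}$, whereas the paper takes a supremum over unit vectors for $1$-forms, polarizes, and asserts the extension to higher degree---your version just makes that last extension slightly more explicit.
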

\begin{proof}
Given a horizontal homogeneous $p$-form $\eta$ and $p$ vectors $\mathbf{v}_1, \dots, \mathbf{v}_p \in T_x B$, take horizontal lifts $\tilde{\mathbf{v}}_1, \dots, \tilde{\mathbf{v}}_p \in H_yE_U$ for each $y$ in the fiber $\pi^{-1}(x)$, and define the descent
\[ (P_{\tau} \eta)_x(\mathbf{v}_1, \dots, \mathbf{v}_p) \coloneqq \frac{1}{2\pi}\int_{\pi^{-1}(x)}\eta_y(\tilde{\mathbf{v}}_1, \dots, \tilde{\mathbf{v}}_p)\tau = \eta_x(\tilde{\mathbf{v}}_1, \dots, \tilde{\mathbf{v}}_p). \]
Then we claim that $P_{\tau} = (\pi^*)^{-1}$ on horizontal homogeneous forms.

To see this, let $\eta$ be a horizontal homogeneous form, and let $\mathbf{w}_1, \dots, \mathbf{w}_p \in T_x E$ be a collection of vectors. Then
\begin{align*} (\pi^* P_\tau \eta)(\mathbf{w}_1, \dots, \mathbf{w}_p) &= (P_\tau \eta)(\pi_\# \mathbf{w}_1, \dots, \pi_\# \mathbf{w}_p) \\
&= \eta(\tilde{\mathbf{w}}_1, \dots, \tilde{\mathbf{w}}_p),
\end{align*}
where $\tilde{\mathbf{w}}_j$ are horizontal and agree with $\mathbf{w}_j$ under $\pi_\#$. Thus, $\tilde{\mathbf{w}}_j - \mathbf{w}_j \in \ker\pi_\# = V_x E = \Span\{\vvf_x\}.$ As $\iota_{\vvf} \eta = 0$, we have $\pi^* P_\tau \eta = \eta$ as desired.

To see that this map is an isometry, consider a horizontal $1$-form $\omega \in \Omega_H^1(E_U)$. Then 
\[ |\omega|_g = \sup_{|\mathbf{v}|_g \le 1} \omega(\mathbf{v}) = \sup_{|\mathbf{v}|_g \le 1} (P_\tau \omega)(\pi_\#\mathbf{v}) = \sup_{|\mathbf{w}|_{g_B}\le 1} (P_\tau\omega)(\mathbf{w}) = |P_\tau \omega|_{g_B}, \]
where the first equality follows because $\omega$ is zero on the vertical part of $\mathbf{v}$. It follows by polarization that the induced metric $g$ on horizontal forms agrees with that induced by $g_B$ on their projections, and this extends to higher-degree forms.%
\end{proof}

In particular, \eqref{minsec:eq.dtau-horiz} along with $\dext^2 = 0$ implies that $\dext \tau$ is horizontal homogeneous. It is identified with the negative of the \textbf{curvature $2$-form} $\kappa$ on the base $B$, $\dext \tau = -\pi^* \kappa$.

Just as for vectors in $TE$, $p$-forms on $E$ can be decomposed into their horizontal and vertical parts. To see this decomposition, we define linear projection operators $\Pi_H \coloneqq \iota_{\vvf} \circ (\tau \wedge)$ and $\Pi_V \coloneqq (\tau \wedge) \circ \iota_{\vvf}$, which have the properties
\begin{equation}
	\Pi_H + \Pi_V = \id_{\Omega^p}, \qquad \Pi_H \circ \Pi_V = \Pi_V \circ \Pi_H = 0.
\end{equation} 
Then a $p$-form $\xi \in \Omega^p(E_U)$ decomposes as follows:
\begin{gather} \xi = \tau \wedge \xi_V + \xi_H \label{minsec:eq:form-decomp} \\
\xi_H \coloneqq \Pi_H \xi \qquad \tau \wedge \xi_V \coloneqq \Pi_V \xi, \end{gather}
i.e., $\xi_V = \iota_{\vvf} \xi$.
We call $\xi_V$ and $\xi_H$ the vertical and horizontal parts of $\xi$, respectively, though both are horizontal forms as $\iota_\vvf^2 = 0$ implies $\iota_{\vvf} \xi_V = \iota_{\vvf} \xi_H = 0$.

\subsection{Metric on Forms} \label{minsec:subsec:forms-metric}
The matrix of the metric on $1$-forms is the inverse of that on vectors. In particular, $|\tau|_g = r^{-1}$. We recall for convenience some metric-dependent operators relevant to subsequent derivations.
\paragraph{Hodge Star.}
The Hodge star on the total space $E$ decomposes into blocks consisting of Hodge stars on the base. Letting $r = \ell/2\pi$ be the fiber radius, we have
\begin{equation} \begin{pmatrix} \Pi_H \\ \iota_{\vvf} \end{pmatrix}\hodge^E_p = \begin{pmatrix} 0 & r^{-1}\hodge^B_{p-1} \\ (-1)^p r\hodge^B_p & 0 \end{pmatrix}\begin{pmatrix} \Pi_H \\ \iota_{\vvf} \end{pmatrix}. \end{equation}

\paragraph{Musical Isomorphisms.} We will use the musical isomorphisms $\sharp: \Omega^1(E) \to \Gamma(TE)$ and $\flat: \Gamma(TE) \to \Omega^1(E)$ to convert between $1$-forms and vector fields in our discretization. These have the form
\begin{equation}
	\begin{pmatrix}
		\xi_H \\ \xi_V
	\end{pmatrix}^\sharp = \begin{pmatrix}
		\xi_H^\sharp \\ r^{-2}\xi_V
	\end{pmatrix},
\end{equation}
where $\xi_H^\sharp$ is given by the corresponding $\sharp$ on the base.

\section{Operator Details}
\subsection{Derivation of Frequency Decomposition}
\label{minsec:app.op-decomp}
Writing $\xi$ in the form \eqref{minsec:eq.bundle-fourier}, the exterior derivative decomposes in a nice fashion:
\[ \dext \xi = \sum_k e^{ik\theta} [ik \dext \theta \wedge (\tau \wedge \alpha_k + \beta_k) + \dext \tau \wedge \alpha_k - \tau \wedge \dext \alpha_k + \dext \beta_k].\]
So 
\begin{align*}
\Pi_V \dext \xi &= \tau \wedge (\iota_{\vvf}\dext \xi) \\
&= \sum_k e^{ik\theta} \tau \wedge [ik (\tau - \dext\theta) \wedge \alpha_k + ik\beta_k - \dext \alpha_k] \\
\Pi_H \dext \xi &= \iota_{\vvf} (\tau \wedge \dext \xi) \\
&= \sum_k e^{ik\theta} \iota_{\vvf} [ik \tau \wedge \dext\theta \wedge \beta_k + \tau \wedge \dext \tau \wedge \alpha_k + \tau \wedge \dext \beta_k] \\
&= \sum_k e^{ik\theta} [ik (\dext\theta - \tau) \wedge \beta_k + \dext \tau \wedge \alpha_k + \dext \beta_k],
\end{align*}
where we have used that $\iota_{\vvf} \tau = \iota_{\vvf} \dext\theta = 1$ and $\iota_{\vvf} \dext \tau = 0$.

Let's express the Hodge Laplacians on the bundle in terms of operators on the base. We have (eliding the $\iota_{\vvf}$ and $\Pi_H$ operators and dropping various subscripts for convenience):
\begin{align*}
\delta_k^p &= (-1)^p(\hodge_E^{p-1})^{-1}\dext_k^{n+1-p}\hodge_E^p \\
&=
\begin{pmatrix} 0 & -r^{-1}\hodge^{-1} \\ (-1)^p r\hodge^{-1} & 0 \end{pmatrix}
\begin{pmatrix}
\Dext_k & \dext \tau \wedge \\
ik & -\Dext_k
\end{pmatrix}
\begin{pmatrix} 0 & r^{-1}\hodge \\ (-1)^p r\hodge & 0 \end{pmatrix} \\
&= \begin{pmatrix} 0 & -r^{-1}\hodge^{-1} \\ (-1)^p r\hodge^{-1} & 0 \end{pmatrix}
\begin{pmatrix}
	(-1)^p r(\dext \tau \wedge)\hodge & r^{-1}\Dext_k\hodge \\
	(-1)^{p+1}r\Dext_k\hodge & r^{-1}ik\hodge
\end{pmatrix} \\
&= \begin{pmatrix}
	(-1)^p \hodge^{-1} \Dext_k \hodge & -r^{-2}ik \\
	r^2 \hodge^{-1}(\dext\tau\wedge)\hodge & (-1)^p \hodge^{-1}\Dext_k\hodge
\end{pmatrix}
\end{align*}
Hence
\begin{align*}
&\dext_k^{p-1}\delta_k^p \\
&\;= 
\begin{pmatrix}
\Dext_k & \dext \tau \wedge \\
ik & -\Dext_k
\end{pmatrix}
\begin{pmatrix}
	(-1)^p \hodge^{-1} \Dext_k \hodge & -r^{-2}ik \\
	r^2 \hodge^{-1}(\dext\tau\wedge)\hodge & (-1)^p \hodge^{-1}\Dext_k\hodge
\end{pmatrix} \\
&\;= \begin{psmallmatrix}
	r^2 (\kappa\wedge)\hodge^{-1}(\kappa\wedge)\hodge + (-1)^p \Dext_k \hodge^{-1} \Dext_k \hodge &
	(-1)^{p+1}(\kappa\wedge)\hodge^{-1}\Dext_k\hodge - r^{-2} ik \Dext_k \\
	r^2\Dext_k\hodge^{-1}(\kappa\wedge)\hodge +(-1)^p ik\hodge^{-1}\Dext_k\hodge &
	r^{-2} k^2 + (-1)^{p-1} \Dext_k \hodge^{-1}\Dext_k\hodge
\end{psmallmatrix} \\
&\delta_k^{p+1} \dext_k^p \\
&\;=
\begin{pmatrix}
	(-1)^{p+1} \hodge^{-1} \Dext_k \hodge & -r^{-2} ik \\
	r^2\hodge^{-1}(\dext\tau\wedge)\hodge & (-1)^{p+1} \hodge^{-1}\Dext_k\hodge
\end{pmatrix}
\begin{pmatrix}
\Dext_k & \dext \tau \wedge \\
ik & -\Dext_k
\end{pmatrix} \\
&\;= \begin{psmallmatrix}
	r^{-2} k^2 + (-1)^{p+1}\hodge^{-1} \Dext_k \hodge \Dext_k &
	r^{-2} ik \Dext_k + (-1)^{p}\hodge^{-1} \Dext_k \hodge (\kappa\wedge) \\
	(-1)^{p+1}ik\hodge^{-1}\Dext_k\hodge - r^2 \hodge^{-1}(\kappa\wedge)\hodge \Dext_k &
	(-1)^p\hodge^{-1}\Dext_k\hodge \Dext_k + r^2 \hodge^{-1}(\kappa\wedge)\hodge(\kappa\wedge)
\end{psmallmatrix},
\end{align*}
where we've substituted $-\kappa$ for $\dext \tau$. Finally,
\begin{align*}
\Delta_E^p &= \dext \delta + \delta \dext \\
	&= \begin{psmallmatrix}
	\Delta_{\Dext_k}^p + r^{-2} k^2 + r^2(\kappa\wedge)\hodge^{-1}(\kappa\wedge)\hodge &
	(-1)^p[\hodge^{-1}\Dext_k\hodge, (\kappa\wedge)] \\
	r^2[\Dext_k, \hodge^{-1}(\kappa\wedge)\hodge] &
	\Delta^{p-1}_{\Dext_k} + r^{-2}k^2 + r^2\hodge^{-1}(\kappa\wedge)\hodge(\kappa\wedge)
\end{psmallmatrix}.
\end{align*}

\subsection{Parallel Transport}
\label{minsec:app:op-pt}
We discretize the Levi-Civita connection of the base in terms of discrete parallel transport operators. Each operator $\rho_{a\to T}$ takes vectors in the tangent space at a vertex $a$ to vectors in the tangent space to a triangle $T$ adjoining $a$. Each tangent space is assigned an (arbitrary) intrinsic frame and thus identified with the complex plane. So $\rho_{a \to T}$ is expressed as a complex phase.

To construct $\rho_{a \to T}$, we first compute the ``extrinsic'' parallel transport operator, i.e., the principal rotation $R_{a \to T}$ mapping the vertex normal $\vect{n}_a$ to the face normal $\vect{n}_T$:
\begin{equation}
	R_{a \to T} \coloneqq \argmax_{\substack{R \in \SO(3) \\ R \vect{n}_a = \vect{n}_T}} \quad \tr R
\end{equation}

Given frames $F_a$ and $F_T$, a vector $\vect{v} \in T_a B$ (expressed in $F_a$ coordinates) is sent to $F_T^\dag R_{a \to T} F_a \vect{v}$ in the tangent space to $T$ (expressed in $F_T$ coordinates). Thus, we have
\begin{equation}
	F_T^\dag R_{a \to T} F_a = \begin{pmatrix}
		u & -v \\ v & u
	\end{pmatrix}
\end{equation}
for some real values $u$ and $v$. This is nothing but the intrinsic parallel transport operator $\rho_{a \to T}$, expressed as a real matrix. Accordingly, we set $\rho_{a \to T} \coloneqq u + iv$.

\section{Vertical Symmetry and Relation to Conformal Flattening}
\label{minsec:subsec:symmetry}
We can formulate the minimal section problem \eqref{minsec:prob.sym} or \eqref{minsec:prob.primal} on a closed surface, but the resulting convex relaxation suffers from a symmetry that makes it hard to achieve exact recovery.
Eliminating the boundary conditions, the remainder of \eqref{minsec:prob.primal} is symmetric under global translation along the fiber direction. Recall a general principle of convex analysis: if a (lower semicontinuous) convex functional $f$ is invariant under a group of symmetries $G$ acting by $(g\cdot f)(x) \coloneqq f(g^{-1}x)$, then it has an invariant optimum  $x^* = G x^*$. Thus we can consider without loss of generality a solution $\Sigma$ invariant under vertical translation,
\begin{equation}
	0 = \Lie_\vvf \Sigma = \dext \iota_\vvf \Sigma + \iota_\vvf \dext \Sigma = \dext \iota_\vvf \dext f = \dext (\partial_\theta f).
\end{equation}
Since $\theta$ is periodic, $\partial_\theta f = 0$, i.e., $f$ must be constant along the fibers. Therefore, problem \eqref{minsec:prob.primal} becomes
\begin{equation}
\begin{aligned}
 \min \quad & \ell \int_B \hodge |\bar\tau + \dext f + \hodge \dext \phi|_2 + \lambda \int_B \hodge |\Gamma| \\
 \subj \quad &\Delta \phi = \Gamma - \bar\kappa,
\end{aligned}
\tag{P-\textsc{sym}}
\label{minsec:prob.primal-symm}
\end{equation}
where we've eliminated the constraint $\Sigma_V \ge 0$ as $\Sigma_V = \bar\tau$.
We can write the first integral as
\begin{equation}
\begin{aligned}
	\int_B \hodge |\bar\tau + \dext f + \hodge \dext \phi|_2 &= \int_B \hodge \sqrt{r^{-2}/4\pi^2 + |\dext f + \hodge \dext \phi|_2^2} \\
	&\approx \ell^{-1} \int_B \left[1 + \frac{\ell^2}{2}|\dext f + \hodge \dext \phi|_2^2 \right] \dext A_B
\end{aligned}
\end{equation}
when the fiber radius $r$ is small. In the limit as $r \to 0$, the overall problem then becomes
\begin{equation}
\begin{aligned}
 \argmin \quad & \int_B \left[|\dext \phi|^2_2 + \frac{2\lambda}{\ell^2} |\Gamma| \right] \dext A \\
 \subj \quad &\Delta \phi = \Gamma - \bar\kappa,
\end{aligned}
\tag{P-\textsc{red}}
\label{minsec:prob.primal-reduced}
\end{equation}
where we have eliminated $f$ from the optimization problem using Hodge decomposition.
Notice the similarity to \eqref{minsec:eq.graph-area-reduced}, where the potential $\phi$ has replaced the angle $\theta$, and covariant differentiation has been replaced by ordinary differentiation.

The reduced problem \eqref{minsec:prob.primal-reduced} is a sparse inverse Poisson problem like that studied by \citet{soliman_optimal_2018}. They solve a similar problem (but with $\|\phi\|_2^2$ replacing $\|\dext\phi\|_2^2$) to compute cone singularities for conformal flattening, a setting in which quantization of the cone angles is unimportant.

In our case, however, quantization of singularities is essential to computing a well-defined section. Conversely, it is the concentration of the current $\Sigma$ onto a surface that enforces quantization. The upshot of this is that, to solve the minimal section problem with quantized singularities, we need boundary conditions to break the vertical translation symmetry.

\section{Parameter Scaling}
\label{minsec:app:scaling}

In the following, we analyze the relative scaling of the parameters $r$ and $\lambda$. Suppose that we have a minimal section $\Sigma$ that is \emph{integral}, i.e., supported on a surface, which is a graph away from its singularities $\Gamma$. For simplicity, we will assume all the singularities are simple (i.e., have degree $\pm 1$). Then $\Mass(\Gamma)$ simply counts the singularities.

We estimate the mass norm of $\Sigma$ as follows. Around each singularity $z_i$, excise a disk $B_i = B_R(z_i)$. Then we have
\begin{equation}
	\Mass(\Sigma) = \int_{B \setminus \cup B_i} \sqrt{1 + r^2 |\Dext \sigma|^2} \dext A + \sum_i \Mass(\Sigma_i),
\end{equation}
where $\Sigma_i = \Sigma \cap E\mid_{B_i}$, and $\Sigma = \gr \sigma$ away from the singularities. We approximate the masses of the singular components $\Sigma_i$ by considering the idealized case of a helicoid. The area of a helicoid of height $2\pi r$ and radius $R = kr$ is
\begin{equation}
	\pi r^2 (k\sqrt{1 + k^2} + \arcsinh k).
\end{equation}
On the other hand, we have $|\Dext \sigma|^2 = R^{-2}$ on $\partial B_i$. Assuming that $|\Dext \sigma|^2 \le k^{-2}$ away from the singularities, we have the estimate
\begin{equation}
\begin{aligned}
	\Mass(\Sigma) &\le (\Area(B) - \pi r^2 k^2 \Mass(\Gamma))\left(1+\frac{k^{-2}}{2}\right) \\
	&\quad + \pi r^2 \Mass(\Gamma) 
	\left(k\sqrt{1 + k^2} + \arcsinh k \right).
\end{aligned}
\end{equation}
For $k = 1$, this becomes
\begin{equation}
\begin{aligned}
	\Mass(\Sigma) \le \frac{3}{2} \Area(B) + \pi r^2 \Mass(\Gamma)\left(
	\arcsinh 1 + \sqrt{2} - \frac{3}{2} \right).
\end{aligned}
\end{equation}
Hence in the objective function $\Mass(\Sigma) + \lambda \Mass(\Gamma)$, $\lambda$ is renormalized by the addition of a factor proportional to $r^2$.

\end{document}